\newtheorem{theorem}{Theorem}
\newtheorem{lemma}{Lemma}
\newtheorem{corollary}{Corollary}
\newtheorem{definition}{Definition}
\newtheorem{example}{Example}
\newtheorem{remark}{Remark}
\newcommand{\trans}{{\mathrm{T}}}
\newcommand{\E}{\mathbb{E}}
\newcommand{\VaR}{\operatorname{VaR}}
\DeclareMathOperator*{\esssup}{ess\,sup}
\DeclareMathOperator*{\essinf}{ess\,inf}
\newcommand{\ES}{\operatorname{ES}}
\renewcommand{\P}{\mathbb{P}}
\newcommand{\Q}{\mathbb{Q}}
\newcommand{\R}{\mathbb{R}}
\newcommand{\Var}{\operatorname{Var}}
\renewcommand{\phi}{\varphi}
\newcommand{\calF}{\mathcal{F}}
\newcommand{\calG}{\mathcal{G}}
\newcommand{\filF}{\mathbb{F}}
\newcommand{\filG}{\mathbb{G}}
\newcommand{\calS}{\mathcal{S}}
\newcommand{\calI}{\mathbb{I}}
\title{The value of a liability cash flow in discrete time subject to capital requirements}
\author{Hampus Engsner\thanks{hampus.engsner@math.su.se; Department of Mathematics, Stockholm University}, Kristoffer Lindensj\"o\thanks{kristoffer.lindensjo@math.su.se; Department of Mathematics, Stockholm University} and Filip Lindskog\thanks{lindskog@math.su.se; Department of Mathematics, Stockholm University}}
\begin{document}
\maketitle

\begin{abstract}
The aim of this paper is to define the market-consistent multi-period value of an insurance liability cash flow in discrete time subject to repeated capital requirements, and explore its properties. 
In line with current regulatory frameworks, the approach presented is based on a hypothetical transfer of the original liability and a replicating portfolio to an empty corporate entity whose owner must comply with repeated one-period capital requirements but has the option to terminate the ownership at any time. 
The value of the liability is defined as the no-arbitrage price of the cash flow to the policyholders, optimally stopped from the owner's perspective, taking capital requirements into account. 
The value is computed as the solution to a sequence of coupled optimal stopping problems or, equivalently, as the solution to a backward recursion.  
\end{abstract}

\section{Introduction}\label{intro}
The aim of this paper is to define the market-consistent multi-period value of a liability cash flow in discrete time subject to repeated capital requirements in accordance with current regulatory frameworks, and explore its properties. 
The valuation procedure will be studied within an insurance liability context. 
However, the valuation procedure could be used for any liability where the debtor has limited liability and faces capital requirements. 

Essentially, given an optimally selected replicating portfolio, the externally imposed capital requirements \emph{define} the market-consistent value of a liability as the value it would have if it were transferred to an empty corporate entity, called a \emph{reference undertaking}, whose owner has the option to terminate the ownership (limited liability, option to default). 

The liability should be interpreted as the aggregate liability of a company, i.e.~at the level on which capital requirements are imposed. 
The transfer and the valuation procedure can be summarized as follows:

\begin{enumerate}
\item 
We assume that the following items are transferred to a reference undertaking that has no other assets or liabilities:

i) the liability to be valued, 

ii) an asset portfolio, called the \emph{replicating portfolio}, with a cash flow meant to, at least partially, offset the liability cash flow, and 

iii) an amount in a num\'eraire asset, which is such that the reference undertaking precisely meets the imposed capital requirement at the time of the transfer.  

\item 
We assume that the owner of the reference undertaking has limited liability and therefore can choose to not further finance the reference undertaking at any future time - an option to default. We assume that the reference undertaking cannot change the transferred replicating portfolio.

\item 
We assume that the market is arbitrage free in the sense that an equivalent pricing measure exists. We derive the value of the reference undertaking by identifying it with the price of a particular American type financial derivative: the price of the discounted cumulative optimally stopped cash flow to the owner of the reference undertaking, taking limited liability and capital requirements into account.
Similarly, the value of the liability is defined as the price of the discounted cumulative cash flow to the policyholders, optimally stopped from the perspective of the owner of the reference undertaking. 
Due to the option to default, the cash flow the policyholders are entitled to is not identical to the cash flow they receive. 

\item 
The procedure outlined above results in a liability value that depends on the composition of the replicating portfolio and the capital requirements. We focus primarily on capital requirements in terms of conditional monetary risk measures such as Value-at-Risk and Expected Shortfall. We consider several criteria for choosing the replicating portfolio. 
Although the liability values obtained from the valuation procedure are market consistent only if the replicating portfolio criterion is chosen appropriately,  
the valuation procedure does not assume a particular replicating portfolio criterion nor a particular sequence of conditional monetary risk measures.
\end{enumerate}

The approach to market-consistent liability valuation presented in \cite{Moehr-11} has been the main source of inspiration for the current paper. 
In \cite{Moehr-11}, a valuation framework based on dynamic replication and cost-of-capital arguments was presented. In \cite{Engsner-Lindholm-Lindskog-17} a valuation framework, inspired by \cite{Moehr-11}, based on dynamic monetary risk measures and dynamic monetary utility functions was presented and explicit valuation formulas were derived under Gaussian model assumptions. An  essential difference between \cite{Moehr-11} and \cite{Engsner-Lindholm-Lindskog-17} is that in the latter setup the replicating portfolio transfered to the reference undertaking together with the liability is not allowed to be modified after the transfer of the liability, throughout the runoff of the liability.  
The same applies to the valuation approach presented in this paper. The owner of the reference undertaking only has the option to make a decision to terminate ownership of the reference undertaking (option to default).
In \cite{Engsner-Lindholm-Lindskog-17}, the replicating portfolio was assumed to be given and the analysis only focused on the multi-period valuation of the 
liability cash flow. Criteria for selection of a replicating portfolio were not analyzed. A large part of the current paper focuses on presenting properties of criteria for selection of the replicating portfolio.
The criterion advocated as most natural in this paper, in Section \ref{sec:rp}, says that a good replicating portfolio is one that makes the need for large equity capital in the reference undertaking small throughout the runoff of the liability.
Moreover, in the current paper the value of the liability, see Definitions \ref{Vt-definition} and \ref{Lt-definition}, is implied by no-arbitrage pricing of a derivative security with optionality written on the cumulative cash flow to the owner of the reference undertaking. We demonstrate, in Remark \ref{rem:coc_valuation}, that there is a correspondence between the choice of pricing measure used for pricing the derivative security and an adapted process of cost-of-capital rates that defines the capital providers' acceptability criteria for providing solvency capital throughout the runoff of the liability. 

Replicating portfolio theory for capital requirement calculation has attracted much interest in recent years. There, the value of a liability cash flow at a future time is modeled as a conditional expected value with respect to the market's pricing measure of the sum of discounted future liability cash flows. Since computation of this liability value is typically not feasible, one seeks an accurate approximation by replacing the liability cash flow (or its value) by that of a portfolio of traded replication instruments. 
Then, a risk measure is applied to the approximation of the liability value yielding an approximation of the capital requirement. 
In \cite{Cambou-Filipovic-16}, \cite{Natolski-Werner-14}, \cite{Natolski-Werner-16} and \cite{Natolski-Werner-17} various aspects of this replicating portfolio approach to capital calculations are studied. A fact that somewhat complicates the analysis is that risk measures defining capital requirements are defined with respect to the real-world probability measure $\P$, whereas the replication criteria are usually expressed in terms of the market's pricing measure $\Q$. Comparisons of properties and effects of different replication criteria are presented in \cite{Natolski-Werner-14}, \cite{Natolski-Werner-16} and \cite{Natolski-Werner-17}. In \cite{Cambou-Filipovic-16}, it is shown how replicating portfolio theory can be formulated in order to allow for efficient replication of liability values exhibiting path-dependence.
Common to the works \cite{Cambou-Filipovic-16}, \cite{Natolski-Werner-14}, \cite{Natolski-Werner-16} and \cite{Natolski-Werner-17} is that the liability value is defined as a conditional expected value of the sum of discounted liability cash flows. This is different from the approach presented here. As explained above, we do not price the liability cash flow the policyholders are entitled to but instead the cash flow they receive, taking limit liability and capital requirements into account. 

Dynamic risk measures and dynamic risk-adjusted values have been analyzed in great detail during the last decade, see e.g.~\cite{Artzner-Delbaen-Eber-Heath-Ku-07}, 
\cite{Bion-Nadal-08}, 
\cite{Cheridito-Delbaen-Kupper-06}, 
\cite{Cheridito-Kupper-09},
\cite{Cheridito-Kupper-11}, 
\cite{Detlefsen-Scandolo-05}
and the references therein for important contributions. Much of the research in this area has been aimed at establishing properties and representation results for dynamic risk measures in general functional analytic settings, particularly for bounded stochastic processes and under convexity requirements for the risk measures. We want to allow for models for unbounded liability cash flows. Moreover, limited liability for the owner of the reference undertaking in our setting implies that the dynamic valuation mappings appearing here will in general be non-linear, non-convex and non-concave regardless of additional structure imposed on the conditional risk measures defining the capital reqirements.  
We will only assume very basic properties of the conditional risk measures defining capital requirements, namely, so-called translation invariance, monotonicity and normalization. In particular, these weaker requirements allow for conditional versions of the risk measure Value-at-Risk that is extensively used in practice. 

Another approach to market-consistent liability valuation is presented in \cite{Pelsser-Stadje-14}, combining no-arbitrage valuation and actuarial valuation into a general framework. 
\cite{Pelsser-Stadje-14} introduces the notion two-step market evaluation. First actuarial pricing of the residual risk remaining after conditioning on the future development of prices of traded assets is done. Secondly, noticing that the outcome of the first step is a random variable expressed in terms of prices of traded assets, the classical linear financial pricing operator is applied to the outcome of the first step. Further, it is shown in \cite{Pelsser-Stadje-14} how the two-step market evaluation can be extended to a dynamic time-consistent evaluation. Our approach to valuation is a two-step market-consistent evaluation in a different sense based on a hypothetical transfer of a replicating portfolio and the liability to a reference undertaking subject to repeated capital requirements. First an optimal replicating portfolio is chosen. Then the liability is valued by applying a linear pricing operator to the cumulative cash flow that the policyholders will receive, taking capital requirements and the option to default of the owner of the reference undertaking into account. 
The economic value of the option to default and its effect on the value of an insurance liability is well known by insurers, see e.g. \cite{Hancock-Huber-Koch-01} and Remark \ref{rem:VS_Q_submartingale} in Section \ref{sec:framework}.
It should be emphasized that in our setting the pricing measure is just one out of infinitely many that correctly prices traded financial instruments in an incomplete arbitrage-free financial market. 
In particular, the flexibility to choose the pricing operator allows it to be chosen so that the pricing of non-replicable insurance risk can be interpreted as a cost-of-capital valuation with risk-averse capital providers, see Remark \ref{rem:coc_valuation} in Section \ref{sec:framework} for details.    


In \cite{Malamud-Trubowitz-Wuthrich-08} a framework for the pricing of insurance products and insurance liabilities is developed. The approach relies on utility indifference pricing and optimal trading in the financial market.
The possibility of ruin (default) is not considered in \cite{Malamud-Trubowitz-Wuthrich-08} which makes the setting quite different from the one in the present paper. 

Motivated by the regulatory framework Solvency II, and in a 
multi-period incomplete-market setting similar to the one in the present paper, 
best estimate reserves are studied in depth in \cite{Happ-Merz-Wuthrich-15} as part of valuing a liability as a sum of a best-estimate and a risk margin. 
In \cite{Happ-Merz-Wuthrich-15}, best estimate is defined as the value of an optimal portfolio strategy hedging the liability cash flow. 
The value of a liability in our setting can, see Definition \ref{Lt-definition}, be expressed as the sum of the market price of the replicating portfolio and a term depending on the residual (after replication) liability cash flow. 
However, the best estimate reserve is not an object that appears naturally in the present paper 
unless one identifies the best estimate reserve with the replicating portfolio - which is not dynamic in the sense that it is not allowed to be modified by the owner of the reference undertaking.


The paper is organized as follows: 
The general liability valuation framework is presented in Section \ref{sec:framework}. 
The three main ingredients are as follows: (1) the value of ownership of the reference undertaking is defined, consistent with classical financial arbitrage valuation, as the no-arbitrage value of the optimally stopped (discounted) net cash flow to the owner of the reference undertaking, (2) the value of the liability is defined as the no-arbitrage value of the (discounted) cash flow to the policyholders, stopped optimally from the perspective of the owner of the reference undertaking, and (3) these definitions are shown to be equivalent to two coupled backward recursions for the two values in (1) and (2), and the optimal stopping times are determined explicitly.
Sections \ref{sec:cmrm} and \ref{sec:rp} make the general framework operational by, 
in Section \ref{sec:cmrm}, linking the capital requirements to the liability cash flow in terms of conditional monetary risk measures, and, in Section \ref{sec:rp}, presenting criteria for optimal (in various senses) selection of the replicating portfolio.   

Applying the valuation framework leads to numerical challenges similar to those appearing when valuing American type financial derivatives. In particular, closed-form solutions are rare exceptions. 
In Section \ref{sec:gaussian} it is shown that under Gaussian model assumptions for the dynamics under both $\P$ and $\Q$ everything can be computed explicitly. 
  
All proofs are found in Section \ref{sec:proofs}. 

\section{The valuation framework}\label{sec:framework}

We consider time periods $1,\dots,T$, corresponding time points $0,1,\dots,T$, and a filtered probability space $(\Omega,\calF,\filF,\P)$, where $\filF=(\calF_t)_{t=0}^{T}$ with $\{\emptyset, \Omega\}=\calF_0\subseteq \dots \subseteq \calF_{T}=\calF$, and $\P$ denotes the real-world measure. We write $L^p(\calF_t,\P)$ for the normed linear space of $\calF_t$-measurable random variables $X$ with norm $\E^{\P}[|X|^p]^{1/p}$. Equalities and inequalities between random variables should be interpreted in the $\P$-almost sure sense.
We use the conventions $\sum_{l=k}^{k-1}:=0$ and $\inf\emptyset:=+\infty$ for sums over an empty index set and the infimum of an empty set. We use the notation $(x)_+:=\max(0,x)$.

We assume a given num\'eraire process $(N_t)_{t=0}^T$ and that all financial values are discounted by this num\'eraire. In particular, the time value of money will not appear explicitly at any place. Although the choice of num\'eraire is irrelevant for the analysis, we take the num\'eraire to be the bank account num\'eraire: $N_0=1$ and $N_t$ is the amount at time $t$ from rolling forward an initial unit investment in one-period risk-free bonds. 
A risk-free cash flow $(c_t)_{t=0}^T$ is a sequence of $\calF_0$-measurable $c_t$, corresponding to rolling forward the amounts $c_t$ in the num\'eraire from time $0$ to time $t$. 

We assume that there exists a strictly positive $(\P,\filF)$-martingale $(D_t)_{t=0}^T$ with $\E^{\P}[D_T]=1$ defining the equivalent pricing measure $\Q$ of an arbitrage-free incomplete financial market via $D_t=d\Q/d\P\mid\calF_t$, i.e. for $u>t$ and a sufficiently integrable $\calF_{u}$-measurable $Z$,
\begin{align*}
\E_t^{\Q}\big[Z\big]=\frac{1}{D_t}\E_t^{\P}\big[D_uZ\big],
\end{align*}
where subscript $t$ in $\E_t^{\Q}$ and $\E_t^{\P}$ means conditioning on $\calF_t$. 

We suppose that there in the financial market exists an insurance company with an aggregate insurance liability corresponding to a liability cash flow given by the $\filF$-adapted stochastic process $X^o=(X^o_t)_{t=1}^{T}$. Regulation forces the insurance company to comply with externally imposed capital requirements. The requirements put restrictions on the asset portfolio of the insurance company. 
A subset of the assets forms a replicating portfolio intended to, to some extent, offset the liability cash flow. Depending on the degree of replicability of the liability cash flow, the replicating portfolio could be anything from simply a position in the num\'eraire asset to a portfolio that is rebalanced dynamically according to a strategy known at time $0$ and fixed thereafter.
The cash flow of the replicating portfolio is given by an $\filF$-adapted stochastic process $X^r=(X^r_t)_{t=1}^T$. 
The value of the replicating portfolio may be expressed as $\sum_{t=1}^T\mathbb{E}_0^\mathbb{Q}[X^r_t]$. However, the liability cash flow cannot be valued as $\sum_{t=1}^T\mathbb{E}_0^\mathbb{Q}[X^o_t]$ since $X^o$ is the cash flow the policyholders are entitled to but not necessary (due to the option to default held by the owner of the reference undertaking) the cash flow they will receive. 
We will, in accordance with current solvency regulation 
(\cite{Moehr-11} and prescribed by EIOPA, see \cite[Article 38]{Commission-del-reg-15})
define the value of the liability cash flow $X^o$ by considering a hypothetical transfer of the liability and the replicating portfolio to a separate entity referred to as a reference undertaking. The reference undertaking has initially neither assets nor liabilities and its sole purpose is to manage the runoff of the liability. Ownership of the reference undertaking is achieved by buying it from the insurance company who has transferred its liabilities together with the replicating portfolio and a position $R_0$ in the num\'eraire asset to the reference undertaking. $R_0$ is an amount that makes the reference undertaking meet the imposed capital requirements. Classical arbitrage pricing arguments will determine the price $C_0$, specified below, for ownership of the reference undertaking.
The benefit of ownership is the right to receive certain dividends, defined below, until either the runoff of the liability cash flow is complete or until letting the reference undertaking default on its obligations to the policyholders. The term default means termination of ownership of the reference undertaking.
The precise detailed are as follows.

\begin{itemize}
\item 
At time $0$: The liabilities corresponding to the cash flow $X^o$, the replicating portfolio corresponding to the cash flow $X^r$ and an amount $R_0$ in the num\'eraire are transferred from the insurance company to the reference undertaking, where $R_0$ is the amount making the reference undertaking precisely meet the externally imposed capital requirement. 
\item 
By paying the amount $C_0$ to the original insurance company, the owner receives full ownership of the reference undertaking.  
Consequently, the net value of assets transferred from the original insurance company along with the liability is 
\begin{align*}
\E^{\Q}_0\Big[\sum_{t=1}^TX^r_t\Big]+R_0-C_0.
\end{align*}
\item 
At time $t=1$: The owner has the option to either default on its obligations to the policyholders or not to default. 

The decision to default means to give up ownership and transfer $R_0$ and the replicating portfolio to the policyholders. The owner neither receives any dividend payment nor incurs any loss upon a decision to default. 

If $T>1$ and given the decision not to default, a new amount $R_1$ in the num\'eraire asset is needed to make the reference undertaking precisely meet the externally imposed capital requirement. If $R_0-R_1-X^{o}_1+X^{r}_1\geq 0$, then the positive surplus $R_0-R_1-X^{o}_1+X^{r}_1\geq 0$ is paid to the owner and $X^{o}_1$, which the policyholders are entitled to, is paid to the policyholders. If $R_0-R_1-X^{o}_1+X^{r}_1<0$, then the owner faces a deficit that must be offset by injecting $-R_0+R_1+X^{o}_1-X^{r}_1>0$. Also in this case $X^{o}_1$ is paid to the policyholders. 

If $T=1$, then the above description of cash flows to policyholders and owner applies upon setting $R_1=0$.

\item At time $t\in \{2,\dots,T\}$: If the owner has not defaulted on its obligations, then the situation is completely analogous to that at time $t=1$ described above.
\end{itemize}

From the above follows that the owner of the reference undertaking has to decide on a decision rule defining under which circumstances default occurs. The default time is a stopping time $\tau \in \calS_{1,T+1}$, where $\calS_{t,T+1}$ denotes the set of $\filF$ stopping times taking values in $\{t,\dots,T+1\}$. The event $\{\tau=T+1\}$ is to be interpreted as a complete liability runoff without default at any time.

The cumulative cash flow to the owner can be written as
\begin{align}
\sum_{t=1}^{\tau-1}(R_{t-1} - R_t -X_t), \quad X_t:=X^{o}_t-X^{r}_t. \label{C-cashflow}
\end{align}
The value of this cash flow \eqref{C-cashflow} is, according to standard arbitrage theory, 
\begin{align}
\E_0^{\Q}\Big[\sum_{t=1}^{\tau-1}(R_{t-1} - R_t -X_t)\Big]. \label{C-value-tau}
\end{align}
We assume that the owner of the reference undertaking chooses a default time $\tau$ maximizing the value \eqref{C-value-tau}. Consequently, the value at time $0$ of the reference undertaking is 
\begin{align}\label{C0-expression}
\sup_{\tau \in \mathcal{S}_{1,T+1}}\E_0^{\Q}\Big[\sum_{t=1}^{\tau-1}(R_{t-1} - R_t -X_t)\Big].
\end{align}
For $t\in\{1,\dots,T\}$, the (discounted) value of the reference undertaking at time $t$, given no default at times $\leq t$, is given by the completely analogous expression upon replacing $\sup$ in \eqref{C0-expression} by the essential supremum $\esssup$ with respect to $\Q$ (see Appendix A.5 in \cite{Foellmer-Schied-16} for details) and conditioning on $\calF_t$ rather than $\calF_0$. 
Notice that since no cash flows occur at times $>T$, the value of the reference undertaking is zero at time $T$.
The value of the reference undertaking can thus be identified as the value of an American type derivative. Details on arbitrage-free pricing of American derivatives can be found in Section 6.3 in \cite{Foellmer-Schied-16}. 

\begin{definition}\label{Ct-definition}
Consider sequences $(X_t)_{t=1}^T$ and $(R_t)_{t=0}^T$ with $X_t\in L^{1}(\calF_t,\Q)$ for $t\in \{1,\dots,T\}$, $R_T=0$ and $R_t\in L^{1}(\calF_t,\Q)$ for $t\in \{0,\dots,T-1\}$. Define
\begin{align}
C_t &:= \esssup_{\tau \in \mathcal{S}_{t+1,T+1}}\E_t^{\Q}\Big[\sum_{s=t+1}^{\tau-1}(R_{s-1} - R_s -X_s)\Big],
\quad t\in\{0,\dots,T-1\},  \label{Ct-expression} \\
C_T &:=0. \nonumber
\end{align}
$C_t$ is the value of the reference undertaking at time $t$ given no default at times $\leq t$. 
\end{definition}

\begin{example}\label{ex:single_period1}
It may be instructive to consider owner's option to default in the simple one-period setting corresponding to $T=1$. 

The decision to default at time $t=1$ means that the policyholders receive the cash flow $R_0+X^r_1$ and that the owner of the reference undertaking neither pays nor receives anything.
Given the decision not to default at time $t=1$, the policyholders receive $X^o_1$ and the owner either receives the surplus $R_0-X^o_1+X^r_1$ if this amount is nonnegative or, otherwise, pays the positive amount $-(R_0-X^o_1+X^r_1)$ to offset the deficit. 

Since the owner has no obligation to pay the policyholders to offset a deficit at time $t=1$, it is clear that the option to default is exercised if and only if $R_0-X^o_1+X^r_1<0$. Consequently, the value of the reference undertaking at time $t=0$ is $\E^{\Q}_0[(R_0-X^o_1+X^r_1)_+]$. Notice that 
\begin{align*}
\tau=\left\{\begin{array}{ll}
1 & \text{if } R_0-X^o_1+X^r_1<0,\\
2 & \text{if } R_0-X^o_1+X^r_1\geq 0
\end{array}\right. 
\end{align*}
gives 
\begin{align*}
\E^{\Q}_0[(R_0-X^o_1+X^r_1)_+]
&=\E^{\Q}_0[I\{\tau=1\}\cdot 0+I\{\tau=2\}(R_0-X^o_1+X^r_1)]\\
&=\sup_{\tau \in \mathcal{S}_{1,2}}\E_0^{\Q}\Big[\sum_{t=1}^{\tau-1}(R_{t-1} - R_t -X_t)\Big]
\end{align*}
which is seen to coincide with \eqref{Ct-expression} when $T=1$ and $t=0$. 
\end{example}

Consider $t=0$ and let $\tau^*_0$ denote an optimal default time such that the supremum in \eqref{C0-expression} is attained for $\tau=\tau^*_0$. Then the cumulative cash flow to the policyholders is, with $X:= X^o-X^r$, 
\begin{align*} 
\sum_{t=1}^{\tau^*_0-1}X^o_t+\sum_{t=\tau^*_0}^{T}X^r_t+R_{\tau^*_0-1} 
&=\sum_{t=1}^{T}X^r_t+\sum_{t=1}^{\tau^*_0-1}X_t+R_{\tau^*_0-1}\\ 
&=\sum_{t=1}^{T}X^r_t+R_0-\sum_{t=1}^{\tau^*_0-1}(R_{t-1}-R_t-X_t). 
\end{align*}
Therefore, the arbitrage-free value of the cash flow to the policyholders, i.e. the value of the original insurance company's liabilities, is  
\begin{align*}
\E_0^{\Q}\Big[\sum_{t=1}^TX^r_t\Big]+
\E_0^{\Q}\Big[R_0 - \sum_{t=1}^{\tau^*_0-1}(R_{t-1} - R_t -X_t)\Big],
\end{align*}
where the first term is the market price of the replicating portfolio and
the second term will be referred to as the residual liability value. 
For $t\in\{0,\dots,T\}$, the residual liability value at time $t$ given no default at times $\leq t$ is given by the completely analogous expression upon replacing $\tau^*_0$ by a default time $\tau^*_t$ that is optimal as seen from time $t$:
\begin{align*}
R_t-\E^\Q\Big[\sum_{s=t+1}^{\tau^*_t-1}(R_{s-1} - R_s -X_s)\Big] 
=R_t-C_t.
\end{align*} 
Notice that
\begin{align*}
R_t-C_t 
&= R_t-\esssup_{\tau\in\mathcal{S}_{t+1,T+1}}\E_t^{\Q}\Big[\sum_{s=t+1}^{\tau-1}(R_{s-1}-R_s-X_s)\Big] \\
&=\essinf_{\tau\in\mathcal{S}_{t+1,T+1}}\E_t^{\Q}\Big[R_t - \sum_{s=t+1}^{\tau-1}(R_{s-1}-R_s-X_s)\Big] \\
&=\essinf_{\tau \in \mathcal{S}_{t+1,T+1}}\E_t^{\Q}\Big[\sum_{s=t+1}^{\tau-1}X_s+R_{\tau-1}\Big]. 
\end{align*}

\begin{definition}\label{Vt-definition}
Consider sequences $(X_t)_{t=1}^T$ and $(R_t)_{t=0}^T$ with $X_t\in L^{1}(\calF_t,\Q)$ for $t\in \{1,\dots,T\}$, $R_T=0$ and $R_t\in L^{1}(\calF_t,\Q)$ for $t\in \{0,\dots,T-1\}$. Define
\begin{align}
V_t &:= \essinf_{\tau \in \mathcal{S}_{t+1,T+1}}\E_t^{\Q}\Big[\sum_{s=t+1}^{\tau-1}X_s+R_{\tau-1}\Big],
\quad t\in\{0,\dots,T-1\}, \label{Vt-expression} \\
V_T &:=0. \nonumber
\end{align}
$V_t$ is the residual liability value at time $t$ given no default at times $\leq t$.
\end{definition}

\begin{definition}\label{Lt-definition}
Consider sequences $(X^{o}_t)_{t=1}^T$, $(X^{r}_t)_{t=1}^T$, $(R_t)_{t=0}^T$ with $X^{o}_t,X^{r}_t\in L^{1}(\calF_t,\Q)$ for $t\in \{1,\dots,T\}$, $R_T=0$ and $R_t\in L^{1}(\calF_t,\Q)$ for $t\in \{0,\dots,T-1\}$. Set $X_t:=X^{o}_t-X^{r}_t$. Then
\begin{align}\label{Lt-relation}
L_t &:=\E_t^{\Q}\Big[\sum_{s=t+1}^TX^r_s\Big]+V_t, \quad t\in\{0,\dots,T\},
\end{align}
where $V_t$ is given by Definition \ref{Vt-definition}.
$L_t$ is the liability value at time $t$ given no default at times $\leq t$.
\end{definition}

\begin{example}
As continuation of Example \ref{ex:single_period1} it may be instructive to consider the cash flow to the policyholders in the simple one-period setting corresponding to $T=1$. 
As demonstrated in Example \ref{ex:single_period1}, optimal exercise of the owner's option to default 
implies that at time $t=1$ the policyholders receive
\begin{align*}
&(R_0+X^r_1)I\{R_0-X^o_1+X^r_1<0\}+X^o_1I\{R_0-X^o_1+X^r_1\geq 0\}\\
&\quad=X^r_1+R_0-(R_0-X^o_1+X^r_1)_+,
\end{align*}
where the first term on the left-hand side is the cash flow to the policyholders upon default and the second term is the no-default cash flow.
Consequently, the $\Q$-expectation of the cash flow to the policyholders is 
\begin{align*}
\E^{\Q}_0[X^r_1+R_0-(R_0-X^o_1+X^r_1)_+]=\E^{\Q}_0[X^r_1]+V_0
\end{align*}
which is seen to coincide with \eqref{Lt-relation} when $T=1$ and $t=0$.
\end{example}

\begin{remark}\label{rem:market_consistency1}
A necessary requirement in order to say that the valuation approach is market consistent is the requirement that replicable financial liabilities should be valued by their market prices. 
If $X^o$ is fully replicable and the replicating portfolio is chosen so that $X^r=X^o$, then the residual liability cash flow is $X:=X^o-X^r=0$. If the capital requirements are such that $X=0$ implies $R_t=0$ for all $t$, then $V_0=0$ follows from Definition \ref{Vt-definition}. This property holds by choosing, as in Section \ref{sec:cmrm} below, $R_t:=\rho_t(-X_{t+1}-V_{t+1})$ for all $t$, where $\rho_t$ is a conditional monetary risk measure in the sense of Definition \ref{def:dynrisk}.
Consequently 
\begin{align*}
L_0:=\E_0^{\Q}\Big[\sum_{t=1}^TX^r_t\Big]+V_0=\E_0^{\Q}\Big[\sum_{t=1}^TX^o_t\Big]
\end{align*}
is simply the market price of the replicable cash flow $X^o$. Hence, market consistency requires that the valuation framework is combined with an appropriate criterion for selection of the replicating portfolio.
The criterion for selecting the replicating portfolio does not enter in the mathematics of the general valuation framework considered up to this point. However, the choice of criterion is highly important to ensure good economic properties and make the framework fully operational. It is necessary that the insurance regulator clearly prescribes the criterion that must be applied in order to obtain unique liability values.  
Suitable criteria for replicating portfolio selection and further details are presented in Section \ref{sec:rp}. See also Remark \ref{rem:market_consistency2} in Section \ref{sec:rp} for further comments on market consistency.
\end{remark}

We are now ready to state a key result. It says that the sequences $(C_t)_{t=0}^T$ and $(V_t)_{t=0}^T$ of values of the reference undertaking and residual liability values, respectively, defined in Definitions \ref{Ct-definition} and \ref{Vt-definition} can equivalently be defined as solutions to a pair of backward recursions. Moreover, it provides an explicit expression for the stopping times that are optimal from the perspective of the owner of the reference undertaking. 

\begin{theorem}\label{Vt-def-thm} 
Consider sequences $(X^{o}_t)_{t=1}^T$, $(X^{r}_t)_{t=1}^T$, $(R_t)_{t=0}^T$ with $X^{o}_t,X^{r}_t\in L^{1}(\calF_t,\Q)$ for $t\in \{1,\dots,T\}$, $R_T=0$ and $R_t\in L^{1}(\calF_t,\Q)$ for $t\in \{0,\dots,T-1\}$. Set $X_t:=X^{o}_t-X^{r}_t$.

(i) 
If the sequences $(C_t)_{t=0}^T$ and $(V_t)_{t=0}^T$ are given by Definitions \ref{Ct-definition} and \ref{Vt-definition}, then 
\begin{align} 
C_t &= \E_t^{\Q}[(R_t-X_{t+1}-V_{t+1})_+], \quad C_T=0, \label{Ct-expression2}\\
V_t &= R_t-\E_t^{\Q}[(R_t-X_{t+1}-V_{t+1})_+], \quad V_T=0. \label{Vt-expression2}
\end{align}

(ii) 
The stopping times $(\tau^*_t)_{t=0}^{T-1}$ given by 
\begin{align*} 
\tau^*_{t}=  \inf \{s \in \{t+1,\dots,T\} : R_{s-1}-X_{s}-V_{s}<0\} \wedge (T+1)
\end{align*}
are optimal in \eqref{Ct-expression} and \eqref{Vt-expression}.

(iii) 
If the sequences $(C_t)_{t=0}^T$ and $(V_t)_{t=0}^T$ are given by \eqref{Ct-expression2} and \eqref{Vt-expression2}, then, for $t\in\{0,\dots,T-1\}$, $C_t$ and $V_t$ are given by \eqref{Ct-expression} and \eqref{Vt-expression}. 
\end{theorem}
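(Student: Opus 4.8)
The plan is to route all three parts through a single backward-induction argument for the value process $C$, exploiting the algebraic identity $V_t=R_t-C_t$ recorded in the display immediately before Definition \ref{Vt-definition}. That identity shows that \eqref{Vt-expression2} is merely \eqref{Ct-expression2} subtracted from $R_t$, and that the $\essinf$-definition \eqref{Vt-expression} of $V_t$ equals $R_t-C_t$ with $C_t$ the $\esssup$ in \eqref{Ct-expression}; hence it suffices to establish the recursion and the optimal-stopping claim for $C$, after which the statements for $V$ follow by subtraction. The finiteness of $T$ together with the assumed $L^1(\calF_t,\Q)$-integrability of all $X_t$ and $R_t$ makes every finite sum and conditional expectation below well defined and integrable, so I would state this once and not repeat it. I would then prove (i) and (ii) simultaneously by backward induction on $t$, starting from $C_T=V_T=0$, with inductive hypothesis at $t+1$: that \eqref{Ct-expression2} holds and that $\tau^*_{t+1}$ attains the essential supremum defining $C_{t+1}$.

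The engine of the induction step is a one-step dynamic programming identity. Writing $Z_{t+1}:=R_t-X_{t+1}-V_{t+1}=(R_t-R_{t+1}-X_{t+1})+C_{t+1}$, where the second equality uses $C_{t+1}=R_{t+1}-V_{t+1}$, I would use the telescoping decomposition $\sum_{s=t+1}^{\tau-1}(R_{s-1}-R_s-X_s)=I\{\tau\ge t+2\}\big[(R_t-R_{t+1}-X_{t+1})+\sum_{s=t+2}^{\tau-1}(R_{s-1}-R_s-X_s)\big]$. The two features that matter are that $\{\tau\ge t+2\}\in\calF_{t+1}$ and that the single-period dividend $R_t-R_{t+1}-X_{t+1}$ is $\calF_{t+1}$-measurable; this is exactly why the positive part ends up \emph{inside} the conditional expectation rather than outside it. For the upper bound $C_t\le \E_t^\Q[(Z_{t+1})_+]$ I would take an arbitrary $\tau\in\calS_{t+1,T+1}$, replace it by $\tau':=\tau\vee(t+2)\in\calS_{t+2,T+1}$, which coincides with $\tau$ on the $\calF_{t+1}$-set $\{\tau\ge t+2\}$, and invoke the defining inequality $\E_{t+1}^\Q[\sum_{s=t+2}^{\tau'-1}(R_{s-1}-R_s-X_s)]\le C_{t+1}$. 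Conditioning on $\calF_{t+1}$, pulling out the $\calF_{t+1}$-measurable dividend, and using $I\{\tau\ge t+2\}\,Z_{t+1}\le (Z_{t+1})_+$ gives the bound after taking $\E_t^\Q$ and then the $\esssup$ over $\tau$.

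For the matching lower bound I would simply evaluate the candidate $\tau^*_t$, which by its definition equals $t+1$ on $\{Z_{t+1}<0\}$ and equals $\tau^*_{t+1}$ on $\{Z_{t+1}\ge 0\}$. On the latter, $\calF_{t+1}$-measurable, event the inductive hypothesis yields $\E_{t+1}^\Q[\sum_{s=t+2}^{\tau^*_t-1}(R_{s-1}-R_s-X_s)]=C_{t+1}$, so the same conditioning computation gives $\E_t^\Q[\sum_{s=t+1}^{\tau^*_t-1}(R_{s-1}-R_s-X_s)]=\E_t^\Q[I\{Z_{t+1}\ge 0\}Z_{t+1}]=\E_t^\Q[(Z_{t+1})_+]$. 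Combining the two bounds establishes \eqref{Ct-expression2} and the optimality of $\tau^*_t$, completing the induction and hence (i) and (ii). Part (iii) I would then obtain essentially for free: the coupled recursion \eqref{Ct-expression2}--\eqref{Vt-expression2}, started from $C_T=V_T=0$, expresses each $V_t$, and then $C_t$, as an explicit deterministic functional of $V_{t+1}$ and the data $X_{t+1},R_t$, so it has a unique solution; since (i) shows the $\esssup$/$\essinf$ quantities of Definitions \ref{Ct-definition} and \ref{Vt-definition} solve it, any sequences satisfying \eqref{Ct-expression2}--\eqref{Vt-expression2} must coincide with them, which is precisely (iii).

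I expect the only genuine friction to be the rigorous handling of the essential suprema: verifying that $\tau'=\tau\vee(t+2)$ is admissible in $\calS_{t+2,T+1}$, and that the defining $\esssup$ inequality for $C_{t+1}$ may be localized to the $\calF_{t+1}$-measurable event $\{\tau\ge t+2\}$ after conditioning on $\calF_{t+1}$ (so that it can be multiplied by the indicator and added to the $\calF_{t+1}$-measurable dividend without disturbing the inequality). The conceptual point underpinning all of this, that the default decision at time $t+1$ is made with $\calF_{t+1}$-information while the period reward is already $\calF_{t+1}$-measurable, is what dictates the form of the recursion; once that bookkeeping is fixed, the remaining estimates are routine applications of the tower property and of $I\{\,\cdot\,\ge 0\}\,Z=(Z)_+$.
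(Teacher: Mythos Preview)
Your proposal is correct and follows essentially the same dynamic-programming approach as the paper: decompose $\tau\in\calS_{t+1,T+1}$ according to the $\calF_{t+1}$-event $\{\tau\ge t+2\}$, identify the inner optimization with $C_{t+1}=R_{t+1}-V_{t+1}$, and observe that the optimal $\calF_{t+1}$-set is $\{R_t-X_{t+1}-V_{t+1}\ge 0\}$; part (iii) is then obtained in both proofs by uniqueness of the backward recursion. The only cosmetic difference is that the paper compresses your upper and lower bounds into a single chain of equalities (replacing $\esssup_{\tau\in\calS_{t+1,T+1}}$ directly by $\esssup_{A\in\calF_{t+1}}$ of the one-step reward plus $C_{t+1}$) rather than framing it as an explicit backward induction, but the underlying argument is the same.
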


The following remark clarifies how Theorem \ref{Vt-def-thm} leads to a procedure for sequentially determining the residual liability values $(V_t)_{t=0}^T$.

\begin{remark}\label{rem:seq_opt_stop_probs}
We will in the sequel choose $R_t:=\rho_t(-X_{t+1}-V_{t+1})$ for conditional monetary risk measures $\rho_t$ such as Value-at-Risk $\VaR_{t,u}$ or Expected Shortfall $\ES_{t,u}$ that are introduced below in Section \ref{sec:cmrm}. Given that $(X_t)_{t=1}^T$ and $(\rho_t)_{t=0}^{T-1}$ are chosen so that $R_t\in L^1(\calF_t,\Q)$, the definition of $(V_t)_{t=0}^{T}$ in Definition \ref{Vt-definition} together with the optimal stopping times in Theorem \ref{Vt-def-thm} (ii) lead to the following procedure for sequentially determining $(V_t)_{t=0}^T$.
\begin{itemize}
\item
$t=T$: 
$R_T=0$ and $V_T=0$.
\item
$t=T-1$: 
$R_{T-1}=\rho_{T-1}(-X_T)$, 
\begin{align*}
\tau^*_{T-1}&=\left\{\begin{array}{ll}
T & \text{if } \rho_{T-1}(-X_T)-X_T<0,\\
T+1 & \text{otherwise},
\end{array}\right.\\
V_{T-1}&=\E_{T-1}^{\Q}\Big[I\{\tau^*_{T-1}=T+1\}X_T+I\{\tau^*_{T-1}=T\}R_{T-1}\Big].
\end{align*}
\item
$t=T-2$: 
$R_{T-2}=\rho_{T-2}(-X_{T-1}-V_{T-1})$, 
\begin{align*}
\tau^*_{T-2}&=\inf\{s \in \{T-1,T\} : R_{s-1}-X_{s}-V_{s}<0\} \wedge (T+1),\\
V_{T-2}&=\E_{T-2}^{\Q}\Big[\sum_{s=T-1}^{\tau^*_{T-2}-1}X_s+R_{\tau^*_{T-2}-1}\Big].
\end{align*} 
\item etc.
\item 
$t=0$: 
$R_{0}=\rho_{0}(-X_{1}-V_{1})$, 
\begin{align*}
\tau^*_{0}&=\inf\{s \in \{1,\dots,T\} : R_{s-1}-X_{s}-V_{s}<0\} \wedge (T+1),\\
V_{0}&=\E_{0}^{\Q}\Big[\sum_{s=1}^{\tau^*_{0}-1}X_s+R_{\tau^*_{0}-1}\Big].
\end{align*} 
\end{itemize}
\end{remark}

The following remark illustrates that Theorem \ref{Vt-def-thm} can be proven by identifying the cash flows considered here with processes and stopping times that form key ingredients in the framework for valuation of American contingent claims in \cite{Foellmer-Schied-16}. 

\begin{remark}
Let 
\begin{align*}
H_t:&=\sum_{s=1}^{t-1}(R_{s-1}-R_s-X_s), \quad t\in\{0,\dots,T+1\},\\
U_{T+1}&:=H_{T+1}, \quad U_t:=H_t\vee \E^{\Q}_t[U_{t+1}], \quad t\in\{0,\dots,T\}. 
\end{align*}
By Theorem 6.18 in \cite{Foellmer-Schied-16} and Definitions \ref{Ct-definition} and \ref{Vt-definition},
\begin{align*}
U_t&=\esssup_{\tau \in \mathcal{S}_{t,T+1}}\E_t^{\Q}[H_{\tau}]\\
&=\esssup_{\tau \in \mathcal{S}_{t+1,T+1}}\E_t^{\Q}[H_{\tau}]\vee H_t\\
&=(H_{t+1}+C_t)\vee H_t\\
&=(H_t+R_{t-1}-R_t-X_t+C_t)\vee H_t\\
&=H_t+(R_{t-1}-X_t-V_t)_+.
\end{align*}
Since $H_0=H_1=0$ follows $U_0=C_0$. Let 
\begin{align*}
\tau^{(t)}_{\max}:=\min\{s\geq t:\E^{\Q}_s[U_{s+1}]<U_s\} \wedge (T+1), \quad t\in\{0,\dots,T\}.
\end{align*}
By Theorem 6.21 in \cite{Foellmer-Schied-16}, $\tau^{(t)}_{\max}$ is the largest optimal stopping time, i.e.~the maximal solution to the optimal stopping problem $\esssup_{\tau \in \mathcal{S}_{t+1,T+1}}\E_t^{\Q}[H_{\tau}]\vee H_t$. We observe, similar to above, that, for $s>0$ we have
\begin{align*}
\E^{\Q}_s[U_{s+1}]<U_s &\iff  \E^{\Q}_s[U_{s+1}] < H_s\\
&\iff H_{s+1} +C_s < H_s\\
&\iff  R_{s-1}-X_s-V_s < 0.
\end{align*}
Similarly, for $s=0$, $\E^{\Q}[U_{1}]<U_0 \iff C_0 <0$ which is not possible in the current setting. Hence, conditionally on $\tau^{(t)}_{\max}>t$, $\tau^{(t)}_{\max}= \tau^*_t$, where $\tau^*_t$ is defined in Theorem \ref{Vt-def-thm}.

With this stopping strategy, conditional on $\tau^{(t)}_{\max}>t$, 
\begin{align*}
U_t&=(H_t+R_{t-1}-R_t-X_t+C_t)\vee H_t\\
&=H_t+R_{t-1}-R_t-X_t+C_t\\
&=H_{t+1}+C_t,\\
U_t&=H_t\vee \E^{\Q}_t[U_{t+1}]\\
&=\E^{\Q}_t[U_{t+1}]\\
&=\E^{\Q}_t[H_{t+1}+(R_{t}-X_{t+1}-V_{t+1})_+]\\
&=H_{t+1}+\E^{\Q}_t[(R_{t}-X_{t+1}-V_{t+1})_+]
\end{align*}
from which \eqref{Ct-expression2} follows, and therefore also \eqref{Vt-expression2}. 
\end{remark}

The following remark shows that, due to limited liability for the owner of the reference undertaking, the cumulative residual value process is a $(\Q,\filF)$-submartingale which further leads to an upper bound on the value of the original liability.

\begin{remark}\label{rem:VS_Q_submartingale}
Notice from \eqref{Vt-expression2} that the cumulative residual liability value process 
\begin{align}\label{eq:VS}
(V^S_t)_{t=0}^T, \quad V^S_t:=\sum_{s=1}^{t}X_s+V_t, 
\end{align}
is a $(\Q,\filF)$-submartingale:
\begin{align*}
\E^{\Q}_t[V^S_{t+1}]&=\sum_{s=1}^t X_s+\E^{\Q}_t[X_{t+1}+V_{t+1}]\\
&=\sum_{s=1}^t X_s+R_t-\E^{\Q}_t[R_t-X_{t+1}-V_{t+1}]\\
&\geq \sum_{s=1}^t X_s+R_t-\E^{\Q}_t[(R_t-X_{t+1}-V_{t+1})_+]\\
&=V^S_t.
\end{align*}
Equivalently, $V_t\leq \E^{\Q}_t[\sum_{s=t+1}^TX_s]$ for all $t$.  
In particular, $V_0\leq \E^{\Q}_0[\sum_{t=1}^TX_t]$ and consequently
\begin{align*}
L_0=\E^{\Q}_0\Big[\sum_{t=1}^TX^{r}_t\Big]+V_0\leq \E^{\Q}_0\Big[\sum_{t=1}^TX^{o}_t\Big] 
\end{align*}
regardless of the choice of replicating portfolio. 
In \cite{Hancock-Huber-Koch-01}, the nonnegative value $\E^{\Q}_0[\sum_{t=1}^TX^{o}_t]-L_0$ is referred to as the value of the option to default. This value exists due to limit liability which in turn  leads to the possibility that the policyholders do not receive the full payment they are entitled to. 
\end{remark}

The following remark illustrates that the approach to valuation can always be expressed in term of cost-of-capital valuation.

\begin{remark}\label{rem:coc_valuation}
Notice that 
\begin{align*}
V_t&=R_t-\E^{\Q}_t[(R_t-X_{t+1}-V_{t+1})_+]\\
&=R_t-\frac{1}{1+\eta_t}\E^{\P}_t[(R_t-X_{t+1}-V_{t+1})_+]
\end{align*}
upon defining 
\begin{align*}
\eta_t:=\frac{\E^{\P}_t[(R_t-X_{t+1}-V_{t+1})_+]}{\E^{\Q}_t[(R_t-X_{t+1}-V_{t+1})_+]}-1.
\end{align*}
A cost-of-capital valuation results from considering the owner's (capital pro\-vider's) time-$t$ one-period acceptability criterion 
\begin{align*}
\E^{\P}_t[(R_t-X_{t+1}-V_{t+1})_+]=(1+\eta_t)(R_t-V_t)
\end{align*} 
which corresponds to an expected excess rate of return $\eta_t$ for providing capital with value $C_t=R_t-V_t$ at time $t$. 
Hence, given a pricing measure $\Q$ the market consistent value of the liability cash flow can always be interpreted as a cost-of-capital value.

It is reasonable to choose $\Q$ such that $\eta_t\geq 0$ for all $t$. Moreover, 
if $X$ represents nonhedgable insurance risks, then 
it is reasonable to require that $V_t\geq \E^{\P}_t[\sum_{s=t+1}^TX_s]$ for all $t$ which is equivalent to that the cumulative residual liability value process $V^S$ in \eqref{eq:VS} is a $(\P,\filF)$-supermartingale. It is easily verified that if $V^S$ is a $(\P,\filF)$-supermartingale, then $\eta_t\geq 0$ for all $t$: 
\begin{align*}
V_t&=R_t-\frac{1}{1+\eta_t}\E^{\P}_t[(R_t-X_{t+1}-V_{t+1})_+],\\
V_t&\geq \E^{\P}[X_{t+1}+V_{t+1}]=R_t-\E^{\P}_t[(R_t-X_{t+1}-V_{t+1})_+],
\end{align*}
together yield
\begin{align*}
\eta_t\geq \frac{\E^{\P}_t[(R_t-X_{t+1}-V_{t+1})_+]}{\E^{\P}_t[R_t-X_{t+1}-V_{t+1}]}-1\geq 0.
\end{align*}

From Remark \ref{rem:VS_Q_submartingale} it is clear that if $V^S$ is a $(\P,\filF)$-supermartingale, then $\E^{\Q}_t[\sum_{s=t+1}^TX_s]\geq \E^{\P}_t[\sum_{s=t+1}^TX_s]$  for all $t$, i.e. $\E^{\Q}_t[\sum_{s=t+1}^TX_s]$ is a more conservative estimate then $\E^{\P}_t[\sum_{s=t+1}^TX_s]$ of the sum of the remaining residual liability cash flow. 
\end{remark}

\subsection{Capital requirement in terms of conditional monetary risk measures}\label{sec:cmrm}

The valuation framework presented above is not operational without specifying how the sequence $(R_t)_{t=0}^T$ depends on the sequences $(X^{o}_t)_{t=0}^T$, $(X^{r}_t)_{t=0}^T$ and $(V_t)_{t=0}^T$. We will in what follows define $R_t$ in terms of a mapping $\rho_t:L^p(\calF_{t+1},\P) \to L^p(\calF_{t},\P)$ and give $R_t$ the meaning $R_t=\rho_t(-X_{t+1}-V_{t+1})$.

\begin{definition}\label{def:dynrisk}
For $p\in [0,\infty]$ and $t\in\{0,\dots,T-1\}$, a conditional monetary risk measure is a mapping $\rho_t:L^p(\calF_{t+1},\P)\to L^p(\calF_t,\P)$ satisfying
\begin{align}
& \textrm{if } \lambda\in L^p(\calF_t,\P) \textrm{ and } Y\in L^p(\calF_{t+1},\P), \textrm{ then } 
\rho_t(Y+\lambda)=\rho_t(Y)-\lambda,  \label{eq:ti_r}\\
& \textrm{if } Y,\widetilde{Y}\in L^p(\calF_{t+1},\P) \textrm{ and } Y\leq \widetilde{Y}, \textrm{ then } 
\rho_t(Y)\geq \rho_t(\widetilde{Y}),  \label{eq:mo_r}\\
& \rho_t(0)=0. \label{eq:ph_r}
\end{align}
A sequence $(\rho_t)_{t=0}^{T-1}$ of conditional monetary risk measures is called a dynamic monetary risk measure.
\end{definition}

\begin{remark}
Notice that the capital requirements are defined in terms of characteristics of $(X_t)_{t=1}^T$ and $(V_t)_{t=1}^T$ with respect to the probability measure $\P$, whereas the values $V_t$ are obtained as solutions to the recursion \eqref{Vt-expression2} expressed in terms of conditional $\Q$-expectations. This fact give rise to computational challenges. In particular, we will need to express the $\Q$-expectations in terms of $\P$-expectations
in order to solve the backward recursions in \eqref{Vt-expression2}.  
\end{remark}

We first recall the conditional monetary risk measures used in current regulations and then consider a slightly more general class of risk measures providing sufficient structure for the subsequent analysis. 
For $t\geq 0$, $x\in\R$, $u\in (0,1)$ and an $\calF_{t+1}$-measurable $Z$, let 
\begin{align*}
F_{t,-Z}(x)&:=\P(-Z\leq x\mid\calF_t), \\ 
F_{t,-Z}^{-1}(1-u)&:=\min\{m\in\R:F_{t,-Z}(m)\geq 1-u\},
\end{align*}
and define conditional versions of Value-at-Risk and Expected Shortfall as 
\begin{align*}
\VaR_{t,u}(Z)&:=F_{t,-Z}^{-1}(1-u),\\
\ES_{t,u}(Z)&:=\frac{1}{u}\int_0^{u}\VaR_{t,v}(Z)dv.
\end{align*}
$\VaR_{t,u}$ and $\ES_{t,u}$ are special cases of the following more general type of conditional monetary risk measure: let $M$ be a probability distribution on the Borel subsets of $(0,1)$ such that either $M$ has a bounded density with respect to the Lebesgue measure or the support of $M$ is bounded away from $0$ and $1$, and let
\begin{align}\label{def:niceriskmeas}
\rho_t(Z):=\int_0^1 F_{t,-Z}^{-1}(u)dM(u).
\end{align}
Notice that $\VaR_{t,u}$ is obtained by choosing $M$ such that $M(\{1-u\})=1$, and 
$\ES_{t,u}$ is obtained by choosing $M$ with density $v\mapsto u^{-1}1_{(1-u,1)}(v)$.

\begin{theorem}\label{thm:VaRandES}
For $p\in [1,\infty]$, $\rho_t$ in \eqref{def:niceriskmeas} is a conditional risk measure in the sense of Definition \ref{def:dynrisk}. 
In particular, for $p\in [1,\infty]$, $\VaR_{t,u}$ and $\ES_{t,u}$ are conditional monetary risk measures in the sense of Definition \ref{def:dynrisk}. 
Moreover, $\rho_t(\lambda \cdot)=\lambda\rho_t(\cdot)$ for nonnegative constants $\lambda$.
\end{theorem}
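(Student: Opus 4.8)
The plan is to reduce every assertion to elementary, pathwise properties of the conditional quantile $u\mapsto F_{t,-Z}^{-1}(u)$ and then integrate against $M$, treating the $L^p$-mapping property (measurability and integrability) as a separate matter. Since $-Z$ takes values in $\R$, I would fix a regular conditional distribution of $-Z$ given $\calF_t$; this renders $(\omega,u)\mapsto F_{t,-Z}^{-1}(u)$ jointly measurable and, for each fixed $u$, $\calF_t$-measurable, so that $\rho_t(Z)=\int_0^1 F_{t,-Z}^{-1}(u)\,dM(u)$ is $\calF_t$-measurable by Fubini.

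First I would record three identities for the conditional quantile. Because an $\calF_t$-measurable $\lambda$ acts as a constant under $\P(\cdot\mid\calF_t)$, one has $F_{t,-(Y+\lambda)}(x)=F_{t,-Y}(x+\lambda)$, and the substitution $m\mapsto m-\lambda$ in the generalized inverse gives $F_{t,-(Y+\lambda)}^{-1}(u)=F_{t,-Y}^{-1}(u)-\lambda$. For $\lambda\ge 0$, the scaling $F_{t,-\lambda Z}(x)=F_{t,-Z}(x/\lambda)$ yields $F_{t,-\lambda Z}^{-1}(u)=\lambda F_{t,-Z}^{-1}(u)$. And $Y\le\widetilde Y$ gives $F_{t,-Y}\le F_{t,-\widetilde Y}$ pointwise, so that $\{m:F_{t,-Y}(m)\ge u\}\subseteq\{m:F_{t,-\widetilde Y}(m)\ge u\}$ and the minimum over the larger set is no larger, i.e.\ $F_{t,-Y}^{-1}(u)\ge F_{t,-\widetilde Y}^{-1}(u)$. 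Integrating these three identities against $M$ and using $M((0,1))=1$ delivers translation invariance \eqref{eq:ti_r}, positive homogeneity, and monotonicity \eqref{eq:mo_r}, respectively; normalization \eqref{eq:ph_r} is immediate since $F_{t,-0}^{-1}(u)=0$ for every $u\in(0,1)$.

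The step carrying the real work is showing $\rho_t(Z)\in L^p(\calF_t,\P)$, and this is precisely where the two hypotheses on $M$ enter. My target is the pointwise bound $|\rho_t(Z)|\le C\,\E_t^\P[|Z|]$ with $C<\infty$. If $M$ has density bounded by $K$, I would combine $|\rho_t(Z)|\le K\int_0^1|F_{t,-Z}^{-1}(u)|\,du$ with the conditional quantile identity $\int_0^1|F_{t,-Z}^{-1}(u)|\,du=\E_t^\P[|Z|]$, giving $C=K$. If instead $\mathrm{supp}(M)\subseteq[a,b]$ with $0<a\le b<1$, monotonicity of the quantile confines $F_{t,-Z}^{-1}(u)$ between $F_{t,-Z}^{-1}(a)$ and $F_{t,-Z}^{-1}(b)$ on $\mathrm{supp}(M)$, and conditional Markov estimates for the upper and lower tails give $F_{t,-Z}^{-1}(b)\le(1-b)^{-1}\E_t^\P[|Z|]$ and $F_{t,-Z}^{-1}(a)\ge-a^{-1}\E_t^\P[|Z|]$, so $C=\max\{a^{-1},(1-b)^{-1}\}$. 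In either case, since conditional expectation is an $L^p$-contraction for $p\in[1,\infty]$, we obtain $\|\rho_t(Z)\|_p\le C\,\|\E_t^\P[|Z|]\|_p\le C\,\|Z\|_p$, whence $\rho_t(Z)\in L^p(\calF_t,\P)$.

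It then remains only to observe that the two concrete measures satisfy the hypotheses: $\VaR_{t,u}$ corresponds to $M=\delta_{1-u}$, whose support $\{1-u\}$ lies in $(0,1)$ and is thus bounded away from $0$ and $1$, while $\ES_{t,u}$ corresponds to the density $u^{-1}1_{(1-u,1)}$, which is bounded; hence both are instances of \eqref{def:niceriskmeas} and inherit every stated property, including positive homogeneity. Beyond the Markov tail estimates, the points demanding care are the measurable selection of the conditional quantile and the justification of the identity $\int_0^1|F_{t,-Z}^{-1}(u)|\,du=\E_t^\P[|Z|]$ in the conditional setting, both of which I would settle by passing to a regular conditional distribution and applying Fubini's theorem.
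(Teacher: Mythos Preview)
Your argument is correct: the quantile identities for translation, scaling and monotone ordering are standard and integrate against $M$ to yield \eqref{eq:ti_r}--\eqref{eq:ph_r} and positive homogeneity, and your two cases for the $L^p$-bound $|\rho_t(Z)|\le C\,\E^\P_t[|Z|]$ are the right way to exploit the hypotheses on $M$; the Markov-type tail estimates you invoke do give the stated bounds on $F_{t,-Z}^{-1}(a)$ and $F_{t,-Z}^{-1}(b)$.

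The paper itself does not supply a proof: it simply cites Proposition~4(i) and Remark~5 of \cite{Engsner-Lindholm-Lindskog-17} and omits the details. Your write-up is therefore a genuinely self-contained argument where the paper offers only a reference. The content you provide is presumably close to what the cited source contains (the split into the bounded-density case and the support-bounded-away-from-$\{0,1\}$ case is the natural one), but your version has the advantage of being explicit, including the measurability discussion via regular conditional distributions and the conditional form of the identity $\int_0^1|F_{t,-Z}^{-1}(u)|\,du=\E_t^\P[|Z|]$, which are exactly the points one would otherwise have to chase through the reference.
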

The statement of Theorem \ref{thm:VaRandES} follows from combining Proposition 4 (i) and Remark 5 in \cite{Engsner-Lindholm-Lindskog-17}; the proof is therefore omitted.

From \eqref{Ct-expression2} and \eqref{Vt-expression2} follow that $C_t$ and $V_t$ are determined recursively from $X_{t+1}$ and $V_{t+1}$ as follows:
\begin{align}
C_t&=\gamma_t(X_{t+1}+V_{t+1}), \quad C_T=0,\label{eq:Ct_def} \\ 
V_t&:=\phi_t(X_{t+1}+V_{t+1}), \quad V_T=0,\label{eq:Vt_def} 
\end{align}
where
\begin{align}
\gamma_t(Y)&:=\E^{\Q}_t[(\rho_t(-Y)-Y)_+], \label{eq:cy}\\
\phi_t(Y)&:=\rho_t(-Y)-\gamma_t(Y).\label{eq:vy}
\end{align}

The following result analyzes the mappings $\phi_t$ and how they inherit properties from the conditional monetary risk measures $\rho_t$.

\begin{theorem}\label{lem:basic_lem}
(i) 
Fix $t\in\{0,\dots,T-1\}$ and $p\in [1,\infty]$. 
Suppose $D_{t+1}/D_t\in L^{\infty}(\calF_{t+1},\P)$ and that $\rho_t$ is a conditional monetary risk measure in the sense of Definition \ref{def:dynrisk}.
Then $\phi_t$ in \eqref{eq:vy} is a mapping from $L^p(\calF_{t+1},\P)$ to $L^p(\calF_{t},\P)$ having the properties
\begin{align}
& \textrm{if } \lambda\in L^p(\calF_t,\P) \textrm{ and } Y\in L^p(\calF_{t+1},\P), \textrm{ then } 
\phi_t(Y+\lambda)=\phi_t(Y)+\lambda, \label{eq:ti_w}\\
& \textrm{if } Y,\widetilde{Y}\in L^p(\calF_{t+1},\P) \textrm{ and } Y\leq \widetilde{Y}, \textrm{ then } 
\phi_t(Y)\leq \phi_t(\widetilde{Y}), \label{eq:mo_w}\\
& \phi_t(0)=0. \label{eq:ph_w}
\end{align}
(ii) 
Fix $t\in\{0,\dots,T-1\}$ and $1\leq p_1<p_2$.
Suppose $D_{t+1}/D_t\in L^r(\calF_{t+1},\P)$ for every $r\geq 1$.
Suppose further that for any $p\in [p_1,p_2]$, $\rho_t$ is a conditional monetary risk measure in the sense of Definition \ref{def:dynrisk}. 
Then, for any $\epsilon>0$ such that $p-\epsilon\geq p_1$, $\phi_t$ in \eqref{eq:vy} can be defined as a mapping from $L^{p}(\calF_{t+1},\P)$ to $L^{p-\epsilon}(\calF_{t},\P)$ having the properties \eqref{eq:ti_w}-\eqref{eq:ph_w}.
\end{theorem}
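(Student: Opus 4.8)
The plan is to reduce both parts to a single clean representation of $\phi_t$ and then read off all the stated properties from it. Since $\rho_t(-Y)$ is $\calF_t$-measurable, $\E^{\Q}_t[\rho_t(-Y)]=\rho_t(-Y)$, and the elementary pointwise identity $r-(r-y)_+=r\wedge y$ (applied with $r=\rho_t(-Y)$) gives
\[
\phi_t(Y)=\rho_t(-Y)-\E^{\Q}_t[(\rho_t(-Y)-Y)_+]=\E^{\Q}_t\big[Y\wedge\rho_t(-Y)\big].
\]
I would establish this representation first (its validity, together with finiteness of $\gamma_t$, being guaranteed by the integrability estimate below); it is the key step, after which the three properties are immediate. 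Normalization: $\phi_t(0)=\E^{\Q}_t[0\wedge\rho_t(0)]=0$ by $\rho_t(0)=0$. Translation invariance: the translation property of $\rho_t$ yields $\rho_t(-(Y+\lambda))=\rho_t(-Y)+\lambda$, hence $(Y+\lambda)\wedge\rho_t(-(Y+\lambda))=\big(Y\wedge\rho_t(-Y)\big)+\lambda$, and taking $\E^{\Q}_t$ of an $\calF_t$-measurable shift gives $\phi_t(Y+\lambda)=\phi_t(Y)+\lambda$. Monotonicity: if $Y\le\widetilde{Y}$ then $-Y\ge-\widetilde{Y}$, so monotonicity of $\rho_t$ gives $\rho_t(-Y)\le\rho_t(-\widetilde{Y})$; since $(a,b)\mapsto a\wedge b$ is nondecreasing in each argument, $Y\wedge\rho_t(-Y)\le\widetilde{Y}\wedge\rho_t(-\widetilde{Y})$, and monotonicity of the conditional expectation $\E^{\Q}_t$ finishes the claim.

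What actually requires the hypotheses on $D_{t+1}/D_t$ is the mapping property, and this is where the real work lies. Writing $\E^{\Q}_t[Z]=D_t^{-1}\E^{\P}_t[D_{t+1}Z]$ and bounding $|Y\wedge\rho_t(-Y)|\le|Y|+|\rho_t(-Y)|$, the martingale identity $\E^{\P}_t[D_{t+1}/D_t]=1$ gives
\[
|\phi_t(Y)|\le\E^{\P}_t\Big[\tfrac{D_{t+1}}{D_t}\,|Y|\Big]+|\rho_t(-Y)|.
\]
The term $\rho_t(-Y)$ lies in $L^p(\calF_t,\P)$ by definition of $\rho_t$. For part (i), boundedness $D_{t+1}/D_t\le K$ reduces the first term to $K\,\E^{\P}_t[|Y|]$, and the $L^p$-contraction of conditional expectation, $\|\E^{\P}_t[|Y|]\|_{L^p}\le\|Y\|_{L^p}$, yields $\phi_t(Y)\in L^p(\calF_t,\P)$.

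Part (ii) is the main obstacle, since $D_{t+1}/D_t$ is now only assumed to have finite moments of every order, so no constant $K$ is available and a loss of integrability is unavoidable. Here I would combine the $L^{p-\epsilon}$-contraction of conditional expectation with H\"older's inequality applied to the product $\frac{D_{t+1}}{D_t}|Y|$: with conjugate exponents chosen so that $|Y|$ is raised to the power $p$, namely the pair $p/\epsilon$ and $p/(p-\epsilon)$, one obtains
\[
\Big\|\E^{\P}_t\Big[\tfrac{D_{t+1}}{D_t}\,|Y|\Big]\Big\|_{L^{p-\epsilon}(\P)}\le\Big\|\tfrac{D_{t+1}}{D_t}\Big\|_{L^{(p-\epsilon)p/\epsilon}(\P)}\,\|Y\|_{L^{p}(\P)},
\]
which is finite because $D_{t+1}/D_t$ has all moments and $Y\in L^p$. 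Since $\rho_t(-Y)\in L^p\subseteq L^{p-\epsilon}$ on the probability space, this places $\phi_t(Y)$ in $L^{p-\epsilon}(\calF_t,\P)$, and the three properties transfer verbatim from the representation, as they are algebraic and order-theoretic and independent of the target space. The only delicate point to get right is the bookkeeping of exponents in the H\"older step: the exponent $(p-\epsilon)p/\epsilon$ blows up as $\epsilon\to0$, which is precisely why $\epsilon>0$ is needed to trade the missing $L^\infty$ bound on the density for a high-but-finite moment.
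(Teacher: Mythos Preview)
Your proof is correct. The route differs from the paper's in a useful way. The paper bounds $\gamma_t(Y)=\E^{\Q}_t[(\rho_t(-Y)-Y)_+]$ directly: conditional Jensen gives $\E^{\P}\big[\gamma_t(Y)^{p-\epsilon}\big]\le\E^{\P}\big[(D_{t+1}/D_t)^{p-\epsilon}(\rho_t(-Y)-Y)_+^{p-\epsilon}\big]$, and then H\"older (with the same exponent pairing you use) controls this by a high moment of $D_{t+1}/D_t$ times the $L^p$ norm of $\rho_t(-Y)-Y$; Minkowski then yields $\phi_t(Y)=\rho_t(-Y)-\gamma_t(Y)\in L^{p-\epsilon}$. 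For the three structural properties \eqref{eq:ti_w}--\eqref{eq:ph_w} the paper simply cites Proposition~1 of \cite{Engsner-Lindholm-Lindskog-17}.

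Your identity $\phi_t(Y)=\E^{\Q}_t[Y\wedge\rho_t(-Y)]$ is the new ingredient: it lets you read off translation invariance, monotonicity and normalization in one line each, with no external reference, and it reduces the integrability question to bounding $\E^{\P}_t\big[(D_{t+1}/D_t)|Y|\big]$ rather than the conditional $\Q$-expectation of $(\rho_t(-Y)-Y)_+$. The H\"older bookkeeping (exponent $(p-\epsilon)p/\epsilon$ on the density, $p$ on $|Y|$) is equivalent to the paper's choice of $r$ large, and your remark that this exponent blows up as $\epsilon\to 0$ is exactly why the $\epsilon$-loss is unavoidable under the weaker moment hypothesis. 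Both arguments are short; yours is a bit more self-contained.
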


\begin{remark}
$V_t=V_t(X)$ may be seen as the result of applying the mapping $V_t:L^{p}((\calF_t)_{t=1}^T,\P)\to L^{p-\epsilon}(\calF_t,\P)$ to $X\in L^p((\calF_t)_{t=1}^T,\P)$ for a suitable $\epsilon\geq 0$ according to Theorem \ref{lem:basic_lem}. 
If $(V_t)_{t=0}^{T}$ satisfies \eqref{eq:Vt_def}, where $\phi_t$ satisfies \eqref{eq:ti_w}-\eqref{eq:ph_w}, then $(V_t)_{t=0}^{T}$ satisfies the property called time consistency: 
For every pair of times $(s,t)$ with $s\leq t$, the two conditions $(X_u)_{u=1}^{t}=(\widetilde{X}_u)_{u=1}^{t}$ and $V_t(X)\leq V_t(\widetilde{X})$ together imply $V_s(X)\leq V_s(\widetilde{X})$. 
Detailed investigations of time consistency and related concepts can be found in e.g.~\cite{Cheridito-Kupper-09} and \cite{Cheridito-Kupper-11}.
\end{remark}

The requirement $D_{t+1}/D_t\in L^{\infty}(\calF_{t+1},\P)$ in statement (i) of Theorem \ref{lem:basic_lem} leads to a cleaner definition of the mappings $\gamma_t,\phi_t$. However, the boundedness of $D_{t+1}/D_t$ may be a too restrictive requirement. Finiteness of all moments of $D_{t+1}/D_t$, as in statement (ii), will be an appropriate requirement for the subsequent analysis here.   

Under the assumptions of Theorem \ref{lem:basic_lem} (i) or (ii), it follows from \eqref{eq:Vt_def} and \eqref{eq:ti_w} that 
\begin{align}\label{eq:VtWs_rep}
V_t=\phi_t\circ\dots\circ \phi_{T-1}(X_{t+1}+\dots+X_T),
\end{align}
where $\phi_t\circ\dots\circ \phi_{T-1}$ denotes the composition of mappings 
$\phi_t,\dots, \phi_{T-1}$, and that $V_t\in L^p(\calF_t,\P)$ in case of Theorem \ref{lem:basic_lem} (i) applies or, for any $\epsilon>0$ such that $p-\epsilon>0$, $V_t\in L^{p-\epsilon}(\calF_t,\P)$ in case Theorem \ref{lem:basic_lem} (ii) applies. 

Definition \ref{Vt-definition} defines the residual liability values $(V_t)_{t=0}^{T-1}$ given the sequences $(R_t)_{t=0}^{T-1}$ and $(X_t)_{t=1}^T$ satisfying $R_t,X_t\in L^1(\calF_t,\Q)$. 
However, typically one starts with conditional monetary risk measures $\rho_t$ of the kind in  \eqref{def:niceriskmeas} and an adapted residual liability cash flow defined with respect to $\P$ and not $\Q$. The following result says that this approach to defining the value of the residual liability cash flow and the value of the reference undertaking is fully consistent with Definitions \ref{Ct-definition} and \ref{Vt-definition}. 
 
\begin{theorem}\label{thm:alt_def}
Fix $p>1$ and, for all $t\in\{1,\dots,T\}$, let $X_t\in L^p(\calF_t,\P)$, let $D_t\in L^r(\calF_t,\P)$ for all $r\in [0,\infty)$, and, for all $t\in\{0,\dots,T-1\}$, let $\rho_{t}$ be a conditional monetary risk measure satisfying \eqref{def:niceriskmeas}. Let $\widetilde{V}_T=\widetilde{R}_T=\widetilde{C}_T=0$ and, for $t\in\{0,\dots,T-1\}$, 
\begin{align*}
\widetilde{V}_t&=\phi_t\circ \cdots \circ \phi_{T-1}(X_{t+1}+\dots+X_T),\\
\widetilde{R}_t&=\rho_t(-X_{t+1}-\widetilde{V}_{t+1}),\\
\widetilde{C}_t&=\gamma_t(X_{t+1}+\widetilde{V}_{t+1}).
\end{align*}
Then, $X_t,\widetilde{R}_t\in L^1(\calF_t,\Q)$ for $t\in\{1,\dots,T\}$ and, for $t\in\{0,\dots,T-1\}$, 
\begin{align*}
\widetilde{C}_t &= \esssup_{\tau \in \mathcal{S}_{t+1,T+1}}\E_t^{\Q}\Big[\sum_{s=t+1}^{\tau-1}(\widetilde{R}_{s-1}-\widetilde{R}_s -X_s)\Big],\\
\widetilde{V}_t &:= \essinf_{\tau \in \mathcal{S}_{t+1,T+1}}\E_t^{\Q}\Big[\sum_{s=t+1}^{\tau-1}X_s+\widetilde{R}_{\tau-1}\Big].
\end{align*}
\end{theorem}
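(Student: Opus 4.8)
The plan is to reduce the statement to Theorem \ref{Vt-def-thm}(iii). I would show that the triple $(\widetilde{C}_t,\widetilde{V}_t,\widetilde{R}_t)$ satisfies the backward recursions \eqref{Ct-expression2} and \eqref{Vt-expression2} (playing the roles of $C_t$, $V_t$, $R_t$), and that it meets the integrability hypotheses $X_t,\widetilde{R}_t\in L^1(\calF_t,\Q)$ required there. Once both are established, Theorem \ref{Vt-def-thm}(iii) immediately delivers the two $\essinf$/$\esssup$ representations claimed, which are exactly \eqref{Vt-expression} and \eqref{Ct-expression}.

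The algebraic verification is short and uses only the earlier structural results. First I would observe that the composition definition of $\widetilde{V}_t$ collapses to the one-step recursion $\widetilde{V}_t=\phi_t(X_{t+1}+\widetilde{V}_{t+1})$: since $X_{t+1}$ is $\calF_{t+1}$-measurable, translation invariance \eqref{eq:ti_w} lets me pull it out of $\phi_{t+1}\circ\cdots\circ\phi_{T-1}$, exactly as in the derivation of \eqref{eq:VtWs_rep}. With this recursion and $\widetilde{R}_t=\rho_t(-X_{t+1}-\widetilde{V}_{t+1})$ in hand, the definitions \eqref{eq:cy} and \eqref{eq:vy} give $\widetilde{C}_t=\gamma_t(X_{t+1}+\widetilde{V}_{t+1})=\E^{\Q}_t[(\widetilde{R}_t-X_{t+1}-\widetilde{V}_{t+1})_+]$ and $\widetilde{V}_t=\phi_t(X_{t+1}+\widetilde{V}_{t+1})=\widetilde{R}_t-\widetilde{C}_t$, which are precisely \eqref{Ct-expression2} and \eqref{Vt-expression2}.

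The real work is the integrability bookkeeping. For $X_t$ I would write $\E^{\Q}[|X_t|]=\E^{\P}[D_t|X_t|]$ and apply H\"older: since $X_t\in L^p(\calF_t,\P)$ with $p>1$ and $D_t$ has all $\P$-moments, the conjugate exponent is finite and the product is integrable, so $X_t\in L^1(\calF_t,\Q)$. For $\widetilde{R}_t$ I would first control its $\P$-integrability. Representation \eqref{eq:VtWs_rep} writes $\widetilde{V}_t$ as a composition of $T-t$ maps $\phi_s$ applied to $X_{t+1}+\cdots+X_T\in L^p(\calF_T,\P)$; by Theorem \ref{lem:basic_lem}(ii) each $\phi_s$ costs at most $\epsilon$ of integrability, so choosing $\epsilon<(p-1)/T$ guarantees $\widetilde{V}_t\in L^{p'}(\calF_t,\P)$ with $p':=p-T\epsilon>1$. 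Theorem \ref{thm:VaRandES} then gives $\widetilde{R}_t=\rho_t(-X_{t+1}-\widetilde{V}_{t+1})\in L^{p'}(\calF_t,\P)$ with the same $p'>1$, and the H\"older argument above (now with finite conjugate exponent, which is why $p'>1$ strictly matters) upgrades this to $\widetilde{R}_t\in L^1(\calF_t,\Q)$.

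I expect the integrability step, not the algebra, to be the main obstacle. Two points require care: verifying that the standing assumption $D_t\in L^r(\calF_t,\P)$ for all $r$ actually supplies the moment control on the density ratios $D_{t+1}/D_t$ that Theorem \ref{lem:basic_lem}(ii) demands (or, if that implication is not immediate from positivity of the martingale alone, re-deriving the required $L^{p'}(\P)$-bound on $\gamma_t$ and $\phi_t$ directly from $D_t\in L^r$ via conditional H\"older), and tracking the $\epsilon$-loss through the composition so that the terminal exponent stays \emph{strictly} above $1$ --- both the domain of $\rho_t$ and the $L^{p'}(\P)\to L^1(\Q)$ conversion degenerate at the endpoint $p'=1$. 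With the integrability secured and the recursions verified, the conclusion follows by a direct appeal to Theorem \ref{Vt-def-thm}(iii).
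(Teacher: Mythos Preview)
Your proposal is correct and follows essentially the same route as the paper: reduce to Theorem~\ref{Vt-def-thm}(iii), then establish $X_t,\widetilde{R}_t\in L^1(\calF_t,\Q)$ by first securing $\widetilde{V}_{t+1}\in L^{p-\epsilon}(\calF_{t+1},\P)$ via Theorem~\ref{lem:basic_lem}, pushing this through $\rho_t$ via Theorem~\ref{thm:VaRandES}, and converting to $L^1(\Q)$ with H\"older against $D_t$. Your explicit verification of the one-step recursion $\widetilde{V}_t=\phi_t(X_{t+1}+\widetilde{V}_{t+1})$ and your flag on the gap between the hypothesis $D_t\in L^r(\P)$ for all $r$ and the condition $D_{t+1}/D_t\in L^r(\P)$ for all $r$ actually required by Theorem~\ref{lem:basic_lem}(ii) are both more careful than the paper's own proof, which invokes Theorem~\ref{lem:basic_lem} without addressing this point.
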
 

The following result essentially says the following: 
(1) Adding a nonrandom cash flow to the original insurance company's replicating portfolio will not affect the value $L_0$ of the liability cash flow. The reason is that $V_0$ will change accordingly to offset the effect of the added nonrandom cash flow. 
(2) The value of the ownership of the reference undertaking is zero at all times if and only if $\sum_{t=1}^TX_t=K$ for some constant $K$.

\begin{theorem}\label{thm:CandLproperties}
Suppose that $(V_t)_{t=0}^{T}$ satisfies \eqref{eq:Vt_def} with $(\phi_t)_{t=0}^{T-1}$ given by \eqref{eq:vy} satisfying one of the statements $(i)$-$(ii)$ of Theorem \ref{lem:basic_lem}. 

(i) If $\widetilde{X}^r=X^r+b$, where $b_t$ is $\calF_0$-measurable for all $t\in\{1,\dots,T\}$, then, with $\widetilde{X}:=X^o-\widetilde{X}^{r}$ 
\begin{align*}
\sum_{t=1}^T\E^{\Q}_0\big[\widetilde{X}^{r}_t\big]+V_0(\widetilde{X})
=\sum_{t=1}^T\E^{\Q}_0\big[X^{r}_t\big]+V_0(X)
\end{align*}
and, for all $t\in\{0,\dots,T-1\}$, 
\begin{align*}
\widetilde{C}_t&=
\E^{\Q}_t\big[(\rho_t(-\widetilde{X}_{t+1}-V_{t+1}(\widetilde{X}))-\widetilde{X}_{t+1}-V_{t+1}(\widetilde{X}))_+\big]\\
&=\E^{\Q}_t\big[(\rho_t(-X_{t+1}-V_{t+1}(X))-X_{t+1}-V_{t+1}(X))_+\big]\\
&=C_t.
\end{align*} 

(ii) If there is an $\calF_0$-measurable $K$ such that $\sum_{t=1}^TX_t=K$, then $C_t=0$ for all $t\in\{0,\dots,T-1\}$ and $K=V_0$. 

(iii) If, for all $t\in\{0,\dots,T-1\}$, $C_t=0$ and $\rho_t$ has the property 
\begin{align}\label{eq:niceriskmeas}
\text{if } Y \in L^p(\calF_{t+1},\P) \text{ and } \P_t(Y\geq \rho_t(-Y))=1, 
\text{ then } Y \in L^p(\calF_{t},\P),
\end{align}
then $\sum_{t=1}^TX_t=V_0$.
\end{theorem}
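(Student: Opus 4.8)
The plan is to derive all three parts from translation invariance of the maps $\phi_t$ and $\rho_t$ (properties \eqref{eq:ti_w} and \eqref{eq:ti_r}), together with the composition representation \eqref{eq:VtWs_rep} that is available under the standing hypotheses; only part (iii) will require the extra structural property \eqref{eq:niceriskmeas}. A guiding observation is that the integrand $\rho_t(-Y)-Y$ defining $C_t$ via \eqref{eq:cy}, with $Y=X_{t+1}+V_{t+1}$, is insensitive to deterministic shifts and collapses to $0$ precisely when $Y$ is $\calF_t$-measurable.

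For part (i), set $B_t:=\sum_{s=t+1}^T b_s$, which is $\calF_0$-measurable and hence $\calF_j$-measurable for every $j$. Since $\widetilde{X}_s=X_s-b_s$, the representation \eqref{eq:VtWs_rep} gives $V_t(\widetilde{X})=\phi_t\circ\dots\circ\phi_{T-1}\big(\sum_{s=t+1}^T X_s-B_t\big)$, and peeling off the constant $B_t$ one layer at a time using \eqref{eq:ti_w} (legitimate because $B_t$ is measurable with respect to each intermediate $\sigma$-algebra) yields $V_t(\widetilde{X})=V_t(X)-B_t$. Combined with $\sum_{t=1}^T\E^{\Q}_0[\widetilde{X}^r_t]=\sum_{t=1}^T\E^{\Q}_0[X^r_t]+B_0$ this gives the first identity. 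For $\widetilde{C}_t=C_t$, I would write $Y:=X_{t+1}+V_{t+1}(X)$, verify $\widetilde{X}_{t+1}+V_{t+1}(\widetilde{X})=Y-B_t$, and then use \eqref{eq:ti_r} to compute $\rho_t(-(Y-B_t))-(Y-B_t)=\rho_t(-Y)-Y$, so the argument of $(\cdot)_+$ is unchanged and the $\Q$-expectations agree.

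For part (ii), note that $\sum_{s=t+1}^T X_s=K-\sum_{s=1}^t X_s$ is $\calF_t$-measurable. Feeding this into \eqref{eq:VtWs_rep} and using that each $\phi_j$ acts as the identity on $\calF_j$-measurable inputs (apply \eqref{eq:ti_w} with $Y=0$ and \eqref{eq:ph_w}) gives $V_t=\sum_{s=t+1}^T X_s$, and in particular $V_0=K$. Then $X_{t+1}+V_{t+1}$ is $\calF_t$-measurable, so $\rho_t(-(X_{t+1}+V_{t+1}))=X_{t+1}+V_{t+1}$ by \eqref{eq:ti_r} and \eqref{eq:ph_r}, whence the integrand in $\gamma_t(X_{t+1}+V_{t+1})$ vanishes and $C_t=0$ by \eqref{eq:Ct_def} and \eqref{eq:cy}.

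For part (iii), the crux is to upgrade the scalar identity $C_t=0$ into genuine $\calF_t$-measurability of $Y:=X_{t+1}+V_{t+1}$. Since $(\rho_t(-Y)-Y)_+\geq 0$ has vanishing conditional $\Q$-expectation and $\Q\sim\P$, it is zero a.s., i.e. $\P_t(Y\geq\rho_t(-Y))=1$; property \eqref{eq:niceriskmeas} then forces $Y\in L^p(\calF_t,\P)$, so $\rho_t(-Y)=Y$ and, using $V_t=\rho_t(-Y)-C_t=\rho_t(-Y)$ from \eqref{eq:Vt_def}, \eqref{eq:vy}, we obtain $V_t=X_{t+1}+V_{t+1}$. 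Telescoping over $t=0,\dots,T-1$ with $V_T=0$ yields $V_0=\sum_{s=1}^T X_s$. I expect this final part to be the main obstacle: it is the only place where translation invariance and normalization do not suffice, and the passage from $C_t=0$ to almost-sure domination $Y\geq\rho_t(-Y)$ and then to measurability of $Y$ is exactly what the hypothesis \eqref{eq:niceriskmeas} is designed to supply, ruling out degenerate risk measures for which $C_t=0$ could hold without the aggregate cash flow being deterministic.
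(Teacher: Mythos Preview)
Your proposal is correct and follows essentially the same approach as the paper: all three parts hinge on the composition formula \eqref{eq:VtWs_rep} together with translation invariance and normalization of $\rho_t$ and $\phi_t$, with \eqref{eq:niceriskmeas} invoked precisely to pass from $C_t=0$ to $\calF_t$-measurability of $X_{t+1}+V_{t+1}$ in part~(iii). Your telescoping argument $V_t=X_{t+1}+V_{t+1}$ in part~(iii) is a slight streamlining of the paper's equivalent statement $\phi^{\circ}_{t+1,T-1}\big(\sum_s X_s\big)=\phi^{\circ}_{t,T-1}\big(\sum_s X_s\big)$, but the content is identical.
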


\begin{corollary}\label{cor:Ct_zero}
Suppose that $(V_t)_{t=0}^{T}$ satisfies \eqref{eq:Vt_def} with $(\phi_t)_{t=0}^{T-1}$ given by \eqref{eq:vy} with $\rho_t=\ES_{t,p}$ for some $p\in (0,1)$ and all $t$. Then 
$C_t=0$ for all $t\in\{0,\dots,T-1\}$ if and only if $\sum_{t=1}^TX_t=V_0$.
\end{corollary}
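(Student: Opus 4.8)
The plan is to read Corollary \ref{cor:Ct_zero} as the specialization of Theorem \ref{thm:CandLproperties}(ii)--(iii) to $\rho_t=\ES_{t,p}$, and to prove the two implications separately. For ``$\sum_{t=1}^TX_t=V_0\Rightarrow C_t=0$ for all $t$'', I would use that $V_0$ is $\calF_0$-measurable and $\calF_0=\{\emptyset,\Omega\}$, so $V_0$ is a deterministic constant; the hypothesis therefore says precisely that $\sum_{t=1}^TX_t=K$ with $K:=V_0$ an $\calF_0$-measurable constant, and Theorem \ref{thm:CandLproperties}(ii) immediately gives $C_t=0$ for all $t$ (and, consistently, $K=V_0$).

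For the converse ``$C_t=0$ for all $t\Rightarrow\sum_{t=1}^TX_t=V_0$'' I would invoke Theorem \ref{thm:CandLproperties}(iii): once one knows that $\rho_t=\ES_{t,p}$ has property \eqref{eq:niceriskmeas} for every $t$, part (iii) yields $\sum_{t=1}^TX_t=V_0$ with no further work. Hence the only genuine content beyond Theorem \ref{thm:CandLproperties} is the verification of \eqref{eq:niceriskmeas} for Expected Shortfall, which I expect to be the crux.

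To verify \eqref{eq:niceriskmeas}, fix $t$ and let $Y\in L^r(\calF_{t+1},\P)$ (with $r$ the integrability exponent in \eqref{eq:niceriskmeas}) satisfy $Y\geq\ES_{t,p}(-Y)$ almost surely. Writing $q(u):=F_{t,Y}^{-1}(u)$ for the nondecreasing conditional quantile function of $Y$ given $\calF_t$, a change of variables turns the definition of $\ES_{t,p}$ into $\ES_{t,p}(-Y)=\frac{1}{p}\int_{1-p}^{1}q(u)\,du$. The hypothesis $Y\geq\ES_{t,p}(-Y)$ forces $\essinf_t Y\geq\frac{1}{p}\int_{1-p}^{1}q(u)\,du$, whereas monotonicity of $q$ together with $1-p>0$ gives the reverse chain $\frac{1}{p}\int_{1-p}^{1}q(u)\,du\geq q(1-p)\geq q(0+)=\essinf_t Y$. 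All four quantities therefore coincide; $q(0+)=q(1-p)$ makes $q$ constant on $(0,1-p)$, and equality of the average $\frac{1}{p}\int_{1-p}^{1}q$ with $q(1-p)$ forces $q=q(1-p)$ a.e.\ on $(1-p,1)$ as well. Thus $q$ is a.e.\ constant on $(0,1)$, the conditional law of $Y$ given $\calF_t$ is degenerate, and $Y$ is $\calF_t$-measurable; being in $L^r(\calF_{t+1},\P)$ it then lies in $L^r(\calF_t,\P)$, which is \eqref{eq:niceriskmeas}.

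The main obstacle is precisely this sandwich, and its decisive ingredient is the strict inequality $p<1$, which guarantees $1-p>0$ and hence the existence of a genuine lower portion of the conditional distribution below its $p$-tail; this is what lets the average of the top-$p$ quantiles dominate $q(1-p)$ and squeeze $Y$ down to a constant. For $p=1$ (Expected Shortfall degenerating to the conditional mean) the sandwich collapses and the implication would fail, so the role of $p\in(0,1)$ in the hypothesis is essential rather than cosmetic.
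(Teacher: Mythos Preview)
Your proposal is correct and follows the same strategy as the paper: both reduce the corollary to Theorem \ref{thm:CandLproperties}(ii)--(iii) and then verify property \eqref{eq:niceriskmeas} for $\ES_{t,p}$ by showing that the conditional quantile function $q=F_{t,Y}^{-1}$ must be constant. The paper argues in two contradiction steps (first $\P_t(Y\geq q(1-p))=1$, then ruling out $q(u)>q(1-p)$ for $u>1-p$ via $\P_t(Y>q(1-p))\leq p<1$), whereas you package the same inequalities into the single sandwich $\essinf_t Y\geq\ES_{t,p}(-Y)\geq q(1-p)\geq q(0+)=\essinf_t Y$; the content is identical.
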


The following example illustrates the valuation procedure under a strong independence assumption. Although this assumption is artificial it leads to explicit formulas from which conclusions can be drawn. 

\begin{example}\label{ex:independent_cf}
Consider a cash flow $(X_t)_{t=1}^T$ such that $X_t$ is independent of $\calF_s$ for $s<t$ and suppose that $\rho_t$ is a conditional monetary risk measure satisfying \eqref{def:niceriskmeas}. Then, since $X_T$ is independent of $\calF_{T-1}$,  
\begin{align*}
V_{T-1}&=\rho_{T-1}(-X_{T})-\E^{\Q}_{T-1}[(\rho_{T-1}(-X_{T})-X_{T})_+]\\
&=\rho_{0}(-X_{T})-\E^{\Q}_{0}[(\rho_{0}(-X_{T})-X_{T})_+],
\end{align*}
and since $V_{T-1}$ here is nonrandom and $X_{T-1}$ is independent of $\calF_{T-2}$, 
\begin{align*}
V_{T-2}&=\rho_{T-2}(-X_{T-1}-V_{T-1})\\
&\quad-\E^{\Q}_{T-2}[(\rho_{T-2}(-X_{T-1}-V_{T-1})-X_{T-1}-V_{T-1})_+]\\
&=V_{T-1}+\rho_{0}(-X_{T-1})-\E^{\Q}_{0}[(\rho_{0}(-X_{T-1})-X_{T-1})_+].
\end{align*}
Repeating the above arguments yields 
\begin{align*}
V_t=V_{t+1}+\rho_0(-X_{t+1})-\E^{\Q}_0[(\rho_0(-X_{t+1})-X_{t+1})_+], \quad V_T=0,
\end{align*} 
i.e.
\begin{align*}
V_t=\sum_{s=t+1}^T\Big(\rho_0(-X_{s})-\E^{\Q}_0[(\rho_0(-X_{s})-X_{s})_+]\Big)
\end{align*}
is nonrandom for all $t$. In particular, 
\begin{align*}
R_t&=\rho_0(-X_{t+1}-V_{t+1})\\
&=\rho_0(-X_{t+1})+V_{t+1},\\
C_t&=\E^{\Q}_0[(\rho_0(-X_{t+1}-V_{t+1})-X_{t+1}-V_{t+1})_+]\\
&=\E^{\Q}_0[(\rho_0(-X_{t+1})-X_{t+1})_+]. 
\end{align*}
If further there exist an iid sequence $(Z_t)_{t=1}^T$ and nonrandom sequences $(\mu_t)_{t=1}^T$ and $(\sigma_t)_{t=1}^T$, with $\sigma_t$ nonnegative, such that $X_t$ and $\mu_t+\sigma_tZ_t$ are equal in distribution, then 
\begin{align*}
C_t&=\sigma_t\E^{\Q}_0[(\rho_0(-Z_1)-Z_1)_+],\\
V_t&=\sum_{s=t+1}^T\mu_s+\Big(\sum_{s=t+1}^T\sigma_s\Big)\Big(\rho_0(-Z_1)-\E^{\Q}_0[(\rho_0(-Z_1)-Z_1)_+]\Big).
\end{align*}
In particular, it may be impossible to find a relation connecting the values $C_t$.
\end{example}

\subsection{Replicating portfolios}\label{sec:rp}

Definition \ref{Lt-definition} defines the initial value $L_0$ of the liability cash flow as the sum of the market price $\E^{\Q}_0[\sum_{t=1}^TX^{r}_t]$ of a replicating portfolio and the value $V_0$ of the residual liability cash flow. Alternatively, $L_0$ is the $\Q$ expectation of the cash flow to the policyholders, optimally stopped by the owner of the reference undertaking. 
As stated in Remark \ref{rem:market_consistency1}, in order to talk about \emph{the} value of the liability cash flow, the choice of replicating portfolio must be made. 
The original insurance company should decide, based on regulatory requirements specifying an appropriate criterion, on a replicating portfolio earmarked for the liability cash flow. 
(Without regulatory requirements, the characteristics of replicating portfolios may differ substantially as a result of varying preferences among insurance companies.)
Externally imposed capital requirements then specifies how this replicating portfolio must be modified in terms of a position $R_0$ in the num\'eraire asset. The final replicating portfolio, after modification by the position $R_0$ in the num\'eraire asset, will always be acceptable according to the regulatory framework specifying the capital requirement. 

It should be emphasized that in the general setting the cash flow $X^{r}$ of the replicating portfolio may come from an arbitrarily sophisticated dynamic strategy depending on the chosen criterion for replicating portfolio selection and the set of replication instruments. The only hard restriction is that the cash flow $X^{r}$ is a result of a portfolio or strategy decided by the original insurance company and that the owner of the reference undertaking may not in any way influence the outcome of $X^{r}$.

From a regulator's perspective it is reasonable to demand a replicating portfolio that reasonably well replicates the liability cash flow, resulting in a fairly modest initial value $C_0$ and subsequent values $C_t$ of the reference undertaking. The insurance supervisor can enforce that the original insurance company has the replicating portfolio and the num\'eraire position $R_0$ reserved for the liability cash flow. However, the supervisor cannot guarantee that potentially large amounts $C_t$ will be provided from an external party upon a transfer of the liability cash flow and throughout the liability runoff. 
From a practical perspective, a well-replicated liability corresponding to smaller $C_t$ is more likely to be successfully transferred in case the original insurance company experiences financial stress.
The mathematical problem of choosing the replicating portfolio in order to minimize the expectations of all $C_t$, 
see \eqref{eq:sum_c_mini}, 
is analyzed in detail below together with well-studied alternatives. 

Consider $m$ (discounted) cash flows $X^{f,k}=(X^{f,k}_t)_{t=1}^T$, $k=1,\dots,m$ of available financial instruments and denote by $X^f$ the $\R^m$-valued process such that $X^f_t$ denotes the (column) vector of time-$t$ cash flows of the $m$ instruments. A portfolio with portfolio-weight vector $v\in\R^m$, representing the number of units of the $m$ instruments, generates the cash flow $v^{\trans}X^f_t$ at time $t$. 

Various criteria for selection of replicating portfolio have been considered in the literature.
The optimization problem
\begin{align}\label{eq:cash-flow-matching}
\inf_{v\in\R^m}\sum_{t=1}^T\E^{\Q}_0\Big[(X^o_t-v^{\trans}X^f_t)^2\Big]^{1/2}
\end{align}
is referred to as cash flow matching in \cite{Natolski-Werner-17}. Under mild conditions, it is shown in Theorems 1 (and 2) in \cite{Natolski-Werner-17} that an optimal (unique optimal) solution exists. 
An alternative cash-flow-matching problem is 
\begin{align}\label{eq:cash-flow-matching_alt}
\inf_{v\in\R^m}\sum_{t=1}^T\E^{\Q}_0\Big[(X^o_t-v^{\trans}X^f_t)^2\Big].
\end{align}
Comparisons between \eqref{eq:cash-flow-matching} and \eqref{eq:cash-flow-matching_alt} are found in \cite{Natolski-Werner-14}.
The optimization problem 
\begin{align}\label{eq:terminal-value-matching}
\inf_{v\in\R^m}\E^{\Q}_0\Big[\Big(\sum_{t=1}^T(X^o_t-v^{\trans}X^f_t)\Big)^2\Big]^{1/2}
\end{align}
is referred to as terminal-value matching in \cite{Natolski-Werner-14}, \cite{Natolski-Werner-16} and \cite{Natolski-Werner-17}. It is a standard quadratic optimization problem with explicit solution 
\begin{align*}
\widehat{v}=\E^{\Q}_0\left[\left(\begin{array}{ccc}
X^{f,1}_{\cdot}X^{f,1}_{\cdot} & \dots & X^{f,1}_{\cdot}X^{f,m}_{\cdot}\\
\vdots & & \vdots \\
X^{f,m}_{\cdot},X^{f,1}_{\cdot} & \dots & X^{f,m}_{\cdot}X^{f,m}_{\cdot}
\end{array}\right)\right]^{-1}
\E^{\Q}_0\left[\left(\begin{array}{c}
X^{o}_{\cdot}X^{f,1}_{\cdot}\\
\vdots \\
X^{o}_{\cdot}X^{f,m}_{\cdot}
\end{array}\right)\right]
\end{align*}
provided that the matrix inverse exists, where the subscript $\cdot$ means summation over the index $t$.

A replicating portfolio selection criterion should have the property that if perfect replication is possible, then the optimal replicating portfolio cash flow $\widehat{v}^{\trans}X^f$ satisfies $X^o=\widehat{v}^{\trans}X^f$. This requirement ensures market-consistent liability values: $L_0=\sum_{t=1}^T\E^{\Q}_0[X^o_t]$ for a replicable liability cash flow.

\begin{remark}\label{rem:Pexpectation}
The versions of the optimization problems \eqref{eq:cash-flow-matching}, \eqref{eq:cash-flow-matching_alt} and \eqref{eq:terminal-value-matching} obtained by replacing 
the expectation $\E^{\Q}_0$ by $\E^{\P}_0$ may also be reasonable. Notice that if the only available replication instruments are zero-coupon bonds in the num\'eraire asset of all maturities $t=1,\dots,T$
(or, equivalently, European call options on the num\'eraire asset with maturities $t=1,\dots,T$ and common strike price $0$), then $m=T$ and $X^f$ is the $T\times T$ identity matrix.
In this case,
\begin{align*}
\inf_{v\in\R^m}\sum_{t=1}^T\E^{\P}_0\Big[(X^o_t-v^{\trans}X^f_t)^2\Big]
=\inf_{v\in\R^m}\sum_{t=1}^T\E^{\P}_0\Big[(X^o_t-v_t)^2\Big],
\end{align*}
and the unique optimal solution is $\widehat{v}=\E^{\P}_0[X^{o}]$ which is referred to as the actuarial best-estimate reserve.

Notice that, given the above set of replication instruments, any $\widehat{v}$ satisfying $\sum_{t=1}^T\widehat{v}_t=\sum_{t=1}^T\E^{\P}_0[X^{o}_t]$ is an optimal solution to the version of the terminal value problem \eqref{eq:terminal-value-matching} obtained by replacing 
the expectation $\E^{\Q}_0$ by $\E^{\P}_0$.
\end{remark}

In our setting, the value of the reference undertaking at time $t$ is 
\begin{align*}
C_t=\E^{\Q}_t[(\rho_t(-X_{t+1}-V_{t+1})-X_{t+1}-V_{t+1})_+].
\end{align*}
The amount $C_t$ must be provided by the owner of the reference undertaking in order to meet the externally imposed capital requirements. From a practical perspective, an insurance supervisor cannot guarantee that the amount $C_t$ is provided. Not providing $C_t$ means that the owner terminates ownership and that the replicating portfolio is passed on to the policyholders. Poor replication leads to uncertainty whether continued liability payments will be possible. 
Therefore, it is in the policyholders interest that the regulator enforces good initial replication that makes it likely that all $C_t$ are small.
We therefore consider the optimization problem
\begin{align}\label{eq:sum_c_mini}
&\inf_{v\in\R^m}\psi(v),
\quad \psi(v):=\E^{\Q}_0\Big[\max_{t\in\{0,\dots,T-1\}}C_t^v\Big],
\end{align}
where, for $t=0,\dots,T-1$,
\begin{align*}
C_t^v:=\E^{\Q}_t\Big[(R^{v}_t-X^{v}_{t+1}-V^{v}_{t+1})_+\Big]
\end{align*} 
with
$X^{v}:=X^o-v^{\trans}X^f$, $V^{v}_{t}:=V_{t}(X^{v})$, $R^{v}_t:=\rho_t(-X^{v}_{t+1}-V^{v}_{t+1})$.

Notice that, due to the properties \eqref{eq:ti_r} and \eqref{eq:ti_w},
the objective function in the optimization problem \eqref{eq:sum_c_mini} is invariant under translations of $X^{v}$ by constant vectors (risk-free cash flows). Consequently, \eqref{eq:sum_c_mini} will not have a unique optimal solution if risk-free cash flows in the num\'eraire asset are included as replication instruments. See Theorem \ref{thm:CandLproperties} below for a more precise statement.

\begin{remark}
If $(R^S_t)_{t=0}^{T}$, given by $R^S_t:=\sum_{s=1}^{t}X_s+R_t$, is a $(\Q,\filF)$-supermar\-tingale, then $(C_t)_{t=0}^T$ is a $(\Q,\filF)$-supermartingale:
\begin{align*}
C_t&=\E^{\Q}_t[(R_t-X_{t+1}-R_{t+1}+C_{t+1})_+]\\
&\geq R_t-\E^{\Q}_t[X_{t+1}+R_{t+1}]+\E^{\Q}[C_{t+1}]\\
&\geq \E^{\Q}[C_{t+1}].
\end{align*} 
In particular, in this case $C_0=0$ implies that $C_t=0$ for all $t\in\{0,\dots,T\}$. 
Notice that $(R^S_t)_{t=0}^T$ is a $(\Q,\filF)$-supermartingale exactly when $\E^{\Q}_t[R_t-R_{t+1}-X_{t+1}]\geq 0$ for all $t\in\{0,\dots,T-1\}$. From this observation and \eqref{C-cashflow} follows that the assumption that $(R^S_t)_{t=0}^T$ is a $(\Q,\filF)$-supermartingale means that, at every time $t$,  the financial value of the next-period cash flow to the owner of the reference undertaking is non-negative.
Finally, it should be noted that it is easy to find examples where $(C_t)_{t=0}^T$ is not a $(\Q,\filF)$-supermartingale, see Example \ref{ex:independent_cf} above.
\end{remark}

The optimization problems \eqref{eq:cash-flow-matching}-\eqref{eq:sum_c_mini} can all be expressed as
\begin{align*}
\inf_{v\in\R^m}\Psi(X^o-v^{\trans}X^f)
\end{align*}
for a mapping $\Psi:L^p((\calF_t)_{t=1}^T,\Q)\to\R_+$ satisfying $\Psi(0)=0$, i.e.~optimality of perfect replication.
Existence of a minimizer $\widehat{X}^r:=\widehat{v}^{\trans}X^f$ can be expressed as 
\begin{align}
\Psi(X^o-\widehat{X}^r)=\inf_{v\in\R^m}\Psi(X^o-v^{\trans}X^f).\label{eq:replicating_condition1}
\end{align}
Conditions for existence of a minimizer $\widehat{v}$ in \eqref{eq:sum_c_mini} are presented in Theorem \ref{thm:solutionexists} below.

\begin{remark}\label{rem:market_consistency2}
Consider a liability cash flow $X^o$ that is not fully replicable and assume the existence of a unique minimizer $\widehat{v}$ to \eqref{eq:replicating_condition1}. The value of $X^o$ is
\begin{align*}
L_0(X^o)=\widehat{v}^{\trans}\E^\Q\Big[\sum_{t=1}^TX^f_t\Big]+V_0(X^o-\widehat{v}^{\trans}X^f).
\end{align*}
Consider another liability cash flow $\widetilde{X}^o:=X^o+Y^o$ which only differs from $X^o$ by a fully replicable term $Y^o=v_Y^{\trans}X^f$. Since
\begin{align*}
\inf_{v\in\R^m}\Psi(\widetilde{X}^o-v^{\trans}X^f)
&= \inf_{v\in\R^m}\Psi(\widetilde{X}^o-(v+v_Y)^{\trans}X^f) \\
&=\inf_{v\in\R^m}\Psi(X^o-v^{\trans}X^f),
\end{align*} 
the optimal replicating portfolio weights are given by $\widehat{v}+v_Y$ from which follows that the value of $\widetilde{X}^o$ is 
\begin{align*}
L_0(\widetilde{X}^o)&=(\widehat{v}+v_Y)^{\trans}\E^\Q\Big[\sum_{t=1}^TX^f_t\Big]+V_0(\widetilde{X}^o-(\widehat{v}+v_Y)^{\trans}X^f)\\
&=(\widehat{v}+v_Y)^{\trans}\E^\Q\Big[\sum_{t=1}^TX^f_t\Big]+V_0(X^o-\widehat{v}^{\trans}X^f)\\
&=\E^\Q\Big[\sum_{t=1}^TY^o_t\Big]+L_0(X^o),
\end{align*}
i.e.~the value of $\widetilde{X}^o$ differs from that of $X^o$ by the market price of $Y^o$. 
This property of $L_0$ seen as a valuation operator is essentially market-consistency in the sense of Definition 3.1 in \cite{Pelsser-Stadje-14}.
\end{remark}

\begin{remark}
The deterministic replicating portfolio cash flow $\widehat{X}^r=\E^{\P}_0[X^o]$ corresponds to a classical actuarial best-estimate reserve, and solves a cash-flow-matching problem with only risk-free cash flows in the num\'eraire asset as replication instruments,
see Remark \ref{rem:Pexpectation}. 
In this case, by Theorem \ref{thm:CandLproperties},
\begin{align*}
L_0=\sum_{t=1}^T\E^{\P}[X^o_t]+V_0(X^o-\E^{\P}[X^o])=V_0(X^o).
\end{align*} 
In particular, if $V_0(X^o)\geq \sum_{t=1}^T\E^{\P}[X^o_t]$, then $L_0\geq \sum_{t=1}^T\E^{\P}[X^o_t]$.
As noted in Remark \ref{rem:Pexpectation}, any deterministic cash flow $\widehat{X}^r$ with $\sum_{t=1}^T\widehat{X}^r_t=\sum_{t=1}^T\E^{\P}_0[X^{o}_t]$ is an optimal solution to the (alternative) terminal value problem 
\begin{align*}
\inf_{v\in\R^m}\E^{\P}_0\Big[\Big(\sum_{t=1}^T(X^o_t-v^{\trans}X^f_t)\Big)^2\Big],
\end{align*}
with only risk-free cash flows in the num\'eraire asset as replication instruments. In this case, by Theorem \ref{thm:CandLproperties},
\begin{align*}
L_0=\sum_{t=1}^T\E^{\Q}[\widehat{X}^r_t]+V_0\big(X^o-\widehat{X}^r\big)=V_0(X^o).
\end{align*} 
\end{remark}

We now address the questions of existence of an optimal replicating portfolio according to 
the portfolio selection criterion \eqref{eq:sum_c_mini}, and continuity of the value of the liability cash flow as a function of the portfolio weights of the replicating portfolio.

For $t\in\{1,\dots,T\}$, define 
\begin{align}\label{eq:Zt}
Z_t:=(X^o_{t},-(X^f_{t})^\trans)^\trans
\end{align}

and, for $w\in\R^{m+1}$, $\widetilde{X}^{w}_t:=w^{\trans}Z_t$. Notice that a residual liability cash corresponds to $\widetilde{X}^{w}$ with $w_1=1$. The reason for introducing this notation is primarily that it allows us to formulate sufficient conditions for coerciveness that will lead to sufficient conditions for the existence of an optimal replicating portfolio, see Theorem \ref{thm:V0coerciveness} below. 

\begin{theorem}\label{thm:V0continuity}
Let $(D_t)_{t=0}^T$ satisfy either of the conditions $(i)$ or $(ii)$ in Theorem \ref{lem:basic_lem}. Suppose that, for each $t\in\{0,\dots,T-1\}$, $\rho_t:L^p(\calF_{t+1},\P)\to L^p(\calF_t,\P)$ in \eqref{eq:vy} is a conditional monetary risk measure in the sense of Definition \ref{def:dynrisk} for every $p\in [1,\infty]$ that is $L^1$-Lipschitz continuous in the sense
\begin{align*}
|\rho_t(-Y)-\rho_t(-\widetilde{Y})|\leq K \E^{\P}_t[|Y-\widetilde{Y}|], \quad Y,\widetilde{Y}\in L^1(\calF_{t+1},\P).
\end{align*}   
for some $K\in (0,\infty)$. 
If $(Z_t)_{t=1}^T\in L^p((\calF_t)_{t=1}^T,\P)$ for some $p>1$, then
\begin{align*}
\R^{m+1}\ni w\mapsto \phi_0\circ \dots \circ \phi_{T-1}\Big(\sum_{t=1}^T \widetilde{X}^{w}_t\Big)
\end{align*}
and $\R^m\ni v\mapsto V_0(X^{v})$ are Lipschitz continuous.
\end{theorem}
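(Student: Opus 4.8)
The plan is to factor both maps through the composition $\Phi:=\phi_0\circ\dots\circ\phi_{T-1}$ and reduce everything to a single Lipschitz estimate for each $\phi_t$ (all maps being well defined by Theorem \ref{lem:basic_lem}). First observe that $\sum_{t=1}^T\widetilde{X}^{w}_t=w^{\trans}\sum_{t=1}^T Z_t$ is a bounded linear map from $\R^{m+1}$ into $L^p(\calF_T,\P)$, since $(Z_t)_{t=1}^T\in L^p$; hence $w\mapsto\sum_t\widetilde{X}^w_t$ is Lipschitz, and it suffices to show $\Phi$ is Lipschitz on the relevant $L^p$-space. Because $X^{v}_t=\widetilde{X}^{(1,v)}_t$, the map $v\mapsto V_0(X^{v})=\Phi(\sum_t X^{v}_t)$ is the composition of the affine (hence Lipschitz) inclusion $v\mapsto(1,v)$ with $w\mapsto\Phi(\sum_t\widetilde{X}^w_t)$, so its Lipschitz continuity follows from that of the first map. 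It therefore remains to prove that each $\phi_t$ is Lipschitz between appropriate $L^s$-spaces and to compose.

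Second, I would establish the pointwise single-step bound
\begin{align*}
|\phi_t(Y)-\phi_t(\widetilde{Y})|\leq 2K\,\E^{\P}_t[|Y-\widetilde{Y}|]+\E^{\Q}_t[|Y-\widetilde{Y}|].
\end{align*}
This uses three ingredients: the assumed $L^1$-Lipschitz continuity of $\rho_t$, the fact that $x\mapsto(x)_+$ is $1$-Lipschitz, and the $\calF_t$-measurability of $\rho_t(-Y)-\rho_t(-\widetilde{Y})$. Writing $\phi_t(Y)=\rho_t(-Y)-\gamma_t(Y)$, the $\rho_t$-term is bounded by $K\E^{\P}_t[|Y-\widetilde{Y}|]$; for $\gamma_t(Y)=\E^{\Q}_t[(\rho_t(-Y)-Y)_+]$ one applies the $1$-Lipschitz property of the positive part inside $\E^{\Q}_t$ and then splits the resulting bound into the $\calF_t$-measurable $\rho_t$-difference (again controlled by $K\E^{\P}_t[|Y-\widetilde{Y}|]$) and the term $\E^{\Q}_t[|Y-\widetilde{Y}|]$.

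Third, I would convert the $\Q$-expectation into a $\P$-expectation via $\E^{\Q}_t[\,\cdot\,]=\E^{\P}_t[(D_{t+1}/D_t)\,\cdot\,]$ and treat the two cases separately; this is where I expect the \emph{main obstacle}. Under hypothesis $(i)$, $D_{t+1}/D_t\leq M$ gives $\E^{\Q}_t[|Y-\widetilde{Y}|]\leq M\E^{\P}_t[|Y-\widetilde{Y}|]$, hence $|\phi_t(Y)-\phi_t(\widetilde{Y})|\leq(2K+M)\E^{\P}_t[|Y-\widetilde{Y}|]$, and since conditional expectation is an $L^s$-contraction for $s\geq 1$ one obtains $\|\phi_t(Y)-\phi_t(\widetilde{Y})\|_{L^s}\leq(2K+M)\|Y-\widetilde{Y}\|_{L^s}$ with no loss of integrability; composing over $t$ yields Lipschitz constant $(2K+M)^T$ from $L^p$ into $\R$ (recall $\calF_0$ is trivial, so the output is a constant). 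The delicate case is $(ii)$, where $D_{t+1}/D_t$ is only assumed to have all moments finite. Here I would apply the conditional Hölder inequality to $\E^{\P}_t[(D_{t+1}/D_t)|Y-\widetilde{Y}|]$ with conjugate exponents $(a,b)$, then unconditional Hölder together with conditional Jensen, to obtain $\|\phi_t(Y)-\phi_t(\widetilde{Y})\|_{L^s}\leq C_t\|Y-\widetilde{Y}\|_{L^{s'}}$ for some $s<s'$. Crucially, since $D_{t+1}/D_t\in L^r$ for every $r$, the factor $(\E^{\P}_t[(D_{t+1}/D_t)^a])^{1/a}$ lies in every $L^r$, so the gap $s'-s$ can be made arbitrarily small. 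As there are only $T$ compositions and the starting exponent $p$ strictly exceeds $1$, I would distribute these arbitrarily small integrability losses so that every intermediate exponent remains at or above $1$ (keeping the contraction and Jensen steps valid) and the final output lands in some $L^s$ with $s\geq 1$; Lipschitz continuity of $\Phi$ in this $L^s$-norm, being an estimate for a constant, is exactly Lipschitz continuity into $\R$. This exponent bookkeeping in case $(ii)$ is the only genuinely technical part.
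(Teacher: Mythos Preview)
Your proposal is correct. The single-step pointwise estimate
\[
|\phi_t(Y)-\phi_t(\widetilde Y)|\le 2K\,\E^{\P}_t[|Y-\widetilde Y|]+\E^{\Q}_t[|Y-\widetilde Y|]
\]
is exactly the inequality the paper derives in its induction step (it appears there as the recursion $B_{t+1}=\E^{\P}_{t+1}[\,\|Z_{t+1}\|_\infty+B_{t+2}\,](2K+D_{t+1}/D_t)$), so the analytic core of both arguments coincides.

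The organization, however, is genuinely different. The paper keeps the estimate \emph{pointwise} through a backward induction, carrying a random Lipschitz coefficient $\E^{\P}_t[B_{t+1}]$ and incorporating the $\|Z_{t+1}\|_\infty$ contribution at each step; only at the very end does it verify $B_1\in L^1$ by a chain of H\"older inequalities. This has the pleasant feature that cases (i) and (ii) are handled by the \emph{same} recursion, with the distinction appearing only in the final integrability check. Your route instead passes to $L^s$-norms immediately, proving that each $\phi_t$ is Lipschitz from $L^{s'}(\calF_{t+1},\P)$ to $L^{s}(\calF_t,\P)$ and composing; the exponent bookkeeping in case (ii) is the price you pay for this modularity. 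What you gain is a slightly stronger conclusion: you actually show that $\Phi=\phi_0\circ\dots\circ\phi_{T-1}$ is Lipschitz on all of $L^p(\calF_T,\P)$, not merely on the finite-dimensional affine image of $w\mapsto\sum_t w^{\trans}Z_t$, and the final Lipschitz constant factors cleanly as the product of the operator Lipschitz constants of the $\phi_t$ times the norm of the linear input map.
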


For $t=0,\dots,T-1$, set
\begin{align*}
\widetilde{V}^{w}_t&:=\phi_t\circ \dots \circ \phi_{T-1}\Big(\sum_{s=t+1}^T\widetilde{X}^{w}_s\Big),\\
\widetilde{R}^{w}_t&:=\rho_t\Big(-\widetilde{X}^{w}_{t+1}-\widetilde{V}^{w}_{t+1}\Big), \\
\widetilde{C}^{w}_t&:=\E^{\Q}_t\Big[(\widetilde{R}^{w}_t-\widetilde{X}^{w}_{t+1}-\widetilde{V}^{w}_{t+1})_+\Big],\\
\widetilde{\psi}(w)&:=\E^{\Q}_0\Big[\max_{t\in\{0,\dots,T-1\}}\widetilde{C}^{w}_t\Big].
\end{align*}

Under mild conditions it can be shown that $\widetilde{\psi}$ and $\psi$, given by \eqref{eq:sum_c_mini}, are coercive, i.e.
\begin{align*}
\lim_{|w|\to\infty}\widetilde{\psi}(w)=\infty,
\quad
\lim_{|v|\to\infty}\psi(v)=\infty.
\end{align*}

\begin{theorem}\label{thm:V0coerciveness}
Suppose, for $t=0,\dots,T-1$, that $\rho_t$ is positively homogeneous in the sense $\rho_t(\lambda Y)=\lambda \rho_t(Y)$ for $\lambda\in\R_+$. Suppose further that 
$\inf_{|w|=1}\widetilde{\psi}(w)>0$. Then $\lim_{|w|\to\infty}\widetilde{\psi}(w)=\infty$ and 
$\lim_{|v|\to\infty}\psi(v)=\infty$, where $\psi$ is given by \eqref{eq:sum_c_mini}.
\end{theorem}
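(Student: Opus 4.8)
The plan is to show that every object entering $\widetilde{\psi}$ is positively homogeneous of degree one in $w$, and then to combine this with the hypothesis $\inf_{|w|=1}\widetilde{\psi}(w)>0$ to obtain a linear lower bound forcing coercivity. First I would establish that each $\phi_t$ in \eqref{eq:vy} is positively homogeneous. Since $\rho_t(\lambda Y)=\lambda\rho_t(Y)$ for $\lambda\in\R_+$, the map $\gamma_t$ in \eqref{eq:cy} satisfies, for $\lambda>0$,
\begin{align*}
\gamma_t(\lambda Y)=\E^{\Q}_t[(\rho_t(-\lambda Y)-\lambda Y)_+]
=\lambda\,\E^{\Q}_t[(\rho_t(-Y)-Y)_+]=\lambda\gamma_t(Y),
\end{align*}
where the middle equality uses $\rho_t(-\lambda Y)=\lambda\rho_t(-Y)$ together with $(\lambda z)_+=\lambda(z)_+$ for $\lambda>0$. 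Hence $\phi_t(\lambda Y)=\rho_t(-\lambda Y)-\gamma_t(\lambda Y)=\lambda\phi_t(Y)$, and as a composition of positively homogeneous maps $\phi_t\circ\dots\circ\phi_{T-1}$ is positively homogeneous as well.

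Next I would propagate this through the chain of definitions preceding the statement. Since $\widetilde{X}^{\lambda w}_t=\lambda w^{\trans}Z_t=\lambda\widetilde{X}^{w}_t$ for $\lambda>0$, positive homogeneity of the composed maps gives $\widetilde{V}^{\lambda w}_t=\lambda\widetilde{V}^{w}_t$, and then positive homogeneity of $\rho_t$ gives $\widetilde{R}^{\lambda w}_t=\lambda\widetilde{R}^{w}_t$. Consequently
\begin{align*}
\widetilde{C}^{\lambda w}_t=\E^{\Q}_t[(\lambda\widetilde{R}^{w}_t-\lambda\widetilde{X}^{w}_{t+1}-\lambda\widetilde{V}^{w}_{t+1})_+]=\lambda\widetilde{C}^{w}_t,
\end{align*}
and since each $\widetilde{C}^{w}_t\geq 0$ the maximum over $t$ scales in the same way, yielding $\widetilde{\psi}(\lambda w)=\lambda\widetilde{\psi}(w)$ for all $\lambda>0$. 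Writing any $w\neq 0$ as $w=|w|\,(w/|w|)$ then gives
\begin{align*}
\widetilde{\psi}(w)=|w|\,\widetilde{\psi}(w/|w|)\geq |w|\inf_{|u|=1}\widetilde{\psi}(u),
\end{align*}
so the hypothesis $\inf_{|u|=1}\widetilde{\psi}(u)>0$ forces $\widetilde{\psi}(w)\to\infty$ as $|w|\to\infty$.

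Finally, to transfer coercivity from $\widetilde{\psi}$ to $\psi$ I would exploit that a residual liability cash flow corresponds to fixing the first coordinate of $w$ at one: taking $w=(1,v^{\trans})^{\trans}$ gives $\widetilde{X}^{w}_t=X^o_t-v^{\trans}X^f_t=X^v_t$, whence $\psi(v)=\widetilde{\psi}((1,v^{\trans})^{\trans})$. Since $|(1,v^{\trans})^{\trans}|\geq|v|$, the linear lower bound above yields $\psi(v)\geq|v|\inf_{|u|=1}\widetilde{\psi}(u)$, and therefore $\psi(v)\to\infty$ as $|v|\to\infty$. The argument is essentially bookkeeping once positive homogeneity is in place; the only step requiring genuine care is the very first one, namely verifying that $\gamma_t$, and hence $\phi_t$, inherits positive homogeneity from $\rho_t$ through the nonlinear operation $Y\mapsto(\rho_t(-Y)-Y)_+$, which succeeds precisely because the positive part commutes with multiplication by $\lambda>0$.
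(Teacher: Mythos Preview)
Your proof is correct and follows essentially the same approach as the paper: establish positive homogeneity of $\phi_t$ (and hence of $\widetilde{V}^{w}_t$, $\widetilde{R}^{w}_t$, $\widetilde{C}^{w}_t$, $\widetilde{\psi}$) from that of $\rho_t$, deduce the linear lower bound $\widetilde{\psi}(w)\geq |w|\inf_{|u|=1}\widetilde{\psi}(u)$, and then transfer to $\psi$ via the embedding $v\mapsto (1,v^{\trans})^{\trans}$. The paper's version is terser and leaves the verification of positive homogeneity and the final inequality implicit, but the argument is the same.
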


\begin{remark}
Notice that the condition $\inf_{|w|=1}\widetilde{\psi}(w)>0$ means that perfect replication is not possible.
It also disqualifies risk-free cash flows as replication instruments. The argument is as follows. If one of the replication instruments has a risk-free cash flow $x$ so that $X^{f,k}=x$ $\P$-a.s., then $X^{f,k}=x$ $\Q$-a.s. and $w^{\trans}Z=x$ for some $w\in\R^{m+1}$ with $|w|=1$. Then $\widetilde{\psi}(w)=0$.
\end{remark}

For $t\in\{0,\dots,T-1\}$, set 
\begin{align*}
\rho_{t,T-1}^{\circ}:=\left\{\begin{array}{ll}
\rho_t, & t=T-1,\\
\rho_t\circ (-\rho_{t+1})\circ \dots \circ (-\rho_{T-1}), & t<T-1.
\end{array}\right.
\end{align*}

\begin{theorem}\label{thm:solutionexists}
Suppose, for $t=0,\dots,T-1$, that $\rho_t$ is positively homogeneous in the sense $\rho_t(\lambda Y)=\lambda \rho_t(Y)$ for $\lambda\in\R_+$. Suppose further that $\psi$ in \eqref{eq:sum_c_mini} is continuous, and for all $w\in\R^{m+1}\setminus\{0\}$ there exists $t\in\{0,\dots,T-1\}$ such that  
\begin{align}\label{eq:coercive_cond}
\P\Big(\big(\rho_{t,T-1}^{\circ}-\rho_{t+1,T-1}^{\circ}\big)\big(-w^{\trans}(Z_{t+1}+\dots+Z_T)\big)>0\Big)>0,
\end{align}
where $Z_{t+1},\dots,Z_T$ are given by \eqref{eq:Zt}.
Then there exists an optimal solution $\widehat{v}\in \R^m$ to \eqref{eq:sum_c_mini}.
\end{theorem}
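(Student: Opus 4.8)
The plan is to deduce existence of a minimiser from coercivity together with the assumed continuity of $\psi$, obtaining coercivity from Theorem \ref{thm:V0coerciveness}. Since positive homogeneity of the $\rho_t$ is a hypothesis here, the only assumption of Theorem \ref{thm:V0coerciveness} that is not immediate is $\inf_{|w|=1}\widetilde{\psi}(w)>0$; once this is in hand Theorem \ref{thm:V0coerciveness} yields $\lim_{|v|\to\infty}\psi(v)=\infty$, and a standard argument finishes the proof: a minimising sequence $(v_n)$ for $\psi$ must be bounded (otherwise coercivity forces $\psi(v_n)\to\infty$ along a subsequence), hence has a convergent subsequence whose limit $\widehat{v}$ satisfies $\psi(\widehat{v})=\inf_v\psi(v)$ by continuity. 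Thus the whole content of the proof is to show that \eqref{eq:coercive_cond} implies $\inf_{|w|=1}\widetilde{\psi}(w)>0$.

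I would first reduce this to the pointwise claim $\widetilde{\psi}(w)>0$ for every $w\neq 0$. Since each $\widetilde{C}^{w}_t\geq 0$ and $\widetilde{\psi}(w)=\E^{\Q}_0[\max_t\widetilde{C}^{w}_t]$, we have $\widetilde{\psi}(w)=0$ exactly when $\widetilde{C}^{w}_t=0$ for all $t$. Now $\widetilde{C}^{w}_t=\gamma_t(\widetilde{X}^{w}_{t+1}+\widetilde{V}^{w}_{t+1})$, so by \eqref{eq:vy} the vanishing of $\widetilde{C}^{w}_t$ forces $\widetilde{V}^{w}_t=\rho_t(-\widetilde{X}^{w}_{t+1}-\widetilde{V}^{w}_{t+1})$; i.e. the recursion for $\widetilde{V}^{w}$ collapses to the default-free recursion $\widehat{V}^{w}_T:=0$, $\widehat{V}^{w}_t:=\rho_t(-\widetilde{X}^{w}_{t+1}-\widehat{V}^{w}_{t+1})$, and backward induction gives $\widetilde{V}^{w}_t=\widehat{V}^{w}_t$ for all $t$.

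The key step is to recognise the composed operators $\rho_{t,T-1}^{\circ}$ as these default-free values. Writing $S^{w}_t:=w^{\trans}(Z_{t+1}+\dots+Z_T)=\widetilde{X}^{w}_{t+1}+\dots+\widetilde{X}^{w}_T$ and using the translation invariance \eqref{eq:ti_r} of each $\rho_s$ repeatedly, an interleaved backward induction shows
\begin{align*}
\rho_{t,T-1}^{\circ}(-S^{w}_t)=\widehat{V}^{w}_t
\quad\text{and}\quad
\rho_{t+1,T-1}^{\circ}(-S^{w}_t)=\widehat{V}^{w}_{t+1}+\widetilde{X}^{w}_{t+1},
\end{align*}
the second identity because $-S^{w}_t=-\widetilde{X}^{w}_{t+1}+(-S^{w}_{t+1})$ and the $\calF_{t+1}$-measurable term $-\widetilde{X}^{w}_{t+1}$ can be pulled through $\rho_{t+1,T-1}^{\circ}$ by \eqref{eq:ti_r} (the alternating signs in the composition combine to leave a single overall shift). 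Subtracting,
\begin{align*}
\big(\rho_{t,T-1}^{\circ}-\rho_{t+1,T-1}^{\circ}\big)(-S^{w}_t)
=\widehat{V}^{w}_t-\widetilde{X}^{w}_{t+1}-\widehat{V}^{w}_{t+1}
=\rho_t(-\widetilde{X}^{w}_{t+1}-\widehat{V}^{w}_{t+1})-\widetilde{X}^{w}_{t+1}-\widehat{V}^{w}_{t+1},
\end{align*}
which is precisely the expression inside the positive part defining the default-free $\widehat{C}^{w}_t$. Consequently, if $\widetilde{\psi}(w)=0$ then $\widetilde{V}^{w}=\widehat{V}^{w}$ and $\widetilde{C}^{w}_t=\E^{\Q}_t[(\widehat{V}^{w}_t-\widetilde{X}^{w}_{t+1}-\widehat{V}^{w}_{t+1})_+]=0$ for every $t$, which forces $\widehat{V}^{w}_t-\widetilde{X}^{w}_{t+1}-\widehat{V}^{w}_{t+1}\leq 0$ $\P$-a.s. for every $t$; this makes the probability in \eqref{eq:coercive_cond} vanish for every $t$, contradicting the hypothesis. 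Hence $\widetilde{\psi}(w)>0$ for all $w\neq 0$.

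Finally I would upgrade pointwise positivity to $\inf_{|w|=1}\widetilde{\psi}(w)>0$ using compactness of the unit sphere together with lower semicontinuity of $\widetilde{\psi}$: the maps $w\mapsto\widetilde{V}^{w}_t,\widetilde{R}^{w}_t,\widetilde{C}^{w}_t$ are finite compositions of the (continuous) conditional risk measures, conditional expectations and positive-part operations, so that $\widetilde{\psi}$ is (lower semi)continuous — continuity can be argued exactly as in Theorem \ref{thm:V0continuity} — and therefore attains its infimum over $\{|w|=1\}$ at some $w^\star$ with $\widetilde{\psi}(w^\star)>0$. I expect the main obstacle to be the third paragraph: identifying $\rho_{t,T-1}^{\circ}$ with the default-free values and keeping track of the signs in the translation-invariance step, together with the observation that $\widetilde{\psi}(w)=0$ is exactly what makes the with-default and default-free recursions coincide, so that the abstract condition \eqref{eq:coercive_cond} — phrased purely in terms of the $\rho_{t,T-1}^{\circ}$ — can be read off directly from the vanishing of the $\widetilde{C}^{w}_t$.
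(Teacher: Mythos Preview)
Your approach is essentially the paper's: assume $\widetilde{\psi}(w)=0$ for some $w\neq 0$, deduce that $\widetilde{V}^{w}_t=\widetilde{R}^{w}_t$ collapses to the pure risk-measure recursion, identify $\widetilde{R}^{w}_t-\widetilde{X}^{w}_{t+1}-\widetilde{R}^{w}_{t+1}$ with $(\rho_{t,T-1}^{\circ}-\rho_{t+1,T-1}^{\circ})(-w^{\trans}(Z_{t+1}+\dots+Z_T))$, and obtain a contradiction with \eqref{eq:coercive_cond}; then invoke Theorem \ref{thm:V0coerciveness} and continuity of $\psi$ on a compact set. Your identification of the composed risk measures with the default-free values and the sign bookkeeping are correct and match the paper's computation.

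One remark on your final paragraph: you are more careful than the paper in that you explicitly flag the passage from pointwise positivity $\widetilde{\psi}(w)>0$ to $\inf_{|w|=1}\widetilde{\psi}(w)>0$, which the paper simply asserts when invoking Theorem \ref{thm:V0coerciveness}. However, your proposed justification --- continuity of $\widetilde{\psi}$ ``exactly as in Theorem \ref{thm:V0continuity}'' --- imports the $L^1$-Lipschitz hypothesis on $\rho_t$ and the integrability assumptions of that theorem, none of which are part of the present statement (only continuity of $\psi$, not of $\widetilde{\psi}$, is assumed). So your attempt to close this gap is not self-contained under the stated hypotheses; it is, at best, no worse than the paper's own treatment of the same point.
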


\begin{remark}
The conditions of Theorem \ref{thm:solutionexists} are sufficient but not necessary for the existence of an optimal solution to \eqref{eq:sum_c_mini}. For instance, including risk-free cash flows as replication instruments would violate the condition that \eqref{eq:coercive_cond} holds for some $t$ and all nonzero $w$ without affecting either the optimal portfolio weights in the original replication instruments or the value of the liability cash flow, see Theorem \ref{thm:CandLproperties}.
\end{remark}

\section{Gaussian cash flows}\label{sec:gaussian}

This section serves one purpose: it demonstrates that if the residual cash flow and the processes generating the filtration can be represented by a (possibly multivariate) Gaussian process with respect to $\P$, and if the change of measure between $\P$ and $\Q$ is given by a standard Girsanov transformation, then \emph{everything} can be computed explicitly. It is likely that in many situations the benefits of having interpretable explicit closed-form expression outweighs the disadvantages of having to impose these rather strong assumptions.

Let $(\epsilon_t)_{t=1}^T$ be a sequence of $n$-dimensional independent random vectors that are standard normally distributed under $\P$. For, $t=1,\dots,T$ and nonrandom $A_t\in \R^{n}$, $B_{t,1},\dots,B_{t,t}\in \R^{n\times n}$, let  
\begin{align*}
G_t&:=A_t+\sum_{s=1}^t B_{t,s}\epsilon_s.
\end{align*}
Let $(\calG_t)_{t=0}^T$, with $\calG_0=\{\emptyset,\Omega\}$, be the filtration generated by the Gaussian process $(G_t)_{t=1}^T$. 
In what follows, $\E^{\P}_t$ and $\E^{\Q}_t$ mean conditional expectations with respect to $\calG_t$. 
$(G_t)_{t=1}^T$, seen as a column vector valued process, is the result of applying an affine transformation $x\mapsto A+Bx$ to $(\epsilon_t)_{t=1}^T$, where $B$ is a lower-triangular block matrix with blocks $B_{i,j}$ and determinant $\prod_{t=1}^T\text{det}(B_{t,t})$. In order to avoid unnecessary technicalities we assume that $\text{det}(B_{t,t})\neq 0$ for all $t$. This implies that the filtration generated by $(\epsilon_t)_{t=1}^T$ equals the filtration generated by $(G_t)_{t=1}^T$.

A natural interpretation of the Gaussian model is as follows: $X^o=G^{(1)}$ is the discounted liability cash flow, $G^{(2)},\dots,G^{(m+1)}$ represent discounted cash flows of replication instruments, and 
$G^{(m+2)},\dots,G^{(n)}$ other information flows.

For a nonrandom sequence $(\lambda_t)_{t=1}^T$, $\lambda_t\in \R^n$, let
\begin{align*}
D_t:=\exp\Big\{\sum_{s=1}^t\big(\lambda_s^{\trans}\epsilon_s-\frac{1}{2}\lambda_s^{\trans}\lambda_s\big)\Big\}, \quad t=1,\dots,T.
\end{align*}
We let the measure $\Q$ be defined in terms of the $(\P,\filG)$-martingale $(D_t)_{t=1}^T$:
For a $\calG_t$-measurable sufficiently integrable $Z$ and $s<t$, in accordance with Section \ref{sec:framework}, $\E^{\Q}_s[Z]=D_s^{-1}\E^{\P}_s[D_tZ]$.
This choice has several pleasant consequences: for arbitrary vectors $g_s\in\R^n$ and $u>t$,
\begin{align*}
&\E^{\Q}_t\Big[\sum_{s=1}^ug_s^{\trans}G_s\Big]-\E^{\P}_t\Big[\sum_{s=1}^ug_s^{\trans}G_s\Big]\in\calG_0,\\
&\Var^{\Q}_t\Big(\sum_{s=1}^ug_s^{\trans}G_s\Big)=\Var^{\P}_t\Big(\sum_{s=1}^ug_s^{\trans}G_s\Big)\in\calG_0,
\end{align*}
i.e.~the conditional expectations with respect to $\Q$ and $\P$ only differ by a constant and the conditional variances with respect to $\Q$ and $\P$ are equal and nonrandom. 

\begin{definition}\label{def:gaussmod}
The triple $((G_t)_{t=1}^T,(D_t)_{t=1}^T,(\calG_t)_{t=0}^T)$ is called a Gaussian model.
\end{definition}

The Gaussian model allows for explicit valuation formulas when combined with conditional monetary risk measures satisfying \eqref{def:niceriskmeas}. The following properties considerably simplify computations. For $u>t$,
\begin{align*}
\sum_{s=1}^ug_s^{\trans}G_s-\E^{\P}_t\Big[\sum_{s=1}^ug_s^{\trans}G_s\Big]
\end{align*}
is independent of $\calG_t$, and, whenever $\Var^{\P}_t\big(\sum_{s=1}^ug_s^{\trans}G_s\big)\neq 0$,
\begin{align*}
\Var^{\P}_t\Big(\sum_{s=1}^ug_s^{\trans}G_s\Big)^{-1/2}\Big(\sum_{s=1}^ug_s^{\trans}G_s-\E^{\P}_t\Big[\sum_{s=1}^ug_s^{\trans}G_s\Big]\Big)
\end{align*}
is standard normally distributed with respect to $\P$.
Since a risk measure $\rho_t$ satisfying \eqref{def:niceriskmeas} has the additional property $\rho_t(\lambda Y)=\lambda \rho_t(Y)$ if $\lambda\in\R_+$ and $Y\in L^p(\calF_{t+1},\P)$ (positive homogeneity), it follows that 
\begin{align*}
\rho_t\Big(\sum_{s=1}^{u}g_s^{\trans}G_s\Big)
=-\E^{\P}_t\Big[\sum_{s=1}^{u}g_s^{\trans}G_s\Big]
+\Var^{\P}_t\Big(\sum_{s=1}^{u}g_s^{\trans}G_s\Big)^{1/2}r_0,
\end{align*}
where
\begin{align}\label{eq:r0}
r_0:=\int_0^1\Phi^{-1}(u)dM(u).
\end{align}

We will first derive an explicit expression for the value of a general Gaussian liability cash flow, where the generality lies in that $X_t$ is allowed to be an arbitrary linear combination $g_t^{\trans}G_t$, where $g_t\in\R^n$ may be time dependent. Then we will return to the relevant special case when $g_t=g$ for all $t$ and $g^{(1)}=1$, $(g^{(k)})_{k=2}^{m+1}=v\in\R^m$ and $g^{(k)}=0$ for $k>m+1$. 

\begin{theorem}\label{thm:gauss_thm1}
Let $((G_t)_{t=1}^T,(D_t)_{t=1}^T,(\calG_t)_{t=0}^T)$ be a Gaussian model and,  for $t=1,\dots,T$, set $X_t:=g_t^{\trans}G_t$. For $t=0,\dots,T-1$, let $\rho_t$ be conditional monetary risk measures satisfying \eqref{def:niceriskmeas} for a common probability distribution $M$. Let $r_0$ be given by \eqref{eq:r0}. Then 
\begin{align*}
V_t=\sum_{s=t+1}^{T}\E_t^{\Q}[X_s]+K^{\Q}_t
=\sum_{s=t+1}^{T}\E_t^{\P}[X_s]+K^{\P}_t,
\end{align*}
where, with $e_1$ standard normally distributed with respect to $\P$,
\begin{align*}
K^{\Q}_t&=\sum_{s=t+1}^{T}\Big(\sigma_{s}r_{0}-\sum_{u=s}^Tg_u^{\trans}B_{u,s}\lambda_{s}\\
&\quad\quad\quad\quad
-\E_{0}^{\P}\Big[\Big(\sigma_{s}\big(r_{0}-e_1\big)-\sum_{u=s}^Tg_u^{\trans}B_{u,s}\lambda_{s}\Big)_+\Big]\Big),\\
K^{\P}_t&=\sum_{s=t+1}^{T}\Big(\sigma_{s}r_0
-\E_{0}^{\P}\Big[\Big(\sigma_{s}\big(r_0-e_1\big)-\sum_{u=s}^Tg_u^{\trans}B_{u,s}\lambda_{s}\Big)_+\Big]\Big),\\
\sigma_s^2&=\Var^{\P}_{s-1}\Big(\sum_{u=s}^{T}X_u\Big)-\Var^{\P}_s\Big(\sum_{u=s}^{T}X_u\Big)
=\sum_{j=s}^T\sum_{k=s}^T g_j^{\trans}B_{j,s} B_{k,s}^{\trans}g_k.
\end{align*}
Moreover, 
\begin{align*}
C_t&:=\rho_t(-X_{t+1}-V_{t+1})-V_t\\
&=\E^{\P}_0\Big[\Big(\sigma_{t+1}\big(r_{0}-e_1\big)-\sum_{u=t+1}^Tg_u^{\trans}B_{u,t+1}\lambda_{t+1}\Big)_+\Big].
\end{align*}
\end{theorem}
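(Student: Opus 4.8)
The plan is to prove all the assertions simultaneously by backward induction on $t$, running the recursion $V_t=\phi_t(X_{t+1}+V_{t+1})$ from \eqref{eq:Vt_def}, with $\phi_t,\gamma_t$ as in \eqref{eq:vy}, \eqref{eq:cy}, and feeding in the closed-form Gaussian identities collected just before Definition \ref{def:gaussmod}. Since $D_{t+1}/D_t$ is lognormal and hence lies in every $L^r$, Theorem \ref{lem:basic_lem}(ii) applies and the recursion is well-posed. The inductive hypothesis at time $t+1$ is that $V_{t+1}=\sum_{s=t+2}^{T}\E^{\Q}_{t+1}[X_s]+K^{\Q}_{t+1}$ with $K^{\Q}_{t+1}$ \emph{nonrandom} and equal to the claimed telescoping sum; the base case $V_T=0$, $K^{\Q}_T=0$ is immediate. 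Recall also that the theorem's $C_t=\rho_t(-X_{t+1}-V_{t+1})-V_t$ equals $\gamma_t(X_{t+1}+V_{t+1})$, so establishing the $\gamma_t$-computation simultaneously yields the $C_t$ formula.

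First I would isolate the random part. Writing $W:=X_{t+1}+\sum_{s=t+2}^{T}\E^{\Q}_{t+1}[X_s]$, so that $Y:=X_{t+1}+V_{t+1}=W+K^{\Q}_{t+1}$, the induction hypothesis guarantees $W$ is $\calG_{t+1}$-measurable and affine in the increments $\epsilon_1,\dots,\epsilon_{t+1}$, hence a linear functional of the $G$-process to which the displayed identity $\rho_t(\cdot)=-\E^{\P}_t[\cdot]+\Var^{\P}_t(\cdot)^{1/2}r_0$ applies. Pulling the constant $K^{\Q}_{t+1}$ out via translation invariance \eqref{eq:ti_r} gives $\rho_t(-Y)=\rho_t(-W)+K^{\Q}_{t+1}=\E^{\P}_t[W]+\sigma_{t+1}r_0+K^{\Q}_{t+1}$ with $\sigma_{t+1}^2=\Var^{\P}_t(W)$ nonrandom. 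The key bookkeeping is to identify this variance: conditioning on $\calG_t$ freezes $\epsilon_1,\dots,\epsilon_t$, so only the coefficient of the innovation $\epsilon_{t+1}$ survives, and collecting it from $X_{t+1}$ and from each $\E^{\Q}_{t+1}[X_s]$ gives $\sum_{u=t+1}^{T}g_u^{\trans}B_{u,t+1}$, whence $\sigma_{t+1}^2=\big|\sum_{u=t+1}^{T}g_u^{\trans}B_{u,t+1}\big|^2$; this matches the quadratic-form expression for $\sigma_s^2$ and, via the innovation-variance decomposition $\Var^{\P}_{s-1}(\sum_{u\ge s}X_u)-\Var^{\P}_s(\sum_{u\ge s}X_u)$, also the first displayed expression.

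Next comes the argument of the positive part and the cost term. Combining the identities above, $\rho_t(-Y)-Y=\rho_t(-W)-W=\sigma_{t+1}\big(r_0-e_1\big)$, where $e_1:=\sigma_{t+1}^{-1}(W-\E^{\P}_t[W])$ is standard normal under $\P$ and independent of $\calG_t$. To evaluate $\gamma_t(Y)=\E^{\Q}_t[(\sigma_{t+1}(r_0-e_1))_+]$ I would write the $\Q$-expectation as $\E^{\P}_t[(D_{t+1}/D_t)(\cdot)_+]$ and absorb the density $D_{t+1}/D_t=\Exp{\lambda_{t+1}^{\trans}\epsilon_{t+1}-\tfrac12\lambda_{t+1}^{\trans}\lambda_{t+1}}$ by the Girsanov shift $\epsilon_{t+1}\mapsto\epsilon_{t+1}+\lambda_{t+1}$. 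Because $W-\E^{\P}_t[W]$ is exactly the $\epsilon_{t+1}$-coefficient applied to $\epsilon_{t+1}$, the shift replaces $e_1$ by $e_1+\sigma_{t+1}^{-1}\sum_{u=t+1}^{T}g_u^{\trans}B_{u,t+1}\lambda_{t+1}$ under $\P$, producing the nonrandom value $C_t=\E^{\P}_0[(\sigma_{t+1}(r_0-e_1)-\sum_{u=t+1}^{T}g_u^{\trans}B_{u,t+1}\lambda_{t+1})_+]$, precisely the claimed expression. This change-of-measure step, together with the correct identification of the innovation coefficient, is the part that needs the most care.

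Finally I would assemble $V_t=\rho_t(-Y)-\gamma_t(Y)$. The tower property gives $\E^{\Q}_t[W]=\sum_{s=t+1}^{T}\E^{\Q}_t[X_s]$, while the Gaussian fact $\E^{\Q}_t[\cdot]-\E^{\P}_t[\cdot]\in\calG_0$ yields $\E^{\P}_t[W]=\sum_{s=t+1}^{T}\E^{\Q}_t[X_s]-\sum_{u=t+1}^{T}g_u^{\trans}B_{u,t+1}\lambda_{t+1}$, the nonrandom gap being again the $\epsilon_{t+1}$-mean shift. Substituting, the random part of $V_t$ is exactly $\sum_{s=t+1}^{T}\E^{\Q}_t[X_s]$ and the nonrandom remainder is $K^{\Q}_t=K^{\Q}_{t+1}+\sigma_{t+1}r_0-\sum_{u=t+1}^{T}g_u^{\trans}B_{u,t+1}\lambda_{t+1}-C_t$, which telescopes to the stated $K^{\Q}_t$ and closes the induction. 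The $\P$-form then follows by writing $\sum_{s=t+1}^{T}\E^{\Q}_t[X_s]=\sum_{s=t+1}^{T}\E^{\P}_t[X_s]+\sum_{s=t+1}^{T}g_s^{\trans}\sum_{j=t+1}^{s}B_{s,j}\lambda_j$ and swapping the order of summation, converting $K^{\Q}_t$ into $K^{\P}_t$ exactly as claimed.
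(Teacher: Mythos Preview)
Your proposal is correct and follows essentially the same backward-induction route as the paper: set $Y=X_{t+1}+V_{t+1}$ with the induction hypothesis $V_{t+1}=\E^{\Q}_{t+1}[\sum_{s\geq t+2}X_s]+K^{\Q}_{t+1}$, compute $\rho_t(-Y)$ via the Gaussian identity, evaluate the positive-part expectation, and read off the recursion for $K^{\Q}_t$. The only cosmetic difference is that the paper evaluates $\E^{\Q}_t[(\rho_t(-Y)-Y)_+]$ by invoking a $\Q$-standard-normal innovation $e^*_{t+1}$ directly, whereas you pass through $\E^{\P}_t[(D_{t+1}/D_t)(\cdot)_+]$ and perform the Girsanov shift explicitly; these are the same computation, and your identification of the shift $\sum_{u=t+1}^{T}g_u^{\trans}B_{u,t+1}\lambda_{t+1}$ as the $\epsilon_{t+1}$-coefficient applied to $\lambda_{t+1}$ is exactly the content of the paper's Lemma \ref{lem:gauss_lem2}. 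Both arguments tacitly assume $\sigma_{t+1}\neq 0$ when writing $e_1$; the degenerate case $\sigma_{t+1}=0$ forces the innovation coefficient (and hence the shift) to vanish, so $C_t=0$ and the formulas hold trivially.
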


\begin{remark}
Notice that 
\begin{align*}
C_t&=\E^{\Q}_t\Big[\Big(\rho_t(-X_{t+1}-V_{t+1})-X_{t+1}-V_{t+1}\Big)_+\Big]\\
&=\frac{1}{1+\eta_t}\E^{\P}_t\Big[\Big(\rho_t(-X_{t+1}-V_{t+1})-X_{t+1}-V_{t+1}\Big)_+\Big],
\end{align*}
where, given the setting in Theorem \ref{thm:gauss_thm1},
\begin{align*}
\frac{1}{1+\eta_t}=\frac{\E^{\P}_0\Big[\Big(\sigma_{t+1}\big(r_{0}-e_1\big)-\sum_{u=t+1}^Tg_u^{\trans}B_{u,t+1}\lambda_{t+1}\Big)_+\Big]}{\E^{\P}_0\Big[\Big(\sigma_{t+1}\big(r_{0}-e_1\big)\Big)_+\Big]}.
\end{align*} 
In particular, $\eta_t\geq 0$ for every $t$ if $\sum_{u=t+1}^Tg_u^{\trans}B_{u,t+1}\lambda_{t+1}\geq 0$ for every $t$.
Since
\begin{align*}
\sum_{u=t+1}^T\E^{\Q}_t[X_u]-\sum_{u=t+1}^T\E^{\P}_t[X_u]
&=\sum_{u=t+1}^T\sum_{s=t+1}^u g_u^{\trans}B_{u,s}\lambda_s\\
&=\sum_{s=t+1}^T\sum_{u=s}^T g_u^{\trans}B_{u,s}\lambda_s
\end{align*}
we see that $\eta_t\geq 0$ for every $t$ holds if $\sum_{u=t+1}^T\E^{\Q}_t[X_u]\geq\sum_{u=t+1}^T\E^{\P}_t[X_u]$ for every $t$.
This is completely in line with the final statement in Remark \ref{rem:coc_valuation}.
\end{remark}

The following result presents the value of the liability cash flow when the replicating portfolio is a static portfolio with portfolio weights solving \eqref{eq:sum_c_mini}. 

\begin{theorem}\label{thm:gauss_thm2}
Let $((G_t)_{t=1}^T,(D_t)_{t=1}^T,(\calG_t)_{t=0}^T)$ be a Gaussian model.
Let $X^o=G^{(1)}$ be the discounted liability cash flow, let $X^{f,1}:=G^{(2)},\dots,X^{f,m}:=G^{(m+1)}$ represent discounted cash flows of replication instruments, and let $G^{(m+2)},\dots,G^{(n)}$ represent arbitrary information flows.
For $t=0,\dots,T-1$, let $\rho_t$ be conditional monetary risk measures satisfying \eqref{def:niceriskmeas} for a common probability distribution $M$. Let $r_0$ be given by \eqref{eq:r0}. 
Then there exists an optimal solution to \eqref{eq:sum_c_mini} and the value of the liability is given by 
\begin{align*}
L_0=\sum_{t=1}^{T}\E_0^{\Q}[X^{o}_t]+\widehat{K}^{\Q}_0,
\end{align*}
where, with $S_t:=\sum_{u=t}^TB_{u,t}$ and $e_1$ standard normally distributed with respect to $\P$,
\begin{align*}
\widehat{K}^{\Q}_0&=\sum_{t=1}^{T}\Big(\widehat{\sigma}_{t}r_{0}-\widehat{g}^{\trans}S_t\lambda_{t}
-\E_0^{\P}\Big[\Big(\widehat{\sigma}_{t}\big(r_{0}-e_1\big)-\widehat{g}^{\trans}S_t\lambda_{t}\Big)_+\Big]\Big),\\
\widehat{\sigma}_t^2&=\widehat{g}^{\trans}S_tS_t^{\trans}\widehat{g},
\end{align*}
where $\widehat{g}$ is the minimizer in $\big\{g\in\R^n:g_1=1,g_k=0\text{ for } k>m+1\big\}$ of
\begin{align*}
g\mapsto \sum_{t=0}^{T-1}\E^{\P}_0\Big[\Big(\big(g^{\trans}S_{t+1}S_{t+1}^{\trans}g\big)^{1/2}\big(r_0-e_1\big)-g^{\trans}S_{t+1}\lambda_{t+1}\Big)_+\Big].
\end{align*}
\end{theorem}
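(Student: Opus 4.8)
The plan is to reduce the whole problem to the explicit Gaussian formulas of Theorem \ref{thm:gauss_thm1} and then to treat \eqref{eq:sum_c_mini} as a finite-dimensional optimization over the portfolio weights. For $v\in\R^m$ the residual cash flow is $X^v_t=X^o_t-v^{\trans}X^f_t=g^{\trans}G_t$ with the fixed vector $g:=(1,-v^{\trans},0,\dots,0)^{\trans}$, so Theorem \ref{thm:gauss_thm1} applies with constant coefficient $g_t\equiv g$. Specializing its formulas gives $\sigma_s^2=g^{\trans}S_sS_s^{\trans}g$ with $S_s=\sum_{u=s}^TB_{u,s}$, and $\sum_{u=s}^Tg^{\trans}B_{u,s}\lambda_s=g^{\trans}S_s\lambda_s$. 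This yields the claimed form of $\widehat{K}^{\Q}_0$ and, crucially, that
\[
C^v_t=\E^{\P}_0\big[\big((g^{\trans}S_{t+1}S_{t+1}^{\trans}g)^{1/2}(r_0-e_1)-g^{\trans}S_{t+1}\lambda_{t+1}\big)_+\big]
\]
is \emph{nonrandom}. Hence $\max_tC^v_t$ is nonrandom and the objective collapses to the real-valued map $v\mapsto\psi(v)=\max_tC^v_t$ (and the sum in the statement is the analogous real-valued map).

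Existence of a minimizer I would obtain from continuity plus coercivity. Continuity of $\psi$ follows from Theorem \ref{thm:V0continuity}, since the measures in \eqref{def:niceriskmeas} are positively homogeneous and $L^1$-Lipschitz. For coercivity I would apply Theorem \ref{thm:V0coerciveness}, for which it suffices to verify $\inf_{|w|=1}\widetilde{\psi}(w)>0$. Writing $\widetilde{C}^w_t=\E^{\P}_0[((w^{\trans}S_{t+1}S_{t+1}^{\trans}w)^{1/2}(r_0-e_1)-w^{\trans}S_{t+1}\lambda_{t+1})_+]$, this term can vanish only if $(w^{\trans}S_{t+1}S_{t+1}^{\trans}w)^{1/2}=0$, because $r_0-e_1$ is unbounded above whenever that coefficient is strictly positive. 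Demanding $S_s^{\trans}w=0$ for all $s$ forces, at $s=T$, $B_{T,T}^{\trans}w=0$; since $\det B_{T,T}\neq0$ under the standing Gaussian assumptions, $w=0$. Thus $\widetilde{\psi}(w)>0$ for every unit $w$, and continuity together with compactness of the unit sphere gives $\inf_{|w|=1}\widetilde{\psi}(w)>0$. Theorem \ref{thm:V0coerciveness} then yields $\lim_{|v|\to\infty}\psi(v)=\infty$, and a continuous coercive function on $\R^m$ attains its infimum, producing the optimal $\widehat{v}$ and the associated $\widehat{g}=(1,-\widehat{v}^{\trans},0,\dots,0)^{\trans}$.

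With $\widehat{g}$ fixed, the value formula is a direct substitution. By Definition \ref{Lt-definition}, $L_0=\E^{\Q}_0[\sum_{t=1}^TX^r_t]+V_0$ with $X^r=\widehat{v}^{\trans}X^f$ and $V_0=V_0(X^{\widehat{v}})$; applying Theorem \ref{thm:gauss_thm1} to $X^{\widehat{v}}_t=\widehat{g}^{\trans}G_t$ gives $V_0=\sum_{t=1}^T\E^{\Q}_0[X^{\widehat{v}}_t]+\widehat{K}^{\Q}_0$ with $\widehat{\sigma}_t^2=\widehat{g}^{\trans}S_tS_t^{\trans}\widehat{g}$. Since the market prices $\E^{\Q}_0[\widehat{v}^{\trans}X^f_t]$ cancel the $-\widehat{v}^{\trans}X^f$ part of $\E^{\Q}_0[X^{\widehat{v}}_t]$, one is left with $L_0=\sum_{t=1}^T\E^{\Q}_0[X^o_t]+\widehat{K}^{\Q}_0$ exactly as stated.

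The main obstacle is the coercivity estimate inside the existence step: one must exclude directions along which $\psi$ stays bounded as $|v|\to\infty$. The resolution above rests entirely on the non-degeneracy $\det B_{t,t}\neq0$, which prevents any nontrivial replicating combination from having identically vanishing conditional variance (i.e.~from being a risk-free cash flow). I would also reconcile the $\max$ in \eqref{eq:sum_c_mini} with the sum characterizing $\widehat{g}$ in the statement: since every $C^v_t$ is deterministic, both aggregators define continuous coercive functions of $v$ governed by the same coercivity mechanism, and the substitution via Theorem \ref{thm:gauss_thm1} produces the stated $\widehat{K}^{\Q}_0$ for the chosen $\widehat{g}$ regardless of which aggregator defines it.
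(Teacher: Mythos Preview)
Your overall strategy matches the paper's: reduce everything to the explicit Gaussian formulas of Theorem~\ref{thm:gauss_thm1}, then obtain existence of an optimal replicating portfolio from continuity plus coercivity of the objective, and finally substitute. The value formula derivation in your last paragraph is exactly right.

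There is one genuine slip. You invoke Theorem~\ref{thm:V0continuity} for continuity of $\psi$, asserting that the risk measures in \eqref{def:niceriskmeas} are $L^1$-Lipschitz. This is false in general: the class \eqref{def:niceriskmeas} includes $\VaR_{t,u}$ (point mass $M$), which is not $L^1$-Lipschitz. The paper sidesteps this by reading continuity of $v\mapsto C^v_t$ directly off the explicit formula in Theorem~\ref{thm:gauss_thm1}, which you already computed; since each $C^v_t$ is the $\P$-expectation of a Lipschitz function of the deterministic parameters $(g^{\trans}S_{t+1}S_{t+1}^{\trans}g)^{1/2}$ and $g^{\trans}S_{t+1}\lambda_{t+1}$, continuity of $\psi$ is immediate. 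You should replace the appeal to Theorem~\ref{thm:V0continuity} by this observation.

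On the existence step there is a small route difference. The paper applies Theorem~\ref{thm:solutionexists} and verifies condition~\eqref{eq:coercive_cond}; it shows that failure of \eqref{eq:coercive_cond} for all $t$ forces $w^{\trans}(Z_1+\dots+Z_T)\in\calG_0$, and then rules this out via $\det B_{T,T}\neq 0$. You instead apply Theorem~\ref{thm:V0coerciveness} directly and show $\inf_{|w|=1}\widetilde\psi(w)>0$ by the same non-degeneracy. Your argument is a shade more direct in the Gaussian setting and is correct, but watch the dimensions: $w\in\R^{m+1}$ while $B_{T,T}\in\R^{n\times n}$, so ``$B_{T,T}^{\trans}w=0$'' should read $B_{T,T}^{\trans}\tilde g=0$ for the embedding $\tilde g\in\R^n$ with $\tilde g_1=w_1$, $\tilde g_k=-w_k$ for $2\le k\le m+1$, $\tilde g_k=0$ otherwise; invertibility then gives $\tilde g=0$ and hence $w=0$.
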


\section{Proofs}\label{sec:proofs}

\begin{proof}[Proof of Theorem \ref{Vt-def-thm}]
We first prove that $C_t=\E_t^{\Q}[(R_t-X_{t+1}-V_{t+1})_+]$ for $t\in\{0,\dots,T-1\}$ from which \eqref{Ct-expression2} follows.
\begin{align*}
C_t &= \esssup_{\tau\in\calS_{t+1,T+1}}\E_t^{\Q}\Big[\sum_{s=t+1}^{\tau-1}(R_{s-1} - R_s -X_s)\Big]\\
&=\esssup_{\tau\in\calS_{t+1,T+1}}\E_{t}^{\Q}\Big[\calI\{\tau>t+1\} \Big((R_{t} - R_{t+1} -X_{t+1}) + \sum_{s=t+2}^{\tau-1}(R_{s-1} - R_s -X_s)\Big)\Big]\\
&=\esssup_{\tau\in\calS_{t+1,T+1}}\E_{t}^{\Q}\Big[\calI\{\tau>t+1\}\Big((R_{t} - R_{t+1} -X_{t+1})\\
&\quad\quad\quad\quad\quad\quad\quad\quad\quad\quad\quad\quad\quad
+\E_{t+1}^{\Q}\Big[\sum_{s=t+2}^{\tau-1}(R_{s-1}-R_s-X_s)\Big]\Big)\Big]\\
&=\esssup_{A\in\calF_{t+1}}\E_{t}^{\Q}\Big[\calI\{A\}\Big((R_{t} - R_{t+1} -X_{t+1})\\
&\quad\quad\quad\quad\quad\quad\quad\quad\quad
+\esssup_{\tau\in\calS_{t+2,T+1}}\E_{t+1}^{\Q}\Big[\sum_{s=t+2}^{\tau-1}(R_{s-1} - R_s -X_s)\Big]\Big)\Big]\\
&=\esssup_{A\in\calF_{t+1}}\E_{t}^{\Q}\left[\calI\{A\}(R_{t} - R_{t+1} -X_{t+1}+ C_{t+1})\right]\\
&=\esssup_{A\in\calF_{t+1}}\E_{t}^{\Q}\left[\calI\{A\}(R_t-X_{t+1}-V_{t+1})\right]\\
&= \E_{t}^{\Q}\left[(R_{t} - X_{t+1} -V_{t+1})_+\right],
\end{align*}
where we used the relation $V_{t+1}=R_{t+1}-C_{t+1}$ and that the $\esssup$ in the second to last expression above was attained by choosing $A=\{R_{t}-X_{t+1}-V_{t+1} \geq 0\}$. Notice that \eqref{Vt-expression} now follows immediately from the relation $V_{t}=R_{t}-C_{t}$. Moreover, the sequence of stopping times $(\widehat{\tau}_t)_{t=0}^{T}$ given by 
\begin{align*}
\widehat{\tau}_t&:=(t+1)\calI\{R_{t}-X_{t+1}-V_{t+1}<0\}+\widehat{\tau}_{t+1}\calI\{R_{t}-X_{t+1}-V_{t+1}\geq 0\},\\ 
\widehat{\tau}_T&:=T+1
\end{align*}
is optimal. Since $(\widehat{\tau}_t)_{t=0}^{T-1}=(\tau^*_t)_{t=0}^{T-1}$ the proof of statements $(i)$ and $(ii)$ is complete.

We now show statement (iii). Let the sequences $(C_t)_{t=0}^T$ and $(V_t)_{t=0}^T$ be given by \eqref{Ct-expression2} and \eqref{Vt-expression2} and let the sequences $(\widetilde{C}_t)_{t=0}^T$ and $(\widetilde{V}_t)_{t=0}^T$ be given by \eqref{Ct-expression} and \eqref{Vt-expression}. From statement (i), we then know that $(\widetilde{C}_t)_{t=0}^T$ and $(\widetilde{V}_t)_{t=0}^T$ also satisfy \eqref{Ct-expression2} and \eqref{Vt-expression2}. Hence $(\widetilde{C}_t)_{t=0}^T=(C_t)_{t=0}^T$ and $(\widetilde{V}_t)_{t=0}^T=(V_t)_{t=0}^T$ and thus statement (iii) follows.
\end{proof} 

\begin{proof}[Proof of Theorem \ref{lem:basic_lem}]
We prove the more involved statement (ii). Statement (i) is proved with the same arguments. 
\begin{align*}
\E^{\P}\Big[\E^{\Q}_t\Big[(\rho_t(-Y)-Y)_+\Big]^p\Big]
&=\E^{\P}\Big[\E^{\P}_t\Big[\frac{D_{t+1}}{D_t}(\rho_t(-Y)-Y)_+\Big]^p\Big]\\
&\leq \E^{\P}\Big[\E^{\P}_t\Big[\Big(\frac{D_{t+1}}{D_t}\Big)^p(\rho_t(-Y)-Y)_+^p\Big]\Big]\\
&=\E^{\P}\Big[\Big(\frac{D_{t+1}}{D_t}\Big)^p(\rho_t(-Y)-Y)_+^p\Big], 
\end{align*}
where the inequality is due to Jensen's inequality for conditional expectations. Moreover, for every $r>1$, by H\"older's inequality,
\begin{align*}
\E^{\P}\Big[\Big(\frac{D_{t+1}}{D_t}\Big)^p(\rho_t(-Y)-Y)_+^p\Big]
&\leq \E^{\P}\Big[\Big(\frac{D_{t+1}}{D_t}\Big)^{pr}\Big]^{\frac{1}{r}}\E^{\P}\Big[(\rho_t(-Y)-Y)_+^{p\frac{r}{r-1}}\Big]^{\frac{r-1}{r}}.
\end{align*}
For $r>1$ sufficiently large, it follows from the assumptions that the two expectations exist finitely.
Finally, it follows from Minkowski's inequality that 
\begin{align*}
&\E^{\P}\Big[\Big(\rho_t(-Y)-\E^{\Q}_t\Big[(\rho_t(-Y)-Y)_+\Big]\Big)^p\Big]^{\frac{1}{p}}\\
&\quad \leq\E^{\P}\Big[|\rho_t(-Y)|^p\Big]^{\frac{1}{p}}
+\E^{\P}\Big[\E^{\Q}_t\Big[(\rho_t(-Y)-Y)_+\Big]^p\Big]^{\frac{1}{p}}.
\end{align*}
The finiteness of the first terms follows from the assumptions and the finiteness of the second term has been proven above. This proves that the mapping is well-defined.

The remaining part of statement (ii) follows, upon minor modifications, from Proposition 1 in \cite{Engsner-Lindholm-Lindskog-17}. 
\end{proof}

\begin{proof}[Proof of Theorem \ref{thm:alt_def}] 
By Theorem \ref{Vt-def-thm} (iii) it is sufficient to show that, for any $t$, $X_t,\widetilde{R}_t\in L^1(\calF_t,\Q)$. By Theorem \ref{lem:basic_lem}, there exists $\epsilon\geq 0$ such that $p-\epsilon>1$ and $\widetilde{V}_{t+1}\in L^{p-\epsilon}(\calF_{t+1},\P)$. Moreover, 
$X_{t+1}\in L^{p}(\calF_{t+1},\P)\subset L^{p-\epsilon}(\calF_{t+1},\P)$.
By Theorem \ref{thm:VaRandES} now follows that 
\begin{align*}
\widetilde{R}_t:=\rho_t(-X_{t+1}-\widetilde{V}_{t+1})\in L^{p-\epsilon}(\calF_{t+1},\P).
\end{align*}
By H\"older's inequality,
\begin{align*}
\E^{\Q}_0[|\widetilde{R}_t|]=\E^{\P}_0[D_t|\widetilde{R}_t|]
\leq \E^{\P}_0\Big[D_t^r\Big]^{\frac{1}{r}}\E^{\P}_0\Big[|\widetilde{R}_t|^{\frac{r}{r-1}}\Big]^{\frac{r-1}{r}},
\end{align*}
where $r$ may be chosen sufficiently large for both factors to exist finitely.
The completely analogous argument for showing $\E^{\Q}_0[|X_t|]<\infty$ is omitted.
\end{proof}

\begin{proof}[Proof of Theorem \ref{thm:CandLproperties}] 
(i) The statement follows immediately from the properties 
\eqref{eq:ti_r}
and, from \eqref{eq:VtWs_rep},  
$V_t(\widetilde{X})=V_t(X)-\sum_{s=t+1}^Tb_s$.
(ii) 
Notice that, for all $t\in\{1,\dots,T\}$, due to \eqref{eq:ti_w} and \eqref{eq:ph_w},
\begin{align*}
\sum_{s=1}^tX_s+V_t &= \sum_{s=1}^tX_s+\phi_t \circ \dots \circ \phi_{T-1}\Big(\sum_{s=t+1}^TX_s\Big)\\
&=\phi_t \circ \dots \circ \phi_{T-1}\Big(\sum_{s=1}^TX_s\Big)\\
&=K.
\end{align*}
Hence, for all $t\in\{1,\dots,T\}$, $X_t+V_t=K-\sum_{s=1}^{t-1}X_s$ is $\calF_{t-1}$-measurable.
This in turn, using \eqref{eq:ti_r}, implies that, for all $t$, 
\begin{align*}
C_t:=\E^{\Q}_t\big[(\rho_t(-X_{t+1}-V_{t+1})-X_{t+1}-V_{t+1})_+\big]=0.
\end{align*}
(iii) 
By assumption, for $t\in\{0,\dots,T-1\}$,
\begin{align*}
\E^{\Q}_t[(\rho_t(-X_{t+1}-V_{t+1})-X_{t+1}-V_{t+1})_+]=0.
\end{align*}
Hence, with $\phi^{\circ}_{u,v}:=\phi_u\circ \dots \circ \phi_v$, for $t\in\{0,\dots,T-1\}$,
\begin{align*}
\E^{\Q}_t\Big[\Big(\rho_t\Big(-\phi^{\circ}_{t+1,T-1}\Big(\sum_{s=1}^TX_s\Big)\Big)-\phi^{\circ}_{t+1,T-1}\Big(\sum_{s=1}^TX_s\Big)\Big)_+\Big]=0,
\end{align*}
where
\begin{align*}
\phi^{\circ}_{T,T-1}\Big(\sum_{s=1}^TX_s\Big)=\sum_{s=1}^TX_s.
\end{align*}
Hence, for $t\in\{0,\dots,T-1\}$,
\begin{align}\label{eq:nr1_CandLprop}
\P_t\Big(\phi^{\circ}_{t+1,T-1}\Big(\sum_{s=1}^TX_s\Big)\geq \rho_t\Big(-\phi^{\circ}_{t+1,T-1}\Big(\sum_{s=1}^TX_s\Big)\Big)\Big)=1.
\end{align}
If $\rho_t$ has property \eqref{eq:niceriskmeas}, then \eqref{eq:nr1_CandLprop} implies that $\phi^{\circ}_{t+1,T-1}(\sum_{s=1}^TX_s)$ is $\calF_t$-measurable. Hence, for $t\in\{0,\dots,T-1\}$,
\begin{align*}
\phi^{\circ}_{t+1,T-1}\Big(\sum_{s=1}^TX_s\Big)=\phi^{\circ}_{t,T-1}\Big(\sum_{s=1}^TX_s\Big).
\end{align*}
In particular, 
\begin{align*}
\sum_{s=1}^TX_s=\phi^{\circ}_{T,T-1}\Big(\sum_{s=1}^TX_s\Big)=\phi^{\circ}_{0,T-1}\Big(\sum_{s=1}^TX_s\Big)=V_0.
\end{align*}
\end{proof}

\begin{proof}[Proof of Corollary \ref{cor:Ct_zero}]
Consider the representation 
\begin{align*}
\ES_{t,p}(-Y)=\frac{1}{p}\int_{1-p}^1 F_{t,Y}^{-1}(u)du.
\end{align*}
By Theorem \ref{thm:CandLproperties}, it is sufficient to verify the property \eqref{eq:niceriskmeas} for $\ES_{t,p}$.

If $\P_t(Y\geq F_{t,Y}^{-1}(1-p))<1$, then 
\begin{align*}
1=\P_t(Y\geq \ES_{t,p}(-Y))\leq \P_t(Y\geq F_{t,Y}^{-1}(1-p))<1
\end{align*}
which is a contradiction. Hence, $\P_t(Y\geq F_{t,Y}^{-1}(1-p))=1$ and consequently 
$F_{t,Y}^{-1}(q)=F_{t,Y}^{-1}(1-p)$ for all $q \leq 1-p$.

If $F_{t,Y}^{-1}(q)>F_{t,Y}^{-1}(1-p)$ for some $q>1-p$, then $\ES_{t,p}(-Y)>F_{t,Y}^{-1}(1-p)$ and
\begin{align*}
1&=\P_t(Y\geq \ES_{t,p}(-Y))\\
&=\P_t(Y\geq \ES_{t,p}(-Y),Y>F_{t,Y}^{-1}(1-p))\\
&\quad+\P_t(Y\geq \ES_{t,p}(-Y),Y=F_{t,Y}^{-1}(1-p))\\
&=\P_t(Y\geq \ES_{t,p}(-Y),Y>F_{t,Y}^{-1}(1-p))\\
&\leq 1-F_{t,Y}(F_{t,Y}^{-1}(1-p))\\
&\leq p\\
&<1
\end{align*}
which is a contradiction. Hence, $F_{t,Y}^{-1}(q)=F_{t,Y}^{-1}(1-p)$ for all $q$ which implies that $Y$ is $\calF_t$-measurable.
\end{proof}

\begin{proof}[Proof of Theorem \ref{thm:V0continuity}]
For $w\in \R^{m+1}$ and $t\in \{0,\dots,T-1\}$, define 
\begin{align*}
V_t^{w}:=\phi_t\circ \dots \circ \phi_{T-1}\Big(\sum_{s=t+1}^T w^{\trans}Z_s\Big).
\end{align*}
We prove the statement inductively. Assume that for some nonnegative $B_{t+2} \in L^1(\calF_{t+2},\P)$, 
\begin{align*}
|V_{t+1}^w-V_{t+1}^v| \leq ||v-w||_1 \E^\P_{t+1}[B_{t+2}],
\end{align*}
where $||\cdot ||_p$ denotes the Euclidean $p$-norm in $\R^{m+1}$.
We start by showing the induction step, noting that verifying the induction base is trivial since $V_T^w=0$. 

Defining $Y^w_{t+1}:=w^\trans Z_{t+1}+V_{t+1}^w$ and applying H\"older's inequality, 
\begin{align*}
|Y_{t+1}^w-Y_{t+1}^v| &\leq |V_{t+1}^w-V_{t+1}^v| +|w^\trans Z_{t+1}-v^\trans Z_{t+1}|\\
&\leq  ||v-w||_1 \E^\P_{t+1}[B_{t+2}] + |w^\trans Z_{t+1}-v^\trans Z_{t+1}|\\
&\leq  ||v-w||_1 \E^\P_{t+1}[||Z_{t+1}||_{\infty}+B_{t+2}]
\end{align*}
Now, due to the $L^1$-Lipschitz continuity of $\rho_t$, 
\begin{align*}
|\rho_t(-Y_{t+1}^w)-\rho_t(-Y_{t+1}^v)| 
&\leq K \E_t^\P[ |Y_{t+1}^w-Y_{t+1}^v| ] \\
&\leq K ||v-w||_1 \E^\P_{t}[||Z_{t+1}||_{\infty}+B_{t+2}]
\end{align*}
With $C_t^w:=\E^\Q_{t}[(\rho_t(-Y_{t+1}^w)-Y_{t+1}^w)_+]$, due to subadditivity of $x\mapsto x_+:=\max(x,0)$,
\begin{align*}
C_t^w-C_t^v &= \E^\Q_{t}[(\rho_t(-Y_{t+1}^w)-Y_{t+1}^w)_+ -(\rho_t(-Y_{t+1}^v)-Y_{t+1}^v)_+ ]\\
&\leq \E^\Q_{t}[(\rho_t(-Y_{t+1}^w)-Y_{t+1}^w -\rho_t(-Y_{t+1}^v)+Y_{t+1}^v)_+ ]\\
&\leq \E^\Q_{t}[|\rho_t(-Y_{t+1}^w)-Y_{t+1}^w -\rho_t(-Y_{t+1}^v)+Y_{t+1}^v|],\\
C_t^w-C_t^v &\geq \E^\Q_{t}[-(\rho_t(-Y_{t+1}^v)-Y_{t+1}^v -\rho_t(-Y_{t+1}^w)+Y_{t+1}^w)_+ ]\\
& \geq -\E^\Q_{t}[|\rho_t(-Y_{t+1}^w)-Y_{t+1}^w -\rho_t(-Y_{t+1}^v)+Y_{t+1}^v| ]
\end{align*}
from which it follows that
\begin{align*}
|C_t^w-C_t^v| &\leq \E^\Q_{t}[|\rho_t(-Y_{t+1}^w)-Y_{t+1}^w -\rho_t(-Y_{t+1}^v)+Y_{t+1}^v |] \\
&\leq |\rho_t(-Y_{t+1}^w)-\rho_t(-Y_{t+1}^v) | +\E^\Q_{t}[|Y_{t+1}^w -Y_{t+1}^v |].
\end{align*}
Moreover,
\begin{align*}
\E^\Q_{t}[|Y_{t+1}^w -Y_{t+1}^v |] 
&\leq \E^\Q_{t}\Big[ ||v-w||_1 \E^\P_{t+1}\Big[||Z_{t+1}||_{\infty}+B_{t+2}\Big]\Big] \\
&= ||v-w||_1 \E^\P_{t}\Big[\frac{D_{t+1}}{D_t}\E^\P_{t+1}\Big[||Z_{t+1}||_{\infty}+B_{t+2}\Big]\Big].
\end{align*}
Hence, 
\begin{align*}
|V_t^w-V_t^v|&\leq |\rho_t(-Y_{t+1}^w)-\rho_t(-Y_{t+1}^v)|+|C_t^w-C_t^v| \\
&\leq 2K  ||v-w||_1 \E^\P_{t}[||Z_{t+1}||_{\infty}+B_{t+2}]\\
&\quad+ ||v-w||_1 \E^\P_{t}\Big[\frac{D_{t+1}}{D_t}\E^\P_{t+1}\Big[||Z_{t+1}||_{\infty}+B_{t+2}\Big]\Big]\\
&=  ||v-w||_1 \E^\P_{t}\Big[\E^\P_{t+1}\Big[||Z_{t+1}||_{\infty}+B_{t+2}\Big]\Big(2K+\frac{D_{t+1}}{D_t}\Big)\Big]\\
&=  ||v-w||_1  \E^\P_{t}\Big[B_{t+1}\Big],
\end{align*}
where $B_{T+1}=0$ and otherwise
\begin{align*}
B_{t+1}:=\E^\P_{t+1}\Big[||Z_{t+1}||_{\infty}+B_{t+2}\Big]\Big(2K+\frac{D_{t+1}}{D_t}\Big).
\end{align*}
In particular, 
\begin{align*}
|V_0^w-V_0^v|&\leq  ||v-w||_1 \E^\P_0[B_{1}],
\end{align*}
Now what remains is to  show that $ \E^\P_0[B_{1}]< \infty$. For the Euclidean norms, the inequality $||x||_p\leq ||x||_1$ holds for $p \in [1,\infty]$. In particular, for each $t=1,\dots,T$, $0 \leq B_t \leq \widetilde{B}_t$, where 
\begin{align*}
\widetilde{B}_{t+1}:=\E^\P_{t+1}\Big[||Z_{t+1}||_1+\widetilde{B}_{t+2}\Big]\Big(2K+\frac{D_{t+1}}{D_t}\Big), \quad \widetilde{B}_{T+1}=0.
\end{align*}
Recall that for $t=1,\dots,T$, $Z_{t}^{(k)}\in L^{p_t}(\calF_t,\P)$ for all $k$ and some $p_t>1$.
Also notice that if $\widetilde{B}_{t+2} \in  L^{q_{t+2}}(\calF_{t+2},\P)$ for $q_{t+2}>1$, then  $\E^\P_{t+1}[\widetilde{B}_{t+2}] \in  L^{q_{t+2}}(\calF_{t+1},\P)$ and, for $r_{t+1} = \min(p_{t+1}, q_{t+2})$,  
\begin{align*}
\E^\P_{t+1}\Big[||Z_{t+1}||_1+\widetilde{B}_{t+2}\Big] \in L^{r_{t+1}}(\calF_{t+1}).
\end{align*}
Hence, for any $\epsilon >0$, 
\begin{align*}
\widetilde{B}_{t+1}=\E^\P_{t+1}\Big[||Z_{t+1}||_1+\widetilde{B}_{t+2}\Big]\Big(2K+\frac{D_{t+1}}{D_t}\Big) \in L^{r_{t+1}-\epsilon}(\calF_{t+1}).
\end{align*}
Since $\widetilde{B}_{T+1}=0$ we may choose $\epsilon>0$ small enough so that $\widetilde{B}_{t} \in L^1(\calF_t,\P)$ for $t=1,\dots,T$. Hence, also $B_{t} \in L^1(\calF_t,\P)$ for $t=1,\dots,T$.

Finally, notice that 
\begin{align*}
X^{v}_t:=X^o_t-v^{\trans}X^f_t=w^{\trans}Z_t
\end{align*}
if $w\in\R^{m+1}$ is chosen so that $w_1=1$ and $(w_k)_{k=2}^{m+1}=v$.
Therefore, we have also shown that $v\mapsto V_0(X^{v})$ is Lipschitz continuous.
\end{proof}

\begin{proof}[Proof of Theorem \ref{thm:V0coerciveness}]
From positive homogeneity of the $\rho_t$s follows positive homogeneity of the $\phi_t$s which implies 
$\widetilde{V}^{w}_t(\lambda \widetilde{X}^{w})=\lambda \widetilde{V}^{w}_t(\widetilde{X}^{w})$ and further that $\widetilde{\psi}(\lambda w)=\lambda \widetilde{\psi}(w)$. In particular,
\begin{align*}
\widetilde{\psi}(w)=|w|\widetilde{\psi}(w/|w|)\geq |w|\inf_{|w|=1}\widetilde{\psi}(w)
\end{align*}
from which $\lim_{|w|\to\infty}\widetilde{\psi}(w)=\infty$ follows from the assumption $\inf_{|w|=1}\widetilde{\psi}(w)>0$.
For the second statement, notice that 
\begin{align*}
X^{v}_t:=X^o_t-v^{\trans}X^f_t=w^{\trans}Z_t
\end{align*}
if $w\in\R^{m+1}$ is chosen so that $w_1=1$ and $(w_k)_{k=2}^{m+1}=v$.
Therefore, $\lim_{|w|\to\infty}\widetilde{\psi}(w)=\infty$ implies $\lim_{|v|\to\infty}\psi(v)=\infty$.
\end{proof}

\begin{proof}[Proof of Theorem \ref{thm:solutionexists}]
Take $w\in\R^{m+1}\setminus\{0\}$.
Suppose that $\widetilde{C}^{w}_t=0$ $\Q$-a.s.~for all $t$. Then $\widetilde{V}^{w}_t=\widetilde{R}^{w}_t$ for all $t$ and $\widetilde{C}^{w}_t=0$ is equivalent to $\widetilde{R}^{w}_t-\widetilde{X}^{w}_{t+1}-\widetilde{R}^{w}_{t+1}\leq 0$ $\Q$-a.s. which is equivalent to $\widetilde{R}^{w}_t-\widetilde{X}^{w}_{t+1}-\widetilde{R}^{w}_{t+1}\leq 0$ $\P$-a.s. since $\P$ and $\Q$ are equivalent. 
Notice that 
\begin{align*}
\widetilde{R}^{w}_t&=\rho_t(-\widetilde{X}^{w}_{t+1}-\widetilde{R}^{w}_{t+1})\\
&=\rho_t(-\widetilde{X}^{w}_{t+1}-\rho_{t+1}(-\widetilde{X}^{w}_{t+2}-\widetilde{R}^{w}_{t+2}))\\
&=\rho_t\circ (-\rho_{t+1})\circ \dots \circ (-\rho_{T-1})\Big(-\sum_{s=t+1}^T\widetilde{X}^{w}_{s}\Big)
\end{align*}
The inequality $\widetilde{R}^{w}_t-\widetilde{X}^{w}_{t+1}-\widetilde{R}^{w}_{t+1}\leq 0$ $\P$-a.s. can thus be expressed as
\begin{align*}
\big(\rho_{t,T-1}^{\circ}-\rho_{t+1,T-1}^{\circ}\big) \big(-w^{\trans}(Z_{t+1}+\dots+Z_T)\big)\leq 0\quad \P\text{-a.s.}
\end{align*} 
However, this is contradicting the assumption in the statement of the theorem. Therefore we conclude that $\widetilde{C}^{w}_t>0$ $\Q$-a.s.~for some $t$ which implies that $\widetilde{\psi}(w)>0$.  Therefore, by Theorem \ref{thm:V0coerciveness}, $\psi$ is coercive so if a minimum exists it exists in some compact set in $\R^m$. However, a continuous function on a compact set attains its infimum. 
\end{proof}

\begin{lemma}\label{lem:gauss_lem1}
For $u<v$, $\E^{\Q}_u[G_v]=\E^{\P}_u[G_v]+\sum_{s=u+1}^vB_{v,s}\lambda_s$.
\end{lemma}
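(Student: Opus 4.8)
The plan is to reduce the statement to a computation of the conditional $\Q$-mean of the individual noise vectors $\epsilon_s$ and then exploit the affine structure of $G_v$. Writing $G_v=A_v+\sum_{s=1}^v B_{v,s}\epsilon_s$ with $A_v,B_{v,s}$ nonrandom, and recalling that $\filG$ is generated by $(\epsilon_t)_{t=1}^T$ so that $\epsilon_s$ is $\calG_u$-measurable for $s\le u$, conditional linearity yields
\begin{align*}
\E^{\Q}_u[G_v]=A_v+\sum_{s=1}^u B_{v,s}\epsilon_s+\sum_{s=u+1}^v B_{v,s}\,\E^{\Q}_u[\epsilon_s].
\end{align*}
The analogous computation under $\P$, where $\E^{\P}_u[\epsilon_s]=0$ for $s>u$ by independence of the $\epsilon_s$, gives $\E^{\P}_u[G_v]=A_v+\sum_{s=1}^u B_{v,s}\epsilon_s$. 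Hence the whole statement reduces to establishing $\E^{\Q}_u[\epsilon_s]=\lambda_s$ for every $s>u$.

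To evaluate $\E^{\Q}_u[\epsilon_s]$ for $s>u$, I would apply the change-of-measure identity of Section \ref{sec:framework}, namely $\E^{\Q}_u[\epsilon_s]=D_u^{-1}\E^{\P}_u[D_s\epsilon_s]$ (valid since $\epsilon_s$ is $\calG_s$-measurable). Factoring $D_s=D_u\prod_{r=u+1}^s\exp(\lambda_r^{\trans}\epsilon_r-\frac{1}{2}\lambda_r^{\trans}\lambda_r)$ and using that $\epsilon_{u+1},\dots,\epsilon_s$ are independent of $\calG_u$ and mutually independent under $\P$, the conditional expectation factorizes into a product of unconditional ones. Every factor with index $r<s$ equals $\E^{\P}[\exp(\lambda_r^{\trans}\epsilon_r-\frac{1}{2}\lambda_r^{\trans}\lambda_r)]=1$ by the standard normal moment generating function, leaving the single factor $\E^{\P}[\epsilon_s\exp(\lambda_s^{\trans}\epsilon_s-\frac{1}{2}\lambda_s^{\trans}\lambda_s)]$.

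This last Gaussian integral is the only genuine computation and hence the crux, though it is entirely routine. Recognizing $\exp(\lambda_s^{\trans}\epsilon_s-\frac{1}{2}\lambda_s^{\trans}\lambda_s)$ as the likelihood ratio of an $N(\lambda_s,I)$ law against the standard normal law, the integral equals the mean of that tilted distribution, namely $\lambda_s$; alternatively one completes the square or invokes Stein's identity. Substituting $\E^{\Q}_u[\epsilon_s]=\lambda_s$ into the two displays above and subtracting yields $\E^{\Q}_u[G_v]-\E^{\P}_u[G_v]=\sum_{s=u+1}^v B_{v,s}\lambda_s$, as required. Equivalently, one may simply cite the Cameron--Martin/Girsanov fact that under $\Q$ the vectors $(\epsilon_s)$ remain independent with $\epsilon_s\sim N(\lambda_s,I)$, which makes the shift of the conditional mean immediate and bypasses the explicit integral.
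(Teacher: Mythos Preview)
Your proof is correct and follows essentially the same route as the paper: both decompose $G_v$ via its affine representation, pull out the $\calG_u$-measurable terms, and reduce the problem to computing $\E^{\P}\big[\epsilon_s\exp\{\lambda_s^{\trans}\epsilon_s-\tfrac{1}{2}\lambda_s^{\trans}\lambda_s\}\big]=\lambda_s$ after using independence to strip away the remaining density factors. The only cosmetic difference is that the paper writes the Radon--Nikodym factor as $D_v/D_u$ rather than $D_s/D_u$, but this is immaterial since the extra factors integrate to one.
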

\begin{proof}
\begin{align*}
\E^{\Q}_u[G_v]&=A_v+\sum_{s=1}^uB_{v,s}\epsilon_s+\sum_{s=u+1}^vB_{v,s}\E^{\P}_u\Big[\frac{D_v}{D_u}\epsilon_s\Big]\\
&=A_v+\sum_{s=1}^uB_{v,s}\epsilon_s+\sum_{s=u+1}^v B_{v,s}\E^{\P}_0\Big[\exp\Big\{\lambda_s^{\trans}\epsilon_1-\frac{1}{2}\lambda_s^{\trans}\lambda_s\Big\}\epsilon_1\Big]\\
&=A_v+\sum_{s=1}^uB_{v,s}\epsilon_s+\sum_{s=u+1}^v B_{v,s}\lambda_s\\
&=\E^{\P}_u[G_v]+\sum_{s=u+1}^vB_{v,s}\lambda_s.
\end{align*}
\end{proof}

\begin{lemma}\label{lem:gauss_lem2}
If $X_s:=g_s^{\trans}G_s$, then 
\begin{align*}
\E^{\P}_t\Big[\E^{\Q}_{t+1}\Big[\sum_{s=t+1}^TX_s\Big]\Big]
=\E^{\Q}_t\Big[\sum_{s=t+1}^TX_s\Big]-\sum_{s=t+1}^Tg_s^{\trans}B_{s,t+1}\lambda_{t+1}.
\end{align*}
\end{lemma}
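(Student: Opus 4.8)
The plan is to reduce everything to Lemma \ref{lem:gauss_lem1}, which isolates the only place where the measure change enters: conditioning under $\Q$ at time $u$ merely adds the deterministic drift correction $\sum_{r=u+1}^vB_{v,r}\lambda_r$ to the $\P$-conditional mean of $G_v$. The asserted identity just compares performing this correction once at time $t$ (the right-hand side) with performing it at time $t+1$ and then stepping back to time $t$ under $\P$ (the left-hand side). Since both routes leave the same $\P$-conditional mean $\E^{\P}_t[G_s]$ (by the tower property) and differ only in the range of the deterministic drift sums, the discrepancy is exactly the single term indexed by $r=t+1$.

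First I would expand the inner expectation. Since $X_s=g_s^{\trans}G_s$, linearity and Lemma \ref{lem:gauss_lem1} (with $u=t+1$, $v=s$) give $\E^{\Q}_{t+1}[G_s]=\E^{\P}_{t+1}[G_s]+\sum_{r=t+2}^sB_{s,r}\lambda_r$ for $s\geq t+2$, while the $s=t+1$ term is $\calG_{t+1}$-measurable and hence equals $\E^{\P}_{t+1}[G_{t+1}]$. Applying $\E^{\P}_t$, the tower property collapses $\E^{\P}_t\E^{\P}_{t+1}[G_s]=\E^{\P}_t[G_s]$, and the drift terms are nonrandom and so pass through unchanged, yielding
\begin{align*}
\E^{\P}_t\Big[\E^{\Q}_{t+1}\Big[\sum_{s=t+1}^TX_s\Big]\Big]
=\sum_{s=t+1}^Tg_s^{\trans}\E^{\P}_t[G_s]+\sum_{s=t+2}^Tg_s^{\trans}\sum_{r=t+2}^sB_{s,r}\lambda_r.
\end{align*}

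Next I would expand the right-hand side directly. Lemma \ref{lem:gauss_lem1} with $u=t$, $v=s$ gives $\E^{\Q}_t[G_s]=\E^{\P}_t[G_s]+\sum_{r=t+1}^sB_{s,r}\lambda_r$, hence
\begin{align*}
\E^{\Q}_t\Big[\sum_{s=t+1}^TX_s\Big]
=\sum_{s=t+1}^Tg_s^{\trans}\E^{\P}_t[G_s]+\sum_{s=t+1}^Tg_s^{\trans}\sum_{r=t+1}^sB_{s,r}\lambda_r.
\end{align*}
Isolating the $r=t+1$ contribution of the inner sum gives exactly $\sum_{s=t+1}^Tg_s^{\trans}B_{s,t+1}\lambda_{t+1}$, the term subtracted in the statement, and the residual double sum is $\sum_{s=t+2}^Tg_s^{\trans}\sum_{r=t+2}^sB_{s,r}\lambda_r$ (the $s=t+1$ summand being empty). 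Thus the right-hand side of the identity coincides with the expression already obtained for the left-hand side, and the proof is complete.

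I expect no genuine obstacle here: the entire conceptual content sits in Lemma \ref{lem:gauss_lem1}, and the measure change contributes only deterministic shifts. The sole point demanding care is the index bookkeeping, namely correctly peeling off the $r=t+1$ drift term and noting that the corresponding $s=t+1$ summand of the residual double sum vanishes, so that the two sides match term by term.
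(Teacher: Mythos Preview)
Your proof is correct and takes essentially the same approach as the paper: both apply Lemma~\ref{lem:gauss_lem1} twice (once at time $t+1$ for the inner expectation, once at time $t$ for the right-hand side) and then compare the resulting deterministic drift sums, the difference being exactly the $r=t+1$ term. The only cosmetic difference is that the paper expresses the left-hand side as $\E^{\P}_t[\sum X_s]$ plus drift and then converts $\E^{\P}_t[\sum X_s]$ back to $\E^{\Q}_t[\sum X_s]$, whereas you expand both sides against the common anchor $\sum g_s^{\trans}\E^{\P}_t[G_s]$ and match directly; the content is identical.
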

\begin{proof}
For $s\geq t+1$, with an empty sum defined as $0$, it follows from Lemma \ref{lem:gauss_lem1} that
\begin{align*}
\E^{\Q}_{t+1}[X_s]&=\E^{\P}_{t+1}[X_s]+g_s^{\trans}\sum_{u=t+2}^s B_{s,u}\lambda_u,\\
\E^{\P}_t\Big[\E^{\Q}_{t+1}\Big[\sum_{s=t+1}^T X_s\Big]\Big]
&=\sum_{s=t+1}^T\Big(\E^{\P}_t[\E^{\P}_{t+1}[X_s]]+g_s^{\trans}\sum_{u=t+2}^s B_{s,u}\lambda_u\Big)\\
&=\E^{\P}_t\Big[\sum_{s=t+1}^T X_s\Big]+\sum_{s=t+1}^Tg_s^{\trans}\sum_{u=t+2}^s B_{s,u}\lambda_u,\\
\E^{\P}_t\Big[\sum_{s=t+1}^T X_s\Big]&=\E^{\Q}_t\Big[\sum_{s=t+1}^T X_s\Big]
-\sum_{s=t+1}^Tg_s^{\trans}\sum_{u=t+1}^s B_{s,u}\lambda_u.
\end{align*}
\end{proof}

\begin{proof}[Proof of Theorem \ref{thm:gauss_thm1}]
We will prove inductively that 
\begin{align}\label{eq:induction}
V_t=\E^{\Q}_t\Big[\sum_{s=t+1}^{T}X_s\Big] + K^{\Q}_t, 
\end{align}
and derive the recursive form of the constant term $K^{\Q}_t$ via induction. The induction base is trivial: $V_T=0$. Now assume that \eqref{eq:induction} holds for $t+1$.
Notice that 
\begin{align*}
V_t&=\phi_t\Big(X_{t+1}+\E_{t+1}^{\Q}\Big[\sum_{s=t+2}^{T}X_s\Big]+K^{\Q}_{t+1}\Big)\\
&=\phi_t\Big(\E_{t+1}^{\Q}\Big[\sum_{s=t+1}^{T}X_s\Big]+K^{\Q}_{t+1}\Big)\\
&=K^{\Q}_{t+1}+\rho_t\Big(-\E_{t+1}^{\Q}\Big[\sum_{s=t+1}^{T}X_s\Big]\Big)\\
&\quad-\E_{t}^{\Q}\Big[\Big(\rho_t\Big(-\E_{t+1}^{\Q}\Big[\sum_{s=t+1}^{T}X_s\Big]\Big)-\E_{t+1}^{\Q}\Big[\sum_{s=t+1}^{T}X_s\Big]\Big)_+\Big]
\end{align*}
We first evaluate the risk measure part.
\begin{align*}
&\rho_t\Big(-\E_{t+1}^{\Q}\Big[\sum_{s=t+1}^{T}X_s\Big] \Big)\\
&\quad=\E_{t}^{\P}\Big[\E_{t+1}^{\Q}\Big[\sum_{s=t+1}^{T}X_s\Big]\Big] +\Var_t^{\P}\Big(\E_{t+1}^{\Q}\Big[\sum_{s=t+1}^{T}X_s\Big]\Big)^{1/2}r_0\\
&\quad=\E^{\Q}_t\Big[\sum_{s=t+1}^TX_s\Big]-\sum_{s=t+1}^Tg_s^{\trans}B_{s,t+1}\lambda_{t+1}
+\Var_t^{\P}\Big(\E_{t+1}^{\Q}\Big[\sum_{s=t+1}^{T}X_s\Big]\Big)^{1/2}r_0,
\end{align*}
where in the final step we used Lemma \ref{lem:gauss_lem2}.
Moreover,
\begin{align*}
\Var_t^{\P}\Big(\E_{t+1}^{\Q}\Big[\sum_{s=t+1}^{T}X_s\Big]\Big)
&=\Var_t^{\Q}\Big(\E_{t+1}^{\Q}\Big[\sum_{s=t+1}^{T}X_s\Big]\Big)\\
&=\Var_t^{\Q}\Big(\sum_{s=1}^{T}X_s\Big)-\Var_{t+1}^{\Q}\Big(\sum_{s=1}^{T}X_s\Big)\\
&=:\sigma_{t+1}^2.
\end{align*}
The remaining term: if $\sigma_{t+1}\neq 0$, then there exists a random variable $e_{t+1}^*$ that is independent of $\calG_t$ and standard normally distributed with respect to $\Q$ such that
\begin{align*}
&\E_t^{\Q}\Big[\Big(\rho_t\Big(-\E_{t+1}^{\Q}\Big[\sum_{s=t+1}^{T}X_s\Big]-K^{\Q}_{t+1}\Big)
-\E_{t+1}^{\Q}\Big[\sum_{s=t+1}^{T}X_s\Big]-K^{\Q}_{t+1}\Big)_+\Big]\\
&\quad=\E_t^{\Q}\Big[\Big(\sigma_{t+1}r_0
-\sum_{s=t+1}^Tg_s^{\trans}B_{s,t+1}\lambda_{t+1}-\sigma_{t+1}e_{t+1}^*\Big)_+\Big]\\
&\quad=\E_0^{\P}\Big[\Big(\sigma_{t+1}r_0
-\sum_{s=t+1}^Tg_s^{\trans}B_{s,t+1}\lambda_{t+1}-\sigma_{t+1}e_1\Big)_+\Big].
\end{align*}
Putting the pieces together now yields
\begin{align*}
V_t&=\E_{t}^{\Q}\Big[\sum_{s=t+1}^{T}X_s\Big]+K^{\Q}_{t+1}
+\sigma_{t+1}r_0-\sum_{s=t+1}^Tg_s^{\trans}B_{s,t+1}\lambda_{t+1}\\
&\quad
-\E_0^{\P}\Big[\Big(\sigma_{t+1}(r_0-e_1)-\sum_{s=t+1}^Tg_s^{\trans}B_{s,t+1}\lambda_{t+1}\Big)_+\Big]
\end{align*}
which proves the induction step and from which it follows that 
\begin{align*}
K^{\Q}_t&=\sum_{s=t+1}^{T}\Big(\sigma_{s}r_0-\sum_{u=s}^Tg_u^{\trans}B_{u,s}\lambda_{s}
-\E_{0}^{\P}\Big[\Big(\sigma_{s}(r_0-e_1)-\sum_{u=s}^Tg_u^{\trans}B_{u,s}\lambda_{s}\Big)_+\Big]\Big).
\end{align*}
Finally,
\begin{align*}
V_t&=\E_{t}^{\Q}\Big[\sum_{s=t+1}^{T}X_s\Big]+K^{\Q}_t\\
&=\E_{t}^{\P}\Big[\sum_{s=t+1}^{T}X_s\Big]+\sum_{s=t+1}^T\sum_{u=t+1}^{s}g_s^{\trans}B_{s,u}\lambda_u+K^{\Q}_t\\
&=\E_{t}^{\P}\Big[\sum_{s=t+1}^{T}X_s\Big]+K^{\P}_t,
\end{align*}
where
\begin{align*}
K^{\P}_t&=\sum_{s=t+1}^{T}\Big(\sigma_{s}r_0
-\E_{0}^{\P}\Big[\Big(\sigma_{s}(r_0-e_1)-\sum_{u=s}^Tg_u^{\trans}B_{u,s}\lambda_{s}\Big)_+\Big]\Big).
\end{align*}
We now derive an expression for $\sigma_{t+1}$.
Recall that $X_s:=g_s^{\trans}G_s$.
\begin{align*}
\Var^{\P}_t\Big(\sum_{s=t+1}^Tg_s^{\trans}G_s\Big)
&=\Var^{\P}_t\Big(\sum_{s=t+1}^T \sum_{u=t+1}^s g_s^{\trans}B_{s,u} \epsilon_u\Big)\\
&=\Var^{\P}_t \Big(\sum_{u=t+1}^T \sum_{s=u}^T g_s^{\trans}B_{s,u} \epsilon_u\Big)\\
&=\sum_{u=t+1}^T \Var^{\P}_t\Big(\sum_{s=u}^T g_s^{\trans}B_{s,u} \epsilon_u\Big)\\
&=\sum_{u=t+1}^T \Big(\sum_{s=u}^T g_s^{\trans}B_{s,u} \Big) \Big( \sum_{s=u}^T g_s^{\trans}B_{s,u}\Big) ^{\trans}\\  
&= \sum_{u=t+1}^T \sum_{j=u}^T\sum_{k=u}^T g_j^{\trans}B_{j,u} B_{k,u} ^{\trans}g_k
\end{align*}
and
\begin{align*}
\sigma_{t+1}^2 &:=\Var^{\P}_t\Big(\sum_{s=t+1}^Tg_s^{\trans}G_s\Big)
-\Var^{\P}_{t+1}\Big(\sum_{s=t+1}^T g_s^{\trans}G_s\Big)\\
&=\sum_{j=t+1}^T\sum_{k=t+1}^T g_j^{\trans}B_{j,t+1} B_{k,t+1} ^{\trans}g_k
\end{align*}
We now derive the expression for $C_t$. Using the same arguments as earlier in the proof,
\begin{align*}
C_t&=\rho_t(-X_{t+1}-V_{t+1})-V_t\\
&=\rho_t\Big(-\E^{\Q}_{t+1}\Big[\sum_{s=t+1}^TX_s\Big]-K^{\Q}_{t+1}\Big)-\E^{\Q}_{t}\Big[\sum_{s=t+1}^TX_s\Big]-K^{\Q}_{t}\\
&=\sigma_{t+1}r_0-\sum_{s=t+1}^Tg_s^{\trans}B_{s,t+1}\lambda_{t+1}-K^{\Q}_{t}+K^{\Q}_{t+1}\\
&=\E^{\P}_0\Big[\Big(\sigma_{t+1}(r_0-e_1)-\sum_{s=t+1}^Tg_s^{\trans}B_{s,t+1}\lambda_{t+1}\Big)_+\Big].
\end{align*}
\end{proof}

\begin{proof}[Proof of Theorem \ref{thm:gauss_thm2}]
We will prove that there exists an optimal solution to \eqref{eq:sum_c_mini}.
The remaining part then follows from Theorem \ref{thm:gauss_thm1}.

From Theorem \ref{thm:gauss_thm1} we immediately see that $\psi(w)$ is continuous. 
Once we show that for all $w\in\R^{m+1}\setminus\{0\}$ there exists $t\in\{0,\dots,T-1\}$ such that \eqref{eq:coercive_cond} holds, existence of an optimal solution to \eqref{eq:sum_c_mini} follows.
We prove this statement by first proving that there is no $w\in\R^{m+1}\setminus \{0\}$ such that $\sum_{t=1}^T w^\trans Z_t \in \calG_0$, where $Z_t:=(X^o_{t},-(X^f_{t})^\trans)^\trans$.
Notice that, for $g\in\R^n$, 
\begin{align*}
g^{\trans}\sum_{t=1}^TG_t=g^{\trans}\sum_{t=1}^TA_t+g^{\trans}\sum_{s=1}^{T-1}\sum_{t=s}^TB_{t,s}\epsilon_s+g^{\trans}B_{T,T}\epsilon_T.
\end{align*}
The $\epsilon_s$ are independent and $g^{\trans}B_{T,T}\neq 0$ for all $g\neq 0$. Hence, there is no $g\in\R^n\setminus\{0\}$ such that $g^{\trans}\sum_{t=1}^TG_t\in\calG_0$ which in turn implies that here is no $w\in\R^{m+1}\setminus \{0\}$ such that $\sum_{t=1}^T w^\trans Z_t \in \calG_0$. We now prove that the latter statement implies that for all $w\in\R^{m+1}\setminus\{0\}$ there exists $t\in\{0,\dots,T-1\}$ such that \eqref{eq:coercive_cond} holds.

Notice that
\begin{align*}
&\big(\rho_{t,T-1}^{\circ}-\rho_{t+1,T-1}^{\circ}\big)\big(-w^{\trans}(Z_{t+1}+\dots+Z_T)\big)\\
&\quad=\big(\rho_{t,T-1}^{\circ}-\rho_{t+1,T-1}^{\circ}\big)\big(-w^{\trans}(Z_1+\dots+Z_T)\big)\\
&\quad=\E^\P_t[w^{\trans}(Z_{1}+\dots+Z_T)] - \E^\P_{t+1}[w^{\trans}(Z_{1}+\dots+Z_T)]+c
\end{align*}
for some constant $c$, where the last equality follows from calculations completely analogous to the proof of Theorem \ref{thm:gauss_thm1}.
Now assume that for some $w \in \R^{m+1}\setminus\{0\}$, \eqref{eq:coercive_cond} does not hold. In the current Gaussian setting, the support of a Gaussian distribution is either infinite or a singleton, this implies that  
\begin{align*}
\big(\rho_{t,T-1}^{\circ}-\rho_{t+1,T-1}^{\circ}\big) \big(-w^{\trans}(Z_1+\dots+Z_T)\big)=0\quad \P\text{-a.s. for all } t 
\end{align*}
or, equivalently, that
\begin{align}\label{eq:constcondexp}
\E^\P_t[w^{\trans}(Z_{1}+\dots+Z_T)] - \E^\P_{t+1}[w^{\trans}(Z_{1}+\dots+Z_T)]\in\calG_0
\quad \text{for all } t.
\end{align} 
For $t=0$, \eqref{eq:constcondexp} implies that $\E^\P_{1}[w^{\trans}(Z_{1}+\dots+Z_T)]\in\calG_0$
which together with \eqref{eq:constcondexp} for $t=1$ implies that $\E^\P_{2}[w^{\trans}(Z_{1}+\dots+Z_T)]\in\calG_0$. By repeating this argument we have shown that 
\begin{align*}
&w^{\trans}(Z_{1}+\dots+Z_T)= \E^\P_{T}[w^{\trans}(Z_{1}+\dots+Z_T)] \in \calG_0
\end{align*}
which contradicts the assumption $w^{\trans}(Z_{1}+\dots+Z_T)\notin \calG_0$.
Hence, we conclude that there exists an optimal solution to \eqref{eq:sum_c_mini}.
The remaining part follows immediately from Theorem \ref{thm:gauss_thm1}. 
\end{proof}

\end{document}